\newtheorem{theorem}{Theorem}[section]
\newtheorem{lemma}[theorem]{Lemma}
\newcommand{\opt}[1]{\textsc{Opt$\left(#1\right)$}}
\newcommand{\rev}[1]{\textsc{Rev$\left(#1\right)$}}
\newcommand{\dskl}[1]{D_{SKL}\left(#1\right)}
\newcommand{\treps}[1]{t^{\min}_{\epsilon}(#1)}
\newcommand{\trv}[1]{t^{\max}_{\bar{\bm{v}}}(#1)}
\newcommand{\trvi}[1]{t^{\max}_{\bar{v}_i}(#1)}
\newcommand{\E}[1]{\textsc{\textbf{E}$\left[#1\right]$}}
\newcommand{\df}{\text{d$\,$}}
\newcommand{\bn}{n}
\newcommand{\vvi}{\varphi_i}
\newcommand{\ep}{\epsilon}
\newcommand{\shaded}{d_{f}}
\newcommand{\shades}{s_{f}}
\newcommand{\defeq}{:=}
\renewcommand{\vec}[1]{\bm{#1}}
\newcommand{\tcell}[1]{\begin{tabular}[c]{@{}c@{}} #1 \end{tabular}}
\title{Targeting Makes Sample Efficiency in Auction Design
\thanks{To appear in The Twenty-Second ACM Conference on Economics and Computation (EC'21), 2021}
}
\author{
    Yihang Hu
    \thanks{IIIS, Tsinghua University. Email: huyx17@mails.tsinghua.edu.cn}
    \and
    Zhiyi Huang
    \thanks{The University of Hong Kong. Email: zhiyi@cs.hku.hk. This work is supported in part by Research Grants Council of Hong Kong, Grant HKU17203717E.}
    \and
    Yiheng Shen
    \thanks{IIIS, Tsinghua University. Email: shen-yh17@mails.tsinghua.edu.cn}
    \and
    Xiangning Wang
    \thanks{The University of Hong Kong. Email: xnwang@cs.hku.hk}
}
\begin{document}

\begin{titlepage}
    \thispagestyle{empty}

    \maketitle 

    \begin{abstract}
        This paper introduces the targeted sampling model in optimal auction design.
In this model, the seller may specify a quantile interval and sample from a buyer's prior restricted to the interval.
This can be interpreted as allowing the seller to, for example, examine the top $40\%$ bids from previous buyers with the same characteristics.
The targeting power is quantified with a parameter $\Delta \in [0, 1]$ which lower bounds how small the quantile intervals could be.
When $\Delta = 1$, it degenerates to Cole and Roughgarden's model of i.i.d.\ samples;  
when it is the idealized case of $\Delta = 0$, it degenerates to the model studied by \citet{chen2018brief}. 
For instance, for $n$ buyers with bounded values in $[0, 1]$, $\tilde{O}(\epsilon^{-1})$ targeted samples suffice while it is known that at least $\tilde{\Omega}(n \epsilon^{-2})$ i.i.d.\ samples are needed.
In other words, targeted sampling with sufficient targeting power allows us to remove the linear dependence in $n$, and to improve the quadratic dependence in $\epsilon^{-1}$ to linear.
In this work, we introduce new technical ingredients and show that the number of targeted samples sufficient for learning an $\epsilon$-optimal auction is substantially smaller than the sample complexity of i.i.d.\ samples for the full spectrum of $\Delta \in [0, 1)$.
Even with only mild targeting power, i.e., whenever $\Delta = o(1)$, our targeted sample complexity upper bounds are strictly smaller than the optimal sample complexity of i.i.d.\ samples.

\end{abstract}
\end{titlepage}

\section{Introduction}
\label{sec:intro}

Suppose that Alice is going to auction a bottle of collectible wine to Bob, software engineer from San Francisco, Charlie, businessman from New York, and Dave, real-estate broker from Chicago.
How should Alice design her auction to maximize revenue?

This is a special case of multi-buyer single-item auctions, when the number of buyers is $n = 3$.
\citet{Mye81} characterized the optimal auction decades ago under the assumption that the buyers' values for the item are drawn independently (but not necessarily identically) from some prior $\vec{D} = D_1 \times D_2 \times \dots \times D_n$.
However, the solution by Myerson requires fine-grained knowledge about the distributions, including their cumulative distribution function (cdf) and probability density function (pdf).
Such information is rarely available in practice.



Although precise knowledge of the prior is unavailable, one may hope that an approximate estimation would suffice for designing near optimal auctions.
For instance, Alice may have access to past bids by software engineers from San Francisco, businessmen from New York, and real-estate brokers from Chicago for the same wine and the same year, so that she could estimate the value distributions of Bob, Charlie, and Dave.
Indeed, \citet{OstrovskyS-EC-2011} successfully applied auction theory to internet advertising auctions using an approach in this spirit.
Moreover, \citet{CR14} viewed the data as independent and identically distributed (i.i.d.) samples and investigated the sample complexity:
\emph{How many i.i.d.\ samples per buyer are sufficient and necessary for learning an auction that is optimal up to $\epsilon$?}
Culminating in a series of studies, Guo et al.~\cite{GHZ19} recently pinned down the optimal sample complexity up to logarithmic factors.
For instance, if the values are bounded between $0$ and $1$, the optimal sample complexity is $\tilde{\Theta} \big( n \epsilon^{-2} \big)$.

We see two drawbacks in the above model of i.i.d.\ samples.
On the one hand, there may not be any single comprehensive dataset that could provide data equivalent to i.i.d.\ samples.
Instead, one may need to aggregate information from multiple datasets, one for senior software engineers, one for junior software engineers, etc.

On the other hand, the lower bound instance by \citet{GHZ19} suggests that i.i.d.\ samples are wasteful in the sense that most samples are irrelevant for learning a near optimal auction.
In a nutshell, the lower bound considers $n$ buyers with two types of distributions such that:
(1) a buyer gets the item in the optimal auction only if its value is in the top $\frac{1}{n}$ portion, and if its prior is of the first type;
(2) the two types of distributions were identical except in the top $\frac{1}{n}$ portion; and
(3) even conditioned on being in the top $\frac{1}{n}$, the distributions are similar, requiring at least $\epsilon^{-2}$ samples to distinguish apart.
The first condition forces the seller to correctly distinguish the two types of buyers.
The second means that only $1$ in $n$ samples are useful; the third states that we need $\epsilon^{-2}$ useful samples.

Therefore, this paper considers aggregating samples from conditional distributions of the priors, which correspond to the samples from different datasets, or targeted sampling queries supported by some dataset.
In particular, we focus on a benign special case in which the seller may sample conditioned on any quantile interval.
\emph{Can we use fewer samples by targeting the top part of the distributions?}
In other words, would it be more efficient for Alice to look at a data set of past bids by, say, top $40\%$ of software engineers from San Francisco, etc.?%

Although the data markets may not widely support the above targeted sampling queries at the moment, our results demonstrate their improved sample efficiency and therefore, suggest that it may be beneficial to support them in the future.
Further, targeted or conditional samples have been studied in other problems in theoretical computer science, notably in property testing (e.g., \citet{canonne2015testing, chakraborty2016power}).
Finally, we leave as a future direction how to aggregate samples from an arbitrary collection of conditional distributions of the priors.

\subsection{Our Contributions}

\subsubsection*{Targeted Sampling Model}
We initiate the study of targeted sampling in optimal auction theory.
Concretely, we propose a model where the seller may access the prior of each buyer via a targeted sampling oracle, which takes a quantile interval as input and returns a sample value within the interval.
For example consider the lower bound instance by Guo et al.~\cite{GHZ19}.
Targeted sampling allows us to directly collect samples conditioned on being in the top $\frac{1}{n}$ portion and thus, reducing the number of samples by a factor $n$.

We further quantify the targeting power with a parameter $0 < \Delta < 1$ such that the targeting quantile intervals must have length at least $\Delta$.
The exclusion of arbitrarily small intervals is motivated by potential privacy regulation, and the observation that the data holders in information markets may not be able to answer arbitrarily fine-grained queries in general.

We consider the \emph{targeted sample complexity} as a function of the number of buyers $n$, the approximation parameter $\epsilon$, the targeting power parameter $\Delta$, and other parameters of the distribution family under consideration: 
\emph{How many targeted samples per buyer are sufficient and necessary for learning an auction that is optimal up to $\epsilon$?}

In the extreme case when $\Delta = 0$, 
the seller may query a quantile interval that is a single point to ask about a buyer's value at a specific quantile $q$ in its prior.
We call it the \emph{targeted query} model and define \emph{targeted query complexity} accordingly. 
Targeted queries are stronger than targeted samples since the former can simulate the latter by first sampling a quantile $q$ from the given interval and then querying the value at $q$.
The main body of the paper will first focus on the idealized model of targeted queries to illustrate our techniques in Section~\ref{sec:multi}. 
It is a stepping stone towards the general model when the sampling intervals are not arbitrarily small, which is more realistic in our opinion. 
Then in Section~\ref{app:large-delta} we extends the techniques to obtain targeted sample complexity that degrades gracefully as the targeting power weakens.

On the other extreme when $\Delta = 1$, the model has no targeting power and degenerates to the Cole-Roughgarden model of i.i.d.\ samples.

\subsubsection*{Multiple Buyers}
We demonstrate the power of targeted sampling by showing that the targeted sample/query complexity is substantially smaller than the sample complexity in the Cole-Roughgarden model.
Again consider the case when values are bounded in $[0, 1]$ as a running example.
Our method shows that the targeted query complexity is at most $\tilde{O} \big( \epsilon^{-1} \big)$.
Comparing with the optimal sample complexity $\tilde{O}\big(n \epsilon^{-2}\big)$ in the Cole-Roughgarden model, the targeted query complexity removes the linear dependence in the number of buyers $n$ and improves the dependence in $\epsilon^{-1}$ from quadratic to linear.
This bound generalizes to targeted sampling for sufficiently small $\Delta = O \big( \frac{\epsilon}{n} \big)$;
for larger $\Delta$, the targeted sample complexity is at most $\tilde{O} \big( n\Delta \epsilon^{-2} \big)$.
This is smaller than the sample complexity in Cole-Roughgarden whenever $\Delta = o(1)$.
In other words, \emph{targeted sampling with even mild targeting power is strictly stronger than i.i.d.\ sampling.}
See Table~\ref{table:multi} for a comprehensive comparison of targeted versus i.i.d.\ samples for various families of distributions in the literature.

The targeted sample/query complexity bounds build on three technical ingredients.
Recall that in the lower bound example by Guo et al.~\cite{GHZ19} only the top $\frac{1}{n}$ portion of samples per buyer are useful.
However, this is not true in general. 
All samples of a buyer might be useful, for example, if it is the only buyer who could have a large value and the other buyers are irrelevant.
Nonetheless, i.i.d.\ samples are still wasteful in a different way:
the samples of only $1$ in $n$ buyers is useful.
Our first ingredient (Lemma~\ref{lem:theta}) proves that in general only $\tilde{O}\big(1\big)$ in $n$ i.i.d.\ samples matter \emph{on average}.

\paragraph{Very Few (Top) Samples Matter.}
There exist thresholds $\theta_i$, $1 \le i \le n$, such that:
\begin{enumerate}[label={(\arabic*)}, topsep=3pt, partopsep=0pt, itemsep=0pt, parsep=3pt]
    \item $\sum_{i=1}^n \theta_i = \tilde{O}(1)$, and
    \item almost all of the optimal revenue comes from the top $\theta_i$ portion of buyer $i$'s prior, $1 \le i \le n$.
\end{enumerate}

\medskip
\begin{table}[h]
\centering
\begin{tabular}{|c|c|c|c|}
\hline
\tcell{Family of\\ Distributions} &
\tcell{Sample Complexity\\ (Guo et al.~\cite{GHZ19})} &
\tcell{Targeted Query\\ Complexity\\ (\citet{chen2018brief})} &
\tcell{Targeted Sample\\ Complexity\tablefootnote{As discussed above, a sufficiently small $\Delta$ will lead to a degeneration to targeted query complexity. For simplicity we assume that $\Delta$ is not a sufficiently small one.}} \\
\hline
$[0, 1]$-bounded &
$\tilde{\Theta}(n\epsilon^{-2})$ &
$\tilde{O}(\epsilon^{-1})$ &
$\tilde{O}(\max\{ \epsilon^{-1}, n\Delta\epsilon^{-2}\})$ \\
\hline
$[1, H]$-bounded &
$\tilde{\Theta}(nH\epsilon^{-2})$ &
$\tilde{O}(\epsilon^{-1})$ &
$\tilde{O}(\max\{ H^{1/2}\epsilon^{-1}, n\Delta H\epsilon^{-2}\})$ \\
\hline
Regular &
$\tilde{\Theta}(n\epsilon^{-3})$ &
$\tilde{O}(\epsilon^{-1})$ &
$\tilde{O}(\max\{ \epsilon^{-3/2}, n\Delta\epsilon^{-3}\})$ \\
\hline
MHR &
$\tilde{\Theta}(n\epsilon^{-2})$ &
$\tilde{O}(\epsilon^{-1})$ &
$\tilde{O}(\max \{\epsilon^{-1}, n\Delta\epsilon^{-2}\})$ \\
\hline
\end{tabular}
\caption{Comparison of Cole-Roughgarden and targeted sampling models with $n$ buyers}
\label{table:multi}
\end{table}
As a thought experiment, suppose we further know the values of these thresholds.
Then, we may sample from the top $\theta_i$ portion of each buyer $i$'s prior, and estimate the priors by constructing the top $\theta_i$ portion from the targeted samples and letting the bottom $1 - \theta_i$ portion be a point mass at $0$.
The above result, however, is not constructive, so we do not know the thresholds.
This leads to two issues.
First, it is unclear which portion of the prior we shall sample from.
A natural solution is the doubling trick:
for each buyer $i$, guess $\theta_i = 1, \frac{1}{2}, \frac{1}{4}$, etc., and sample from the corresponding intervals;
then compute an estimation of $D_i$ aggregating all targeted samples.
The second issue is more subtle.
Even though the bottom $1 - \theta_i$ portion of buyer $i$'s prior contributes little to the revenue of the optimal auction, an inaccurate estimation of this portion of the prior may overestimate its contribution and incorrectly give the item to a buyer $i$ at a value in the bottom $1 - \theta_i$ portion.
This was not an issue in the thought experiment because the bottom $1 - \theta_i$ portion was rounded down to $0$.
The same treatment is infeasible when the thresholds are unknown.
To bound the revenue loss above, we generalize the analysis framework of Guo et al.~\cite{GHZ19}.
We write $D_i \succeq \tilde{D}_i$ (and $\tilde{D}_i \preceq D_i$) if $D_i$ first-order stochastically dominates $\tilde{D}_i$.

\paragraph{Sandwich Lemma.}
There are $\tilde{D}_i \preceq D_i$, $1 \le i \le n$ such that:
\begin{enumerate}[label={(\arabic*)}, topsep=3pt, partopsep=0pt, itemsep=0pt, parsep=3pt]
    \item the optimal revenue of $\vec{D}$ and $\vec{\tilde{D}}$ are $\epsilon$-close, and
    \item aggregating $\epsilon^{-2}$ samples from the top $\theta_i$ portion of buyer $i$'s prior for $\theta_i = 1, \frac{1}{2}, \frac{1}{4}$, etc., gives a dominated empirical distribution $E_i$ between $D_i$ and $\tilde{D}_i$, i.e., $D_i \succeq E_i \succeq \tilde{D}_i$.
\end{enumerate}

\medskip

By contrast, Guo et al.~\cite{GHZ19} used $\tilde{O} \big( n \epsilon^{-2} \big)$ i.i.d.\ samples to achieve a similar sandwich lemma.
The improvement comes from an application of the first ingredient, which allows us to satisfy the first condition with a $\vec{\tilde{D}}$ that deviates much further from $\vec{D}$ in the small-value regime compared to its counterpart in Guo et al.~\cite{GHZ19}.
Indeed, the choice of $\vec{\tilde{D}}$ is driven by the estimation error of a Bernstein's inequality and the doubling trick.
Then, the rest of the analysis follows by revenue monotonicity due to Devanur et al.~\cite{DHP16} like in the Cole-Roughgarden model:
running the optimal auction with respect to (w.r.t.) $\vec{E}$ gets at least the optimal revenue of $\vec{E}$ (strong revenue monotonicity), which is lower bounded by the optimal revenue of $\vec{\tilde{D}}$ (weak revenue monotonicity), which is $\epsilon$-close to the optimal revenue of $\vec{D}$ by the sandwich lemma.

The first two ingredients alone show a targeted sampling complexity upper bound of $\tilde{O}(\epsilon^{-2})$, removing the linear dependence in $n$.
Further, the algorithm is non-adaptive, and targets intervals only of the form $[0, q]$, the top $q$-portion, for $q = 1, \frac{1}{2}, \frac{1}{4}$, etc.

Finally, to further improve the dependence in $\epsilon^{-1}$ from quadratic to linear for targeted queries, and also for targeted samples with a sufficiently small $\Delta$, we adopt the $\vec{\tilde{D}}$ designed for the analysis of the doubling trick but abandon the doubling trick itself.
Instead, we directly design a targeted sampling algorithm that learns an $E_i$ sandwiched between $D_i$ and $\tilde{D}_i$, building on the following: 

\paragraph{Pinpoint Lemma.}
There are $N = \tilde{O}\big(\epsilon^{-1}\big)$ pinpoints $1 = q_0 > q_1 > \dots > q_N > q_{N+1}= 0$ such that for any buyer $i$ and any $j$, the value with quantile $q_j$ in $D_i$ has quantile at least $q_{j+1}$ in $\tilde{D}_i$.

\medskip

We demonstrate how to use the lemma with targeted queries for simplicity.
To get a distribution $E_i$ sandwiched between $D_i$ and $\tilde{D}_i$, it suffices to have $N = \tilde{O}\big(\epsilon^{-1}\big)$ queries at $q_1, q_2, \dots, q_N$ and construct a discrete distribution that agrees with $D_i$ at these pinpoints, plus a point mass at $0$.

\subsubsection*{Single Buyer}
In the special case of a single buyer, every auction can be interpreted as posting a take-it-or-leave-it price (potentially with randomness).
The expected revenue of posting a price $p$ is $p \cdot \Pr_{v \sim D} \big[ v \ge p \big]$.
It is common practice in the literature to view it as a function of the quantile $q(p) = \Pr_{v \sim D} \big[ v \ge p \big]$.
Then, the regular distributions are those whose revenue is a concave function of the quantile; 
further, distributions with monotone hazard rates (MHR) are those satisfying a certain sense of ``strict concavity''.
Exploiting the concavity of the revenue function, we further improve the query complexity for regular and MHR distributions to $\tilde{O} \big(1\big)$ in the single-buyer case.
The improved bounds also hold for targeted sample complexity for sufficiently small $\Delta$.
Finally, we complement the upper bounds with matching lower bounds.
See Table~\ref{table:single} for a comparison of results in the Cole-Roughgarden and targeted sampling models with a single buyer.

\begin{table}[t]
\centering
\begin{tabular}{|c|c|c|c|}
\hline
\tcell{Family of\\ Distributions} &
\tcell{Sample Complexity\\ (Huang et al.~\cite{HuangMR/2018/SICOMP})} & 
\tcell{Targeted Query\\ Complexity} &
\tcell{Targeted Sample Complexity\\ Upper bound\tablefootnote{Like in multiple buyers, we only list the results when $\Delta$ is not sufficiently small for the sake of simplicity.}} \\
\hline
$[0, 1]$-bounded &
$\Theta(\epsilon^{-2})$ &
$\tilde{\Theta}(\epsilon^{-1})$ &
$\tilde{O}(\max\{\epsilon^{-1},\Delta\epsilon^{-2})\})$\\
\hline
$[1, H]$-bounded &
$\tilde{\Theta}(H\epsilon^{-2})$ &
$\tilde{\Theta}(\epsilon^{-1})$ &
$\tilde{O}(\max\{H^{1/2}\epsilon^{-1},\Delta H\epsilon^{-2}\})$ \\
\hline
Regular &
$\tilde{\Theta}(\epsilon^{-3})$ &
$\tilde{\Theta}(1)$ & 
$\tilde {O}(\max \{1,\min\{\Delta^2\ep^{-4},\Delta\ep^{-3}\}\})$\\
\hline
MHR &
$\tilde{\Theta}(\epsilon^{-1.5})$ &
$\tilde{\Theta}(1)$ &
$\tilde{O}(\max \{1,\Delta^2\epsilon^{-2}\})$ \\
\hline
\end{tabular}
\caption{Comparison of Cole-Roughgarden and targeted sampling models with a single buyer}
\label{table:single}
\end{table}

\subsection{Related Works}

When Cole and Roughgarden~\cite{CR14} proposed their model of sample complexity of optimal auctions, they showed that for the regular and MHR distributions (more generally, $\alpha$-regular distributions) $\textrm{poly}(n,\epsilon^{-1})$ samples are sufficient and necessary for learning an $n$-buyer auction that is optimal up to a $(1 - \epsilon)$-approximation.
Subsequently, there has been a long line of research in this direction.
Notably, Morgenstern and Roughgarden~\cite{MR15}, Devanur et al.~\cite{DHP16}, Syrgkanis~\cite{syrgkanis2017sample}, and Gonczarowski and Nisan~\cite{GN17} improved the sample complexity upper bounds for regular and MHR distributions, and showed upper bounds for bounded-support distributions w.r.t.\ both multiplicative and additive approximations.
Recently, Guo et al.~\cite{GHZ19} obtained the optimal bounds up to logarithmic factors.

The single-buyer case were implicitly studied by Dhangwatnotai et al.~\cite{DRY15} even before Cole and Roughgarden~\cite{CR14} formally introduced the model.
Huang et al.~\cite{HuangMR/2018/SICOMP} gave optimal sample complexity bounds in this setting up to logarithmic factors.

\citet{chen2018brief} studied the targeted query model using a different approach that checks the values at quantiles $\{1,(1+\epsilon)^{-1},(1+\epsilon)^{-2},\ldots\}$. 
Unfortunately, their approach does not generalize to the targeted sampling model to our knowledge. 
This is because they require an $(1+\epsilon)$-multiplicative error of each quantile point, achieving which at the targeted sample model would require too many samples following standard concentration inequalities.

Beyond single-item and more generally single-parameter auctions, the Cole-Roughgarden sample complexity model has also been extensively studied in the multi-parameter setting.
Gonczarowski and Weinberg~\cite{GW18} showed a polynomial upper bound on the sample complexity (information theoretically), despite that a characterization of optimal multi-parameter auctions remains elusive.
Further, Cai and Daskalakis~\cite{CD17} and Balcan et al.~\cite{balcan2018general} analyzed the sample complexity of several families of simple multi-parameter auctions such as item-pricing.
If the buyers' values for different items can be arbitrarily correlated, however, Dughmi et al.~\cite{dughmi2014sampling} proved that exponentially many samples were required.
Recently, Brustle et al.~\cite{brustle2019multi} explored the regime between independent and arbitrarily correlated distributions and proved polynomial sample complexity bounds for certain families of structured correlated distributions.
We leave as a future direction to explore the power of targeted samples in the multi-parameter setting.

\section{Preliminaries}
\label{sec:preliminaries}

\subsection{Model}

Consider a seller selling an item to $\bn \ge 1$ buyers.
The value of each buyer $i$ is drawn independently from a distribution $D_i$.
In other words, the value profile $\bm{v} = (v_1, v_2, \dots, v_n)$ is drawn from a product distribution $\bm{D}=D_1\times D_2\times\cdots\times D_n$.
Without loss of generality, we consider direct revelation mechanisms.
A mechanism consists of an allocation function $\bm{x}$ and a payment function $\bm{p}$.
Let $\bm{b}=(b_1,b_2,\cdots,b_n)$ be the bid profile where each $b_i\ge 0$ is the bid submitted by buyer $i$.
Let $x_i(\bm{b})$ denote the probability that buyer $i$ gets the item;
let $p_i(\bm{b})$ denote buyer $i$'s expected payment.
Each buyer $i$'s utility is quasi-linear, i.e., $v_i\cdot x_i(\bm{b}) - p_i(\bm{b})$.
We focus on dominant strategy incentive compatible (DSIC) and individually rational (IR) mechanisms, i.e., those ensuring that any buyer's utility is nonnegative and maximized if it truthfully reports $b_i = v_i$, regardless of the values and strategies of other buyers.
Hence, the rest of the paper assumes $b_i = v_i$, $1 \le i \le n$.

The seller's objective is to maximize the expected revenue, i.e., the expectation of the sum of all buyers' payments $\sum_{i=1}^{n} p_i(\bm{v})$, where $\vec{v} \sim \vec{D}$.
For any DSIC and IR mechanism $M$ and any product value distribution $\bm{D}$, $\rev{M,\bm{D}}$ denotes the expected revenue of running $M$ on $\bm{D}$.
Let $M_{\bm{D}}$ be the optimal mechanism, which we shall demonstrate shortly.
Define $\opt{\bm{D}} \defeq \rev{M_{\vec{D}},\bm{D}}$.

\subsubsection*{Myerson's Optimal Auction}
Myerson~\cite{Mye81} characterized the revenue optimal auction by introducing the notion of \emph{virtual values}, assuming full knowledge of the prior distribution $\bm{D}$.
Myerson's original definition of virtual values assumes that the distributions have well-defined probability density functions (pdf).
For any buyer $i$, let $F_i$ and $f_i$ denote the cdf and pdf respectively.
Then, the virtual value of buyer $i$, when its value is $v_i$, equals:
\begin{equation}
    \label{eqn:virtual-value}
    \vvi(v_i) \defeq v_i - \frac{1 - F_i(v_i)}{f_i(v_i)}
    ~.
\end{equation}

The definition for arbitrary distributions is more involved.
Since this paper only needs some high-level properties of virtual value rather than its precise definition, we omit the details and refer readers to, e.g., Guo et al.~\cite{GHZ19}.

Myerson observed the following connection between revenue and virtual values:
\begin{equation}
    \label{eqn:revenue-and-virtual-value}
    \E{\sum_{i=1}^n p_i(\vec{v})} = \E{\sum_{i=1}^n \vvi(v_i) x_i(\vec{b})}
    ~.
\end{equation}

The prior $\vec{D}$ is \emph{regular} if $\vvi$ is nondecreasing for any $1 \le i \le n$.
As a special case, the prior has \emph{monotone hazard rate} (MHR) if the second term in Eqn.~\eqref{eqn:virtual-value} is nonincreasing; 
equivalently, it means that the derivative of $\vvi$ is at least $1$.
For regular and MHR distributions, Myerson's optimal auction gives the item to the buyer with the highest \emph{nonnegative} virtual value;
the buyer then pays the lowest bid at which it would still win the item.

In the general case when $\vvi$ may not be monotone, we need another ingredient by Myerson known as \emph{ironing}.
It can be interpreted as identifying an appropriate subset of values $V_i$ for each buyer $i$,%
\footnote{These points span the convex hull of the revenue curve of $D_i$. However, this is unimportant for our results.} 
and constructing a distribution $\bar{D}_i$ by rounding values from $D_i$ down to the closest value in $V_i$.
The resulting $\vec{\bar{D}}$ is regular.
The virtual values defined w.r.t.\ $\vec{\bar{D}}$ are called the \emph{ironed virtual values}, denoted as $\bar{\varphi}_i$'s.
Myerson's optimal auction gives the item to the buyer with the highest nonnegative ironed virtual value.
Importantly, it retains the connection between revenue and virtual value for any mechanism whose decisions depend only on the rounded values (down to the closest one in $V_i$'s):
\begin{equation}
    \label{eqn:revenue-and-ironed-virtual-value}
    \E{\sum_{i=1}^n p_i(\vec{v})} = \E{\sum_{i=1}^n \bar{\varphi}_i(v_i) x_i(\vec{b})}
    ~.
\end{equation}


Finally, the \textit{quantile} of $v_i$ w.r.t.\ buyer $i$'s prior is $q_i(v_i) \defeq \Pr_{v\sim D_i}[v\ge v_i]$.

\subsubsection*{Cole-Roughgarden Model}
Next, suppose we no longer have full knowledge of the prior but instead can access it through i.i.d.\ samples.
This is the model introduced by Cole and Roughgarden~\cite{CR14}, who investigated the number of samples necessary and sufficient for learning a $(1-\epsilon)$-optimal auction.

\subsubsection*{Targeted Sampling Model}
%
This paper considers an alternative model in which the seller can obtain \emph{targeted samples}.
More precisely, the seller may specify a buyer $i$ and a quantile interval in $[0, 1]$ of width at least $\Delta$ and obtain a sample from $D_i$ conditioned on having a quantile within the interval.
A smaller $\Delta$ leads to stronger targeting power, and vice versa.
The Cole-Roughgarden model is a special case when $\Delta = 1$.
Our results demonstrate an advantage compared to the Cole-Roughgarden model whenever $\Delta = o(1)$.
In the limit when $\Delta \to 0$, each query is a pair $(i, q)$ and the answer is precisely the value $v$ whose quantile equals $q$ in buyer $i$'s dataset.
For example, it allows querying the median value of a buyer.
We call this the targeted query model and define targeted query complexity accordingly.

Our analysis is robust to noisy answers to the queries.
Suppose that the queries are answered by a data holder who has learned the distribution through i.i.d.\ samples.
Then, our targeted sample complexity bounds hold as long as the number of i.i.d.\ samples used by the data holder is at least the sample complexity in the Cole-Roughgarden model, i.e., if the data holder has sufficiently many data to learn an approximately optimal auction itself.



\subsection{Technical Preliminaries}

\subsubsection*{Bernstein's Inequality}
This is a standard concentration bound.
We include it below for completeness.
%
\begin{lemma}[Bernstein \cite{bernstein1924modification}]
\label{lem:bernstein}
Let $X_1,X_2,\cdots,X_m$ be i.i.d.\ random variables such that for each $i$, $\E{X_i}=0$, $\E{X_i^2}=\sigma^2$ and $|X_i|\le M$ for some constant $M>0$.
Then for any $t>0$:
\[
\Pr \left[ \sum_{i=1}^{m}X_i > t \right] \le \exp\bigg(-\frac{t^2}{2m\sigma^2+(2/3)Mt}\bigg)
~.\]
~
\end{lemma}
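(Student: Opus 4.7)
The plan is to prove Bernstein's inequality by the standard Chernoff-style exponential moment method, bounding the moment generating function (MGF) via the uniform bound $|X_i| \le M$, and then optimizing the free parameter.

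First, I would apply the exponential Markov inequality: for any $\lambda > 0$,
\[
\Pr\bigg[\sum_{i=1}^m X_i > t\bigg] = \Pr\Big[e^{\lambda \sum_i X_i} > e^{\lambda t}\Big] \le e^{-\lambda t} \prod_{i=1}^m \mathbb{E}\big[e^{\lambda X_i}\big],
\]
where the product form uses independence of the $X_i$'s. The task then reduces to bounding the single-variable MGF $\mathbb{E}[e^{\lambda X_i}]$ and optimizing $\lambda$.

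Next I would Taylor-expand the exponential and exploit $\mathbb{E}[X_i]=0$ together with the variance and boundedness assumptions. Since $|X_i| \le M$, for each integer $k \ge 2$ we have $|X_i|^k \le M^{k-2} X_i^2$, hence $\mathbb{E}[X_i^k] \le M^{k-2} \sigma^2$. Summing the Taylor series and comparing with $e^{\lambda M}-1-\lambda M = \sum_{k\ge 2} (\lambda M)^k/k!$ yields
\[
\mathbb{E}\big[e^{\lambda X_i}\big] \le 1 + \frac{\sigma^2}{M^2}\big(e^{\lambda M}-1-\lambda M\big) \le \exp\!\bigg(\frac{\sigma^2}{M^2}\big(e^{\lambda M}-1-\lambda M\big)\bigg),
\]
where the last inequality is $1+x \le e^x$. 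Plugging back into the Chernoff bound yields
\[
\Pr\bigg[\sum_{i=1}^m X_i > t\bigg] \le \exp\!\bigg(-\lambda t + \frac{m\sigma^2}{M^2}\big(e^{\lambda M}-1-\lambda M\big)\bigg).
\]

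Finally I would optimize over $\lambda$ to obtain the stated form. The cleanest route, and where I expect the only non-routine bookkeeping, is to invoke the analytic estimate $e^{x}-1-x \le \tfrac{x^2/2}{1-x/3}$ valid for $0 \le x < 3$ (a standard consequence of comparing Taylor series term-by-term against a geometric series), which converts the exponent into $-\lambda t + \tfrac{\lambda^2 m \sigma^2/2}{1-\lambda M/3}$. Choosing $\lambda = t/(m\sigma^2 + Mt/3)$ (so that $\lambda M/3 < 1$) and substituting produces the exponent $-t^2/\big(2m\sigma^2 + (2/3) Mt\big)$, matching the statement. The only delicate step is verifying the MGF inequality $e^x - 1 - x \le (x^2/2)/(1-x/3)$, after which the remaining manipulations are purely algebraic.
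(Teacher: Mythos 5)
Your proof is correct. The paper does not actually prove this lemma---it is stated as a cited classical result (Bernstein, 1924)---so there is no ``paper's approach'' to compare against; your Chernoff/exponential-moment derivation is the standard textbook route and every step checks out. In particular, the MGF bound follows from $\mathbb{E}[|X_i|^k]\le M^{k-2}\sigma^2$ applied termwise in the Taylor series; the estimate $e^x-1-x\le\frac{x^2/2}{1-x/3}$ for $0\le x<3$ holds by comparing $\sum_{k\ge2}x^k/k!$ with $\sum_{k\ge2}x^k/(2\cdot 3^{k-2})$ term by term (since $k!\ge 2\cdot 3^{k-2}$ for $k\ge2$); and with $\lambda=t/(m\sigma^2+Mt/3)$ one has $1-\lambda M/3=m\sigma^2/(m\sigma^2+Mt/3)>0$, and the exponent simplifies exactly to $-t^2/(2m\sigma^2+(2/3)Mt)$.
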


\subsubsection*{Strong Revenue Monotonicity}
A product distribution $\bm{D} = \times_{i=1}^n D_i$ (first-order stochastically) dominates another product distribution $\bm{D}'=\times_{i=1}^nD'_i$, denoted as $\vec{D} \succeq \bm{D'}$,
if for any $1 \le i \le n $ and any value $v$:
\[
    \Pr_{x\sim D_i}[x\ge v]\ge \Pr_{x\sim D'_i}[x\ge v]
    ~.
\]
%
\begin{lemma}[Strong Revenue Monotonicity \cite{DHP16}]
\label{lem:strong-rev-monotone}
Suppose that $\bm{D}$ and $\bm{D}'$ are product distributions, and $\bm{D}\succeq\bm{D}'$.
Recall that $M_{\bm{D}'}$ is the optimal auction for product distribution $\bm{D}'$. 
Then:
\[
\rev{M_{\bm{D}'},\bm{D}}\ge \rev{M_{\bm{D}'},\bm{D}'}
~.\]
\end{lemma}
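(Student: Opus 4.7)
My approach is to combine a quantile coupling with Myerson's payment identity. Couple $\vec{v}\sim\vec{D}$ and $\vec{v}'\sim\vec{D}'$ via quantile inversion: draw $q_i\sim U[0,1]$ independently and set $v_i=F_i^{-1}(q_i)$ and $v_i'=(F_i')^{-1}(q_i)$. Since $D_i\succeq D_i'$ for every $i$, we obtain $v_i\ge v_i'$ almost surely, so the coupled $\vec{v}$ dominates $\vec{v}'$ pointwise on every sample path. Let $\vec{x}^{*},\vec{p}^{*}$ denote the allocation and payment rules of $M_{\vec{D}'}$.

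Since $M_{\vec{D}'}$ is DSIC, for any fixed $\vec{v}_{-i}$ the rule $x_i^{*}(\cdot,\vec{v}_{-i})$ is monotone nondecreasing in $v_i$, and by Myerson's payment identity $p_i^{*}(v_i,\vec{v}_{-i})=v_i\,x_i^{*}(v_i,\vec{v}_{-i})-\int_0^{v_i}x_i^{*}(z,\vec{v}_{-i})\,dz$ is likewise monotone nondecreasing in $v_i$. Integrating against $D_i$ then yields the revenue-virtual-value identity $\rev{M_{\vec{D}'},\vec{D}}=\mathbb{E}_{\vec{v}\sim\vec{D}}\bigl[\sum_i\varphi_i(v_i)\,x_i^{*}(\vec{v})\bigr]$, where $\varphi_i$ denotes the virtual value with respect to $D_i$; an analogous identity holds for $\rev{M_{\vec{D}'},\vec{D}'}$ using the virtual values of $\vec{D}'$.

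A naive per-coordinate hybrid argument, swapping one $v_k'\to v_k$ at a time, fails pointwise: although buyer $k$'s own payment weakly increases, the payments of other buyers can drop discontinuously when the increase in $v_k$ makes $k$ overtake someone in the ironed-virtual-value ranking and switches the winner. A simple two-buyer example with uniform priors of different supports exhibits this drop explicitly. So monotonicity has to be established at the level of expectations and cannot be deduced from pointwise monotonicity of $\sum_i p_i^{*}(\vec{v})$ in any single coordinate. To get around this, I would combine the quantile coupling with the optimality of $M_{\vec{D}'}$ on $\vec{D}'$: the allocation $\vec{x}^{*}$ maximizes expected ironed-virtual welfare with respect to $\vec{D}'$, so running the same allocation on the coupled dominant profile $\vec{v}$ yields an expected virtual welfare (now with respect to $\vec{D}$) that weakly exceeds $\rev{M_{\vec{D}'},\vec{D}'}$, since the coupling makes $\bar\varphi_i(v_i)\ge\bar\varphi_i(v_i')$ coordinatewise. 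The hardest step, which I expect to be the main obstacle, is reconciling the two ironing structures: $\vec{x}^{*}$ is constant on the ironed intervals of $\vec{D}'$ but not on those of $\vec{D}$, so the two revenue-virtual-value representations must be carefully aligned along the coupled sample paths before the optimality of $\vec{x}^{*}$ on $\vec{D}'$ can be transferred to a lower bound on $\rev{M_{\vec{D}'},\vec{D}}$.
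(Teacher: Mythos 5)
The paper itself gives no proof of this lemma; it is imported verbatim from Devanur, Huang, and Psomas~\cite{DHP16}, so there is no ``paper's proof'' to compare against. Your setup — quantile coupling, Myerson's payment identity, the observation that a naive per-coordinate hybrid fails pointwise — is all sound, and your instinct that the resolution must happen at the level of expectations is right. But as written, the argument has a genuine gap at exactly the step you flag, and the gap is not a matter of ``careful alignment''; the identity you would need is simply false.

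Concretely, your plan requires chaining two inequalities: (A) $\rev{M_{\vec D'},\vec D}\ge\textsc{\textbf{E}}_{\vec v\sim\vec D}\bigl[\sum_i\bar\varphi_i^{D'}(v_i)\,x_i^*(\vec v)\bigr]$, and (B) $\textsc{\textbf{E}}_{\vec v\sim\vec D}\bigl[\sum_i\bar\varphi_i^{D'}(v_i)\,x_i^*(\vec v)\bigr]\ge\rev{M_{\vec D'},\vec D'}$. Step (B) does hold: on each coupled sample path, $\sum_i\bar\varphi_i^{D'}(v_i)\,x_i^*(\vec v)\ge\sum_i\bar\varphi_i^{D'}(v_i)\,x_i^*(\vec v')\ge\sum_i\bar\varphi_i^{D'}(v_i')\,x_i^*(\vec v')$ (the first because $x^*(\vec v)$ maximizes ironed virtual surplus w.r.t.\ $\vec D'$, the second because $v_i\ge v_i'$ and $\bar\varphi_i^{D'}$ is nondecreasing with $x_i^*\ge 0$), and the expectation of the last expression is $\rev{M_{\vec D'},\vec D'}$ since $x^*$ is constant on the ironed intervals of $\vec D'$. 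But step (A) has no justification: the revenue of a DSIC mechanism under $\vec D$ equals the expected \emph{unironed} virtual surplus with respect to $\vec D$'s own virtual values, $\textsc{\textbf{E}}_{\vec v\sim\vec D}\bigl[\sum_i\varphi_i^{D}(v_i)\,x_i^*(\vec v)\bigr]$, and the ironed variant is only valid when the allocation is constant on $\vec D$'s ironed intervals — which $x^*$ is not. Replacing $\varphi_i^D$ by $\bar\varphi_i^{D'}$ under the expectation is not just a bookkeeping change: the analogous claim for $\psi_i(v)=v$ would say revenue is at least expected welfare, which is false. Your phrase ``since the coupling makes $\bar\varphi_i(v_i)\ge\bar\varphi_i(v_i')$'' is only comparing $\bar\varphi_i^{D'}$ to itself at two points; it says nothing about $\varphi_i^D(v_i)$ versus $\bar\varphi_i^{D'}(v_i)$, which is what (A) needs. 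So the proof is incomplete, and completing it requires a qualitatively different argument (for instance, representing revenue via buyer-specific threshold prices and single-buyer revenue curves, then exploiting $R_i^{D_i}(t)\ge R_i^{D_i'}(t)$ together with the structure of Myerson thresholds, as in~\cite{DHP16}), not just a tighter alignment of the two ironing structures.
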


We also use the following weaker notion of revenue monotonicity, which is a direct corollary.
%
\begin{lemma}[Weak Revenue Monotonicity]
\label{lem:weak-rev-monotone}
If $\bm{D}$ and $\bm{D}'$ are product distributions and $\bm{D}\succeq\bm{D}'$:
\[
\opt{\bm{D}}\ge \opt{\bm{D}'}
~.\]
\end{lemma}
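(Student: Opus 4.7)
The plan is to derive this as a two-line corollary of strong revenue monotonicity (Lemma~\ref{lem:strong-rev-monotone}), using only the definition of $M_{\bm{D}}$ as the revenue-optimal mechanism for $\bm{D}$.

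First I would apply Lemma~\ref{lem:strong-rev-monotone} to the pair $\bm{D}\succeq\bm{D}'$, which immediately yields
\[
    \rev{M_{\bm{D}'},\bm{D}}\ \ge\ \rev{M_{\bm{D}'},\bm{D}'}\ =\ \opt{\bm{D}'},
\]
where the equality is just the definition $\opt{\bm{D}'}:=\rev{M_{\bm{D}'},\bm{D}'}$ given in the preliminaries. Next, since $M_{\bm{D}}$ is by definition the revenue-maximizing DSIC and IR mechanism with respect to the distribution $\bm{D}$, it does at least as well on $\bm{D}$ as the (possibly suboptimal for $\bm{D}$) mechanism $M_{\bm{D}'}$, i.e.,
\[
    \opt{\bm{D}}\ =\ \rev{M_{\bm{D}},\bm{D}}\ \ge\ \rev{M_{\bm{D}'},\bm{D}}.
\]
Chaining the two inequalities gives $\opt{\bm{D}}\ge\opt{\bm{D}'}$, as required.

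There is essentially no obstacle here: the statement is a weakening of Lemma~\ref{lem:strong-rev-monotone} in which we are allowed to switch to the optimal mechanism for $\bm{D}$ on the left-hand side. The only thing to be slightly careful about is ensuring $M_{\bm{D}'}$ is itself a DSIC/IR mechanism (so that the optimality of $M_{\bm{D}}$ on $\bm{D}$ applies in comparison to it), which is guaranteed by the restricted class over which $M_{\bm{D}'}$ was defined. No additional technical ingredients, no use of virtual values, ironing, or the targeted sampling machinery are needed.
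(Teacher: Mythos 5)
Your proof is correct and is exactly the argument the paper intends: it explicitly labels Lemma~\ref{lem:weak-rev-monotone} ``a direct corollary'' of Lemma~\ref{lem:strong-rev-monotone} and omits the details, which are precisely your two-step chain via $\rev{M_{\bm{D}'},\bm{D}}$ and the optimality of $M_{\bm{D}}$ on $\bm{D}$.
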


\subsubsection*{Information Theory}
For distributions $P$ and $Q$ on a common sample space $\Omega$, the \textit{Kullback-Leibler (KL) divergence} is: 
\[
D_{KL}(P\|Q) \defeq \int_{\Omega} \ln \bigg(\frac{\df P}{\df Q}\bigg) \df P
~.\]

Further consider a symmetric version:
\[
\dskl{P,Q} \defeq D_{KL}(P\|Q) + D_{KL}(Q\|P)
~.\]

It is known that if $P$ and $Q$ have a small KL divergence then distinguishing them requires a large number of samples.
The next lemma can formalize this connection.
%
\begin{lemma}[e.g., \cite{HuangMR/2018/SICOMP}]
    Suppose an algorithm $A$ distinguishes $P$ and $Q$ correctly with probability at least $\frac{2}{3}$ using $m$ samples. 
    Then, $m$ is at least $\Omega(\dskl{P,Q}^{-1})$.
\end{lemma}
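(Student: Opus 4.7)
The plan is to apply the standard information-theoretic toolkit: convert the algorithm's guess into a Bernoulli statistic, then use the data-processing inequality and tensorization of KL divergence to transfer a two-point KL lower bound back to a lower bound on $m$.

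First I would set up notation. Treating $A$ as a (possibly randomized) map from $\Omega^m$ to $\{P,Q\}$, let $p \defeq \Pr_{S \sim P^m}[A(S)=P]$ and $q \defeq \Pr_{S \sim Q^m}[A(S)=P]$. The $\tfrac{2}{3}$-correctness assumption immediately gives $p \ge \tfrac{2}{3}$ and $q \le \tfrac{1}{3}$, so the output of $A$ is a Bernoulli that is statistically far apart under the two hypotheses.

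Next I would invoke the data-processing inequality separately in both KL directions. Since $A$ is a post-processing of the $m$-sample, it cannot increase KL divergence, yielding $D_{KL}(P^m \| Q^m) \ge D_{KL}(\mathrm{Ber}(p) \| \mathrm{Ber}(q))$ and the symmetric inequality with $P$ and $Q$ swapped. Summing these and using tensorization of KL on product measures, I obtain
\[
m \cdot \dskl{P,Q} \;=\; \dskl{P^m,Q^m} \;\ge\; \dskl{\mathrm{Ber}(p),\mathrm{Ber}(q)}.
\]

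Finally I would lower-bound the right-hand side by an absolute constant whenever $p \ge \tfrac{2}{3}$ and $q \le \tfrac{1}{3}$; this follows from a direct computation or from applying Pinsker's inequality in both directions together with $|p-q| \ge \tfrac{1}{3}$. Rearranging yields $m \ge \Omega\bigl(\dskl{P,Q}^{-1}\bigr)$ as claimed. I do not anticipate a real obstacle here; the only subtlety worth checking is that the data-processing inequality and KL tensorization apply to the continuous value distributions relevant for the auction setting, but both hold in full generality for $\sigma$-finite measures, so the argument carries over verbatim. The bound is also tight in general, as witnessed by a pair of Bernoulli distributions with close parameters, which explains why the symmetrized $D_{SKL}$ is the natural quantity to appear in the statement.
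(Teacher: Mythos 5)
Your proof is correct. Note that the paper itself does not prove this lemma but cites it (``e.g., \cite{HuangMR/2018/SICOMP}'') as a known information-theoretic fact, so there is no in-paper argument to compare against; the route you take (reduce the algorithm's output to a Bernoulli statistic, apply the data-processing inequality in both KL directions, use tensorization $D_{KL}(P^m\|Q^m)=m\,D_{KL}(P\|Q)$, and lower-bound $\dskl{\mathrm{Ber}(p),\mathrm{Ber}(q)}$ by a constant via Pinsker when $p\ge\tfrac{2}{3}$ and $q\le\tfrac{1}{3}$) is exactly the standard argument underlying the cited result, and every step you invoke holds at the needed level of generality.
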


Building on the above lemma, Guo et al.~\cite{GHZ19} further observed that if the KL divergence is small then the optimal revenue is close.
We shall let the parameter $\alpha$ in the next lemma be either $\epsilon$ for additive approximation results, and $\epsilon \cdot \opt{\vec{D}}$ for multiplicative results.
%
\begin{lemma}[\cite{GHZ19}]
\label{lem:kl-rev}
If product distributions $\bm{D}$ and $\bm{D}'$ satisfy that for some $K>0$, some $\alpha>0$ and some sufficiently small constant $c>0$:
\begin{enumerate}
    \item They have small KL divergence: $\dskl{\bm{D},\bm{D}'}\le cK^{-1}$.
    \item For any mechanism, and either of the two distributions, $K$ samples are sufficient to estimate the expected revenue up to an additive $\alpha$ error with probability at least $\frac{2}{3}$.
\end{enumerate}
Then we have:
\[
    \opt{\bm{D}}\ge \opt{\bm{D}'}-2\alpha
    ~.
\]
\end{lemma}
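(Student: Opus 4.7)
The plan is to prove the contrapositive by an indistinguishability argument: if the KL divergence is small, no $K$-sample algorithm can tell $\vec{D}$ apart from $\vec{D}'$, and in particular revenue estimation cannot reveal a gap of more than $2\alpha$. Concretely, assume towards contradiction that $\opt{\vec{D}} < \opt{\vec{D}'} - 2\alpha$, and let $M \defeq M_{\vec{D}'}$ be the optimal mechanism for $\vec{D}'$. By optimality of $M_{\vec{D}}$ on $\vec{D}$, we have $\rev{M,\vec{D}} \le \opt{\vec{D}} < \opt{\vec{D}'} - 2\alpha = \rev{M,\vec{D}'} - 2\alpha$, so the expected revenue of the single fixed mechanism $M$ differs by strictly more than $2\alpha$ between the two distributions.

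Next, I would turn the revenue estimator from Condition 2 into a distinguisher between $\vec{D}$ and $\vec{D}'$ that uses $K$ samples. Given $K$ value profiles drawn from the unknown distribution, compute an $\alpha$-accurate estimate $\widehat{R}$ of $\rev{M,\cdot\,}$ using the estimator whose existence is guaranteed by Condition 2, then output $\vec{D}$ if $\widehat{R}$ is closer to $\rev{M,\vec{D}}$ than to $\rev{M,\vec{D}'}$, and $\vec{D}'$ otherwise. Because the two true expected revenues are separated by more than $2\alpha$, an $\alpha$-accurate estimate lands on the correct side of the midpoint, so this algorithm succeeds with probability at least $\tfrac{2}{3}$ under either hypothesis.

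Now apply the earlier lemma relating distinguishability and KL divergence to this algorithm: since it distinguishes $\vec{D}$ from $\vec{D}'$ correctly with probability at least $\tfrac{2}{3}$ using $K$ samples, we obtain $K = \Omega\bigl(\dskl{\vec{D},\vec{D}'}^{-1}\bigr)$, equivalently $\dskl{\vec{D},\vec{D}'} \ge c_0 K^{-1}$ for an absolute constant $c_0>0$. Choosing the constant $c$ in Condition 1 strictly smaller than $c_0$ contradicts $\dskl{\vec{D},\vec{D}'} \le cK^{-1}$, completing the proof.

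The main subtlety I expect is matching the notion of ``sample'' between Conditions 1 and 2 and the distinguishability lemma: one sample must mean one draw of the full profile $\vec{v} \sim \vec{D}$, so that the KL divergence governing the distinguisher is the divergence between the product distributions themselves, which is the quantity $\dskl{\vec{D},\vec{D}'}$ appearing in Condition 1. As long as Condition 2 is interpreted (as stated in the excerpt) with $K$ such profile samples, the rest is bookkeeping; there is no need to unpack virtual values or the structure of $M$ beyond treating it as a fixed DSIC/IR mechanism whose expected revenue under a distribution is the relevant statistic.
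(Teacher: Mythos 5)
Your proof is correct, and it is the standard argument behind this kind of lemma: the paper cites Lemma~\ref{lem:kl-rev} from Guo et al.~\cite{GHZ19} without reproducing the proof, but the text explicitly says it is obtained ``building on the above lemma'' (the distinguishability--KL bound), which is exactly the reduction you carry out. The chain $\rev{M_{\vec{D}'},\vec{D}} \le \opt{\vec{D}} < \opt{\vec{D}'} - 2\alpha = \rev{M_{\vec{D}'},\vec{D}'} - 2\alpha$ correctly isolates a single fixed mechanism whose expected revenue separates the two priors by more than $2\alpha$, the $\alpha$-accurate estimator from Condition~2 then yields a $\tfrac{2}{3}$-correct distinguisher on $K$ profile samples, and invoking the distinguishability lemma gives $\dskl{\vec{D},\vec{D}'} \ge c_0 K^{-1}$, contradicting Condition~1 once $c < c_0$. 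Your closing remark about the ``sample'' granularity is the right thing to flag and is handled correctly: each sample is a full profile, so the relevant divergence is precisely $\dskl{\vec{D},\vec{D}'}$.
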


\subsubsection*{Yao's Minimax Principle}
Our lower bounds use \emph{Yao's minimax principle} to handle randomized algorithms.
For simplicity, we state below only its instantiation on our problem rather than the general form.
%

\begin{lemma}[Yao's Minimax Principle \cite{yao1977probabilistic}]
\label{lem:Yao}
Let $\mathcal{D}$ be any family of priors and $\mathcal{M}$ be the family of DSIC and IR mechanisms.
Then, for any distributions $\mu_M$ over $\mathcal{M}$ and $\mu_D$ over priors in $\mathcal{D}$:
\[
    \max_{\vec{D} \in \mathcal{D}} \textsc{\textbf{E}}_{M \sim \mu_M} \big[ \opt{\vec{D}} - \rev{M, \vec{D}} \big]
    \ge
    \min_{M \in \mathcal{M}} \textsc{\textbf{E}}_{\vec{D} \sim \mu_D} \big[ \opt{\vec{D}} - \rev{M, \vec{D}} \big]
    ~.
\]
\end{lemma}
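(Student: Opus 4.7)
The plan is to prove Yao's inequality by the standard averaging (or ``double-counting'') argument: I sandwich both sides around the joint expectation in which $M \sim \mu_M$ and $\vec{D} \sim \mu_D$ are drawn independently. Concretely, define
\[
    R \defeq \E{\opt{\vec{D}} - \rev{M, \vec{D}}},
\]
where the outer expectation is taken with respect to the product $\mu_M \times \mu_D$. Since $\mu_M$ and $\mu_D$ are independent, Fubini's theorem lets me evaluate $R$ as iterated expectations in either order.

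For the left-hand side I would invoke the trivial fact that the maximum of any real-valued function on $\mathcal{D}$ is at least its $\mu_D$-average. Let $f(\vec{D}) \defeq \E{\opt{\vec{D}} - \rev{M, \vec{D}}}$ denote the inner expectation over $M \sim \mu_M$ with $\vec{D}$ held fixed; by Fubini its $\mu_D$-average equals $R$, so
\[
    \max_{\vec{D} \in \mathcal{D}} \E{\opt{\vec{D}} - \rev{M, \vec{D}}} \;\ge\; R.
\]
Symmetrically, let $g(M) \defeq \E{\opt{\vec{D}} - \rev{M, \vec{D}}}$ denote the inner expectation over $\vec{D} \sim \mu_D$ with $M$ held fixed; the minimum of $g$ on $\mathcal{M}$ is at most its $\mu_M$-average, which again equals $R$, hence
\[
    \min_{M \in \mathcal{M}} \E{\opt{\vec{D}} - \rev{M, \vec{D}}} \;\le\; R.
\]
Chaining these two bounds through $R$ delivers the claim.

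The only technical items to check are measurability and integrability of the regret $\opt{\vec{D}} - \rev{M, \vec{D}}$ so that Fubini applies; this is immediate because for every DSIC-and-IR mechanism $M$ we have $0 \le \rev{M, \vec{D}} \le \opt{\vec{D}}$, making the regret nonnegative and dominated by $\opt{\vec{D}}$. There is no real obstacle: this is the ``easy'' averaging direction of min-max and no compactness, convexity, or Sion-type result is needed. The intuition is that the minimizer-over-mechanisms, who effectively moves first in the right-hand side, is strictly disadvantaged compared to the maximizer-over-priors, who effectively moves second in the left-hand side.
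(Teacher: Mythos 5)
Your proposal is correct. The paper itself supplies no proof of this lemma --- it is stated as an instantiation of Yao's principle with a citation to Yao (1977) --- so your self-contained averaging argument is a perfectly good substitute, and it is indeed the standard ``easy direction'' proof: both sides are compared to the joint expectation $R$ under $\mu_M \times \mu_D$, using that a maximum dominates an average and a minimum is dominated by an average. Two small technical remarks. First, since nothing guarantees the extrema are attained over an arbitrary family $\mathcal{D}$ or over all DSIC/IR mechanisms, one would strictly speaking write $\sup$ and $\inf$; your argument goes through verbatim with that change (a supremum still dominates the $\mu_D$-average over the support of $\mu_D \subseteq \mathcal{D}$, and symmetrically for the infimum), and the paper's statement already takes the liberty of writing $\max$/$\min$, so you are consistent with it. Second, the interchange of expectations needs only Tonelli rather than Fubini: the regret $\opt{\vec{D}} - \rev{M,\vec{D}}$ is nonnegative because every $M \in \mathcal{M}$ is DSIC and IR and $\opt{\vec{D}}$ is by definition the optimal revenue over such mechanisms, so no integrability or domination hypothesis (which could fail if $\opt{\vec{D}}$ is infinite for some unbounded prior) is actually required. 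With those cosmetic adjustments your proof is complete and matches the level of generality at which the paper invokes the lemma in its single-buyer lower bounds.
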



\section{Multiple Buyers: Key Ingredients and Targeted Query Model}
\label{sec:multi}

In this section we propose our three key ingredients, and illustrate them in the proof of targeted query complexity upper bounds for learning near optimal multi-buyer single-item auctions. 
The generalized case of targeted sample complexity upper bounds, i.e., when $0 < \Delta < 1$, are deferred to Section~\ref{app:large-delta}.

\begin{theorem}
    \label{thm:main}
    For any $0 < \epsilon < 1$ and any number of buyers $n$, there is an algorithm that learns an auction with expected revenue at least $(1-\epsilon) \opt{\bm{D}}$:
    \begin{enumerate}
        \item with $\tilde{O} (\ep^{-3/2})$ targeted queries if $\bm{D}$ is regular; or
        \item with $\tilde{O} (\ep^{-1})$ targeted queries if $\bm{D}$ is MHR; or
        \item with $\tilde{O} (H^{1/2}\ep^{-1})$ targeted queries if $\bm{D}$ is $[1,H]$-bounded.
    \end{enumerate}
    The algorithm also learns an auction with expected revenue at least $\opt{\bm{D}}-\ep$:
    \begin{enumerate}
        \setcounter{enumi}{3}
        \item with $\tilde{O} (\ep^{-1})$ targeted queries if $\bm{D}$ is $[0,1]$-bounded.
    \end{enumerate}
\end{theorem}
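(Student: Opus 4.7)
The plan is to reduce the problem to running Myerson's optimal auction on a carefully constructed empirical product distribution $\vec{E}$ built from targeted queries, and to control the revenue via a standard three-link chain. Specifically, the algorithm will (i) build $\vec{E}$ from $\tilde{O}(\epsilon^{-1})$ (respectively $\tilde{O}(H^{1/2}\epsilon^{-1})$ or $\tilde{O}(\epsilon^{-3/2})$) targeted queries, (ii) compute $M_{\vec{E}}$, and (iii) deploy $M_{\vec{E}}$ on $\vec{D}$. The revenue bound then follows: strong revenue monotonicity (Lemma~\ref{lem:strong-rev-monotone}) gives $\rev{M_{\vec{E}},\vec{D}}\ge \rev{M_{\vec{E}},\vec{E}}=\opt{\vec{E}}$; weak revenue monotonicity (Lemma~\ref{lem:weak-rev-monotone}) applied to $\vec{E}\succeq \vec{\tilde{D}}$ gives $\opt{\vec{E}}\ge \opt{\vec{\tilde{D}}}$; and the Sandwich Lemma closes the gap between $\opt{\vec{\tilde{D}}}$ and $\opt{\vec{D}}$ (additively by $\epsilon$ in the $[0,1]$-bounded case, multiplicatively by $1-\epsilon$ otherwise).

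The heart of the proof is the construction of $\vec{E}$. First, I would invoke Lemma~\ref{lem:theta} to obtain thresholds $\theta_i$ with $\sum_i\theta_i=\tilde{O}(1)$, capturing that only the top $\theta_i$-portion of each $D_i$ carries meaningful revenue. Next, I would design the ``floor'' distribution $\vec{\tilde{D}}$ by coarsening each $D_i$ below the top $\theta_i$ quantile, following the Sandwich Lemma's recipe but tailored so that a pinpoint-based empirical distribution can realize the sandwich directly. The key step is the Pinpoint Lemma: there is a sequence $1=q_0>q_1>\cdots>q_N>q_{N+1}=0$ with $N$ matching the family-specific bound such that the value at quantile $q_j$ in $D_i$ has quantile at least $q_{j+1}$ in $\tilde{D}_i$. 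For each buyer $i$, I query $D_i$ at just those pinpoints lying inside its top $\theta_i$ region and build $E_i$ as the resulting discrete distribution, placing the remaining mass at $0$. Matching $D_i$ at each queried pinpoint gives $D_i\succeq E_i$; the quantile-gap property of the pinpoints then forces $E_i\succeq \tilde{D}_i$. Summing the per-buyer query counts using $\sum_i\theta_i=\tilde{O}(1)$ yields the overall bound. The family-specific counts differ only in the pinpoint spacing: MHR enjoys exponential tail decay; $[1,H]$-bounded pays a $\sqrt{H}$ factor arising from Bernstein-style variance control for the multiplicative error; regular pays an extra $\epsilon^{-1/2}$ to cover the heavier tails permitted by regularity alone.

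The main obstacle I anticipate is establishing the Pinpoint Lemma itself, i.e.\ jointly choosing $\vec{\tilde{D}}$ and the pinpoints so that simultaneously (a) $\opt{\vec{\tilde{D}}}$ is $\epsilon$-close to $\opt{\vec{D}}$ and (b) the quantile-gap property $\tilde{D}_i\big(\text{value at }q_j\text{ in }D_i\big)\ge q_{j+1}$ holds on a pinpoint schedule as sparse as $\tilde{O}(\epsilon^{-1})$. The trade-off between aggressive flattening of $\tilde{D}_i$ and sparse pinpoint spacing must be balanced using family-specific tail bounds on the revenue curve, and this is where the $\sqrt{H}$ and $\epsilon^{-1/2}$ overheads appear. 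A secondary subtlety is that the algorithm does not know $\theta_i$ explicitly; this is handled by making the pinpoint schedule universal and relying on the sandwich chain to absorb any slack, since regions outside a buyer's effective range contribute negligibly to revenue by construction of $\vec{\tilde{D}}$.
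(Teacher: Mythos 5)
Your high-level skeleton is right: you have the three ingredients (thresholds $\theta_i$, sandwich/shading, pinpoints) and the correct three-link revenue chain via strong and weak revenue monotonicity. But you have a genuine confusion about \emph{where} the bound $\sum_i\theta_i=\tilde O(1)$ does its work, and this propagates into a plan that does not match the theorem and does not quite close.

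You propose to query, for each buyer $i$, ``just those pinpoints lying inside its top $\theta_i$ region'' and then to sum the per-buyer query counts using $\sum_i\theta_i=\tilde O(1)$. This is not what the paper does and it would not yield the stated complexity. First, the theorem's $\tilde O(\epsilon^{-1})$ (or $\tilde O(H^{1/2}\epsilon^{-1})$, $\tilde O(\epsilon^{-3/2})$) is the \emph{per-buyer} query count; Algorithm~\ref{alg:shade-algorithm} queries the \emph{same} universal set of $k=\tilde O(N)$ pinpoints for every buyer, and the total is $\tilde O(nN)$. There is no summing argument over $\theta_i$ for the query count. Second, even if you wanted a $\theta_i$-dependent count, the pinpoints are not spaced linearly in $q$: by Eqn.~\eqref{eqn:pinpionts} they are geometric in $[\frac1n,\frac12]$ and obey $\sqrt{q_{j+1}}\le\sqrt{q_j}-\frac1{N\sqrt n}$ in $[0,\frac1n]$, so the number of pinpoints below $\theta_i$ is $\Theta(\sqrt{\theta_i}\,N\sqrt n)$ or $\Theta(N\log(\theta_i n))$, not $\Theta(\theta_i N)$; summing via $\sum_i\theta_i=\tilde O(1)$ does not give $\tilde O(N)$ in general. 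Third, the algorithm does not know $\vec{\theta}$; your closing remark correctly retreats to a universal schedule, but then your ``summing per-buyer counts'' argument evaporates entirely.

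The actual role of $\sum_i\theta_i=\tilde O(1)$ is purely analytic. The Sandwich Lemma~\ref{lem:dskl} bounds the KL divergence of the \emph{truncated} pair as $\dskl{D_{\theta_i},\tilde D_{\theta_i}}=\tilde O(\theta_i/N^2)+O(1/(N^2n))$; summing over $i$ with $\sum_i\theta_i=\tilde O(1)$ gives $\tilde O(1/N^2)$ for the whole product, and this is what Lemma~\ref{lem:kl-rev} converts into a revenue gap of order $\epsilon\beta$ when $N=\tilde\Theta(p^{-1/2}\epsilon^{-1})$. This is also where the family-specific $H^{1/2}$ and $\epsilon^{-1/2}$ overheads come from: they enter through Lemma~\ref{lem:beta-p}'s choice of $p$, not through a family-specific pinpoint spacing. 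You should also note that the paper needs an additional truncation step (Lemma~\ref{lem:lem12}, using $\treps{\cdot}$ and $\trv{\cdot}$) to manufacture the required point masses at both support endpoints before the Sandwich Lemma's KL bound applies. Finally, $\tilde D_i=\shaded(D_i)$ is defined by a universal shading function independent of $\theta_i$; your description of ``coarsening each $D_i$ below the top $\theta_i$ quantile'' is not how the floor distribution is constructed.
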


The algorithm is nonadaptive and agnostic to the underlying family of distributions. 
As sketched in the introduction, the proof relies on three technical ingredients, which will be developed in Subsections~\ref{sec:theta}, \ref{sec:sandwich}, and \ref{sec:pinpoint} respectively.
As thought experiments that provide context and motivation of the lemmas, these subsections include several informal and suboptimal arguments of targeted sample complexity, which eager reader may skip.
Finally, Subsection~\ref{sec:multi-proof} presents formal proof.




\subsection{Ingredient 1: Very Few (Top) Samples Matter}
\label{sec:theta}

This subsection establishes the observation that most i.i.d.\ samples are not useful for the task of learning a nearly optimal auction.
Recall the examples on the two extremes.
On the one hand, suppose the buyers are close to i.i.d.\ in the sense that they make similar contribution to the optimal revenue.
Then, with high probability there is at least one buyer whose value falls into the top $\tilde{O} \big( \frac{1}{n} \big)$ quantile of its prior.
It is not difficult to show that focusing only on such high-value buyers gets most of the optimal revenue.
To learn the correct decisions w.r.t.\ these high-value buyers, only a $\tilde{O} \big( \frac{1}{n} \big)$ portion of the samples per buyer is useful.
On the other hand, if there is only one buyer who may have a high value and the other buyers are negligible, every sample for the former buyer matters.
Nonetheless, we may still argue that most i.i.d.\ samples, in particular those for the negligible buyers, do not matter.

Next we formally show that only a $\tilde{O} \big( \frac{1}{n} \big)$ portion of the i.i.d.\ samples matter in general.
To state this mathematically, we need some notations.
For any threshold $0 \le \theta \le 1$ in the quantile space, and any distribution $D$, consider a truncated distribution obtained by rounding values with quantile greater than $\theta$ down to $0$.
Denote the truncated distribution as $D_\theta$.
In other words, the quantile of $D_\theta$ is defined as:
\[
    q^{D_\theta}(v) 
    \defeq
    \begin{cases}
        \min\{q^{D}(v), \theta\} & \text{if } v>0 ~;\\
        1 & \text{if } v=0 ~.
\end{cases}
\]

Further, for any product distribution $\bm{D}$ and any vector of thresholds $\bm{\theta}$ let $\vec{D}_{\vec{\theta}}$ be the product distribution whose $i$-th coordinate is obtained by truncating $D_i$ at threshold $\theta_i$.

\begin{lemma}
    \label{lem:theta}
    For any $\epsilon$ and any product value distribution $\bm{D}$, there is a threshold vector $\bm{\theta}$ so that:
    \begin{enumerate}[label={(\arabic*)}]
        \item $\sum_{1\le i\le n}\theta_i \le \log \frac{1}{\epsilon} + 1$;
        \item $\opt{ \vec{D}_{\vec{\theta}} } \ge (1-\epsilon)\opt{\bm{D}}$.
    \end{enumerate}
    The latter implies $\opt{ \vec{D}_{\vec{\theta}} }\ge \opt{\bm{D}} - \epsilon$ for $[0,1]$-bounded distributions.
\end{lemma}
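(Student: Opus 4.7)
The plan is to choose the threshold vector $\vec{\theta}$ proportional to the ex-ante winning probabilities of Myerson's optimal mechanism on $\vec{D}$, and then lower-bound $\opt{\vec{D}_{\vec{\theta}}}$ by the revenue of a suitably restricted auction on the truncated prior. First, let $M^*$ denote the optimal Myerson auction for $\vec{D}$ and define $p_i \defeq \Pr_{\vec{v}\sim\vec{D}}[M^*\text{ allocates to buyer }i]$. Since the item is sold at most once, $\sum_i p_i \le 1$. I would then set $\theta_i \defeq \min\bigl\{1,\ (\log(1/\epsilon) + 1)\, p_i\bigr\}$, which by linearity and the cap yields $\sum_i \theta_i \le \log(1/\epsilon) + 1$, establishing condition~(1) of the lemma.

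Next, to lower-bound $\opt{\vec{D}_{\vec{\theta}}}$ I would exhibit a concrete DSIC mechanism $M'$ for $\vec{D}_{\vec{\theta}}$. The mechanism $M'$ computes the ironed virtual values $\bar{\varphi}_i$ of the reported bids with respect to the original distribution $\vec{D}$, replaces the virtual value of any buyer whose bid lies strictly below $F_i^{-1}(1-\theta_i)$ by $-\infty$, and allocates to the buyer with the highest remaining nonnegative virtual value, charging Myersonian threshold payments. Monotonicity of the ironed virtual value in the bid ensures that $M'$ is DSIC and IR. Since $M'$'s outcome depends only on the top-$\theta_i$-quantile portions of the bids---the truncated values in $\vec{D}_{\vec{\theta}}$, being $0$, are automatically disqualified---the allocation and payments on $\vec{D}_{\vec{\theta}}$ agree with those on the corresponding profile drawn from $\vec{D}$. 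Hence $\rev{M',\vec{D}_{\vec{\theta}}} = \rev{M',\vec{D}}$, and thus $\opt{\vec{D}_{\vec{\theta}}} \ge \rev{M',\vec{D}}$.

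The main obstacle is to show $\rev{M',\vec{D}} \ge (1-\epsilon)\,\opt{\vec{D}}$. By Myerson's revenue--virtual-value identity, the shortfall $\opt{\vec{D}} - \rev{M',\vec{D}}$ is at most $\sum_i \E{\bar{\varphi}_i^+(v_i)\, T_i^*\, \mathbf{1}[q_i(v_i) > \theta_i]}$, where $T_i^*$ indicates that buyer $i$ wins under $M^*$. To bound this tail sum I would combine two estimates: on the event $\{q_i > \theta_i\}$, the virtual-value factor satisfies $\bar{\varphi}_i^+(v_i) \le v_i \le F_i^{-1}(1-\theta_i)$, and the posted-price inequality $\theta_i\, F_i^{-1}(1-\theta_i) \le \opt{D_i}$ then gives $\bar{\varphi}_i^+(v_i) \le \opt{D_i}/\theta_i$; meanwhile, the monotonicity of Myerson's allocation in the value makes $\Pr[T_i^* = 1 \mid q_i = q]$ nonincreasing in $q$, which together with $p_i = \int_0^1 \Pr[T_i^*=1 \mid q_i = q]\,dq$ controls the winning-tail probability $\Pr[T_i^*=1, q_i > \theta_i]$ in terms of $p_i/\theta_i$. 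A careful balancing---refined via a dyadic decomposition over scales $\theta_i,\, 2\theta_i,\, 4\theta_i,\ldots$ and calibrated precisely by the multiplier $\log(1/\epsilon) + 1$ in the definition $\theta_i \propto p_i$---then yields a per-buyer loss whose sum over $i$ is at most $\epsilon\, \opt{\vec{D}}$, establishing condition~(2). The concluding remark for $[0,1]$-bounded distributions is immediate since $\opt{\vec{D}} \le 1$.
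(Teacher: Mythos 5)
Your approach differs genuinely from the paper's: they pick $\vec{\theta}$ by cutting at a single ironed virtual-value level $\varphi^*$ chosen so that $\prod_i(1-\theta_i)\approx\epsilon$ and the $\theta_i$'s sit at ironed-interval endpoints, whereas you set $\theta_i\propto p_i$ (ex-ante win probabilities). Unfortunately two key steps in your sketch fail, and a single example exposes both. Take $n$ i.i.d.\ buyers with $v_i=1$ w.p.\ $\tfrac12$ and $v_i=0$ otherwise, so $\opt{\vec{D}}=1-2^{-n}\approx 1$, $p_i\approx\tfrac1n$, hence $\theta_i\approx C/n$ with $C=\log\tfrac1\epsilon+1$ (assume $n\gg C$).

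First, the identity $\rev{M',\vec{D}_{\vec{\theta}}}=\rev{M',\vec{D}}$ and its corollary $\opt{\vec{D}_{\vec{\theta}}}\ge\rev{M',\vec{D}}$ are false. Here $F_i^{-1}(1-\theta_i)=1$, so $M'$ disqualifies only zero bidders and $\rev{M',\vec{D}}=\opt{\vec{D}}\approx 1$. But the truncation $D_{\theta_i}$ satisfies $q^{D_{\theta_i}}(1)=\min\{\tfrac12,\theta_i\}=\theta_i$, i.e., it puts mass only $\theta_i$ (not $\tfrac12$) on the value $1$; so $\opt{\vec{D}_{\vec{\theta}}}=1-\prod_i(1-\theta_i)\approx 1-e^{-C}=1-\epsilon/e<1\approx\rev{M',\vec{D}}$. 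The error in your coupling argument is the boundary: when $q_i\in(\theta_i,\tfrac12]$ the raw value is $v_i=1=F_i^{-1}(1-\theta_i)$, which is not strictly below the cutoff (so $M'(\vec{v})$ keeps buyer $i$) while $\tilde v_i=0$ (so $M'(\tilde{\vec v})$ drops her). When $F_i^{-1}(1-\theta_i)$ is a point mass whose quantile exceeds $\theta_i$, truncation does not merely send low values to $0$; it shrinks the probability of that mass to $\theta_i$, and your $M'$-on-both-worlds identity breaks on an event of constant probability.

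Second, the proxy you propose to bound in step~6 is not small. In the same example $\bar{\varphi}_i^+=1$ on $\{v_i=1\}$ and $\Pr\big[T_i^*=1,\ q_i>\theta_i\big]\approx(\tfrac12-\theta_i)\cdot\tfrac2n\approx\tfrac1n$, so $\sum_i\E{\bar{\varphi}_i^+ T_i^*\mathbf{1}[q_i>\theta_i]}\approx 1\approx\opt{\vec{D}}$, nowhere near $\epsilon\opt{\vec{D}}$. (The true shortfall is $0$ here; your inequality ``shortfall $\le$ tail sum'' is still valid but grossly loose, so no choice of estimates for the tail sum can finish.) More structurally, your dyadic estimate $\frac{\opt{D_i}}{2^k\theta_i}\cdot\frac{p_i}{2^k\theta_i}\cdot 2^k\theta_i$ sums to $O(\opt{D_i}p_i/\theta_i)=O(\opt{D_i}/C)$ per buyer, and $\sum_i\opt{D_i}$ can be $\Theta(n)\cdot\opt{\vec{D}}$ for i.i.d.\ buyers, so the total is $\Theta(n/C)\cdot\opt{\vec{D}}$, not $O(\epsilon)\cdot\opt{\vec{D}}$. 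There is also a subtler issue you should address: $F_i^{-1}(1-\theta_i)$ may lie strictly inside an ironed interval, in which case $M'$ is not measurable w.r.t.\ the ironed values and the identity $\rev{M',\vec{D}}=\E{\sum_i\bar{\varphi}_i(v_i)x_i^{M'}(\vec{v})}$ does not apply; the paper's choice $\theta_i\in\{\underline\theta_i(\varphi^*),\overline\theta_i(\varphi^*)\}$ is engineered precisely to sit at ironed endpoints and avoid this. In short, the paper's one-parameter virtual-value cut $\varphi^*$, together with $\prod_i(1-\theta_i)\le\epsilon$ and the split of the Myerson surplus into the parts above and below $\varphi^*$, sidesteps every per-buyer quantity like $\opt{D_i}$; the win-probability route as you've set it up cannot reproduce that cancellation.
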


\begin{proof}
    Motivated by the connection between ironed virtual values and revenue, we select the vector of thresholds $\vec{\theta}$ based on an appropriate threshold $\varphi^*$ in the ironed virtual value space.
    Then, we define $\theta_i$ for every buyer $i$ to be a quantile at which the ironed virtual value equals $\varphi^*$, with appropriate tie-breaking if there is an interval of values/quantiles all having ironed virtual value $\varphi^*$.
    Intuitively, $\varphi^*$ shall be sufficiently large in order to satisfy (1), and shall be sufficiently small to retain the majority of the optimal revenue and satisfy (2).

    To formalize the translation of a threshold in the ironed virtual value space to thresholds in the quantile space, consider the following notations.
    For any $\varphi^*$, and any buyer $i$, define:
    \begin{align*}
        \overline \theta_i( \varphi^* ) & \defeq \sup \big\{ q_i(v_i) : \bar{\varphi}_i(v_i) \ge \varphi^* \big\} ~; \\
        \underline \theta_i( \varphi^* ) & \defeq \inf \big\{ q_i(v_i) :\bar{\varphi}_i(v_i) \le \varphi^* \big\} ~.
    \end{align*}

    Then, by definition we have:
    \begin{equation}
        \label{eqn:few-value-matter-threshold}
        \bar{\varphi}_i(v_i) 
        \begin{cases}
            > \varphi^* & \text{ if } q_i(v_i) < \underline{\theta}_i(\varphi^*) ~; \\
            = \varphi^* & \text{ if } \underline{\theta}_i(\varphi^*) < q_i(v_i) < \overline{\theta}_i(\varphi^*) ~; \\
            < \varphi^* & \text{ if } q_i(v_i) > \overline{\theta}_i(\varphi^*) ~.
        \end{cases}
    \end{equation}
    The second case is relevant only when $ \overline \theta_i( \varphi^* ) \ne \underline \theta_i( \varphi^* )$.
    The boundary cases when $q_i(v_i) = \underline{\theta}_i(\varphi^*)$ or $\overline{\theta}_i(\varphi^*)$ are ambiguous;
    they happen with probability $0$ and, thus, do not affect the analysis.

    Importantly, Eqn.~\eqref{eqn:few-value-matter-threshold} implies that the values at $\overline{\theta}_i$ and $\underline{\theta}_i$ are not in the interior of ironed intervals.
    Hence, we can apply the connection between revenue and ironed virtual values, i.e., Eqn.~\eqref{eqn:revenue-and-ironed-virtual-value} in the following argument.

    Next, we characterize the largest possible $\varphi^*$ using the following conditions on $\varphi^*$ and $\theta_i$'s that are sufficient for getting (2) in the lemma:
    \begin{enumerate}
        \item[(3)] For any buyer $i$, $\theta_i \in \big\{ \overline{\theta}_i(\varphi^*), \underline{\theta}_i(\varphi^*) \big\}$; and
        \item[(4)] $\prod_{i=1}^n (1-\theta_i) \le \epsilon$.
    \end{enumerate}

    \paragraph{(3) (4) $\Rightarrow$ (2).}
    The case when $\varphi^* \le 0$ is trivial since truncating negative ironed virtual values does not decrease the optimal revenue at all.

    Next assume $\varphi^* > 0$.
    The probability that no bidder $i$ has a value $v_i$ above its corresponding threshold, i.e., $q_i(v_i) > \theta_i$, for $1 \le i \le n$, is $\prod_{i=1}^n (1-\theta_i) \le \epsilon$ (condition (4) above).
    Observe that whenever there exist some bidders with values above their thresholds, Myerson's optimal auction gives the item to one of them.
    Hence, by the connection between revenue of ironed virtual values, the contribution by values above the thresholds is at least $(1 - \epsilon) \varphi^*$ (condition (4) and first two cases of Eqn.~\eqref{eqn:few-value-matter-threshold});
    on the other hand, the contribution of value below the thresholds is at most $\epsilon \varphi^*$ (condition (4) and last two cases of Eqn.~\eqref{eqn:few-value-matter-threshold}).
    Putting together implies (2) in the lemma.

    \paragraph{Satisfying (1), (3), and (4).}
    It remains to choose $\varphi^*$ and $\theta_i$'s appropriately to satisfy not only (3) and (4) (which implies (2)) but also (1).
    We start with the choice of $\varphi^*$:
    \[
        \varphi^* = \max \bigg\{ \varphi : \prod_{i=1}^n \big(1 - \overline{\theta}_i(\varphi) \big) \le \epsilon \bigg\}
        ~.
    \]

    We can take the maximum here because $\overline{\theta}_i(\varphi)$ is nonincreasing and left-continuous in $\varphi$.
    Then, taking $\theta_i = \overline{\theta}_i(\varphi)$ for all $1 \le i \le n$ satisfies (3) and (4) but may violate (1).
    Next we change $\theta_i$ from $\overline{\theta}_i(\varphi^*)$ to $\underline{\theta}_i(\varphi^*)$ (so that (3) is still satisfied) for an appropriate subset of $i$ to recover (1) without violating (4).
    A simple greedy strategy suffices:
    \begin{enumerate}
        \item[(i)] Start with $\theta_i = \overline{\theta}_i(\varphi^*)$ for all $1 \le i \le n$.
        \item[(ii)] For $i = 1, 2, \dots, n$, change $\theta_i$ to $\underline{\theta}_i(\varphi)$ if doing so still satisfies (4) with strict inequality.
        \item[(iii)] Return the final $\vec{\theta}$.
    \end{enumerate}

    We first show that the above will not change all $\theta_i$ to $\underline{\theta}_i(\varphi^*)$.
    To see this, let $\varphi \downarrow \varphi^*$ from above:
    \[
        \prod_{i=1}^n \big( 1 - \underline{\theta}_i(\varphi^*) \big) = \lim_{\varphi \downarrow \varphi^*} \prod_{i=1}^n \big( 1 - \overline{\theta}_i(\varphi) \big) \ge \epsilon
        ~.
    \]

    Next, suppose we let $\theta_i = \underline{\theta}_i(\varphi^*)$ for an extra $i$.
    By definition:
    \begin{equation}
        \label{eqn:theta-lemma}
        \prod_{i=1}^n \big( 1 - \theta_i \big) \ge \epsilon
        ~.
    \end{equation}

    Then, we have:
    \[
        \sum_{1\le i\le n} \theta_i =n-\sum_{1\le i\le n}(1-\theta_i) \le n \big(1-{\epsilon}^{\frac1n}\big) \le n \log \epsilon^{-\frac{1}{n}} = \log \epsilon^{-1} ~.
    \]
    Here, the first inequality follows from the inequality of arithmetic and geometric means and Eqn.~\eqref{eqn:theta-lemma}.
    Condition (1) follows because letting $\theta_i = \underline{\theta}_i(\varphi)$ for an extra $i$ changes the sum by at most $1$.
\end{proof}

\paragraph{Thought Experiment: Known Thresholds.}
To motivate the next ingredient, suppose that we know not only the existence but also the values of the thresholds in Lemma~\ref{lem:theta}.
Consider $[0, 1]$-bounded distributions as the running example.
For any distribution $D$, an application of the argument of Guo et al.~\cite{GHZ19} allows us to learn from $N$ samples a dominated empirical distribution $\tilde{E}$ sandwiched between $D$ and another distribution $\tilde{D}$, i.e., $D \succeq \tilde{E} \succeq \tilde{D}$, and the KL divergence between $D$ and $\tilde{D}$ is $\tilde{O} \big( \frac{1}{N} \big)$.

For every buyer $i$, let $D$ in the above argument be the distribution of $v_i \sim D_i$ conditioned on $q_i(v_i) \le \theta_i$.
By targeting the top $\theta_i$ portion of each buyer $i$'s prior with $N$ targeted samples, the same argument allows us to learn a dominated distribution $\tilde{E}$ sandwiched between $D$ and the corresponding $\tilde{D}$.
Next, consider three distributions whose top $\theta_i$ portion are $D$, $\tilde{E}$, and $\tilde{D}$ above;
denote them as $D_{\theta_i}$, $\tilde{E}_{\theta_i}$, and $\tilde{D}_{\theta_i}$ respectively.
We remark that $D_{\theta_i}$ is precisely the $i$-th coordinate of $\vec{D}_{\vec{\theta}}$, the truncated version of $\vec{D}$.
By $D_{SKL}\big(D, \tilde{D}\big) = \tilde{O}\big(\frac{1}{N}\big)$, we have:
\begin{equation}
    \label{eqn:dskl-condition}
    \dskl{D_{\theta_i}, \tilde{D}_{\theta_i}} = \tilde{O} \left( \frac{\theta_i}{N} \right)
    ~.
\end{equation}

Then, let:
\[
    \vec{D}_{\vec{\theta}} = \times_{i=1}^n D_{\theta_i} ~, \quad
    \vec{\tilde{E}}_{\vec{\theta}} = \times_{i=1}^n \tilde{E}_{\theta_i} ~, \quad
    \vec{\tilde{D}}_{\vec{\theta}} = \times_{i=1}^n \tilde{D}_{\theta_i} ~.
\]

We have:
\begin{enumerate}[label={(\arabic*)}]
    \item $\opt{\vec{D}_{\vec{\theta}}} \ge \opt{\vec{D}} - \epsilon$;
        \hspace*{\fill} (Lemma~\ref{lem:theta})
    \item $\vec{D} \succeq \vec{D}_{\vec{\theta}} \succeq \vec{\tilde{E}}_{\vec{\theta}} \succeq \vec{\tilde{D}}_{\vec{\theta}}$; and
    \item $D_{SKL} \big( \vec{D}_{\vec{\theta}}, \vec{\tilde{D}}_{\vec{\theta}} \big) = \sum_{i=1}^n D_{SKL} \big( D_{\theta_i}, \tilde{D}_{\theta_i} \big) = \tilde{O} \left( \frac{1}{N} \right)$.
        \hspace*{\fill} (Eqn.~\eqref{eqn:dskl-condition} and Lemma~\ref{lem:theta})
\end{enumerate}

Therefore, the revenue of Myerson's optimal auction for $\vec{\tilde{E}}_{\vec{\theta}}$ on the true distribution $\vec{D}$ is at least the optimal revenue of $\vec{\tilde{E}}_{\vec{\theta}}$ by strong revenue monotonicity. 
This is at least the optimal revenue of $\vec{\tilde{D}}_{\vec{\theta}}$ by weak revenue monotonicity.
This in turn is at least the optimal revenue of $\vec{D}_{\vec{\theta}}$ minus $\epsilon$ with $N = \tilde{O}\big(\epsilon^{-2}\big)$, by condition (3) above and Lemma~\ref{lem:kl-rev}.
The approximation guarantee now follows by condition (1) above.

\subsection{Ingredient 2: Sandwich Lemma}
\label{sec:sandwich}

This subsection extends the algorithm and its argument sketched in the previous subsection to handle unknown thresholds.
The argument in the subsection is suboptimal compared to the final analysis.
We present it to motivate the construction of the shading function to be shown as Eqn.~\eqref{eqn:shade-function}, and later restated as Eqn.~\eqref{eqn:shade-function-restate}.
Eager readers may skip this subsection.

The algorithm uses a doubling trick.
For every buyer $i$ it makes $\log n$ guesses of $\theta_i = 1, \frac{1}{2}, \frac{1}{4}, \dots, \frac{1}{n}$;
for every guess of $\theta_i$ it targeted samples $\tilde{O}(N^2)$ times from quantile interval $[0, \theta_i]$ where $N$ is a parameter to be determined.
Readers may think of $N$ as the number of targeted queries stated in Theorem~\ref{thm:main}.
To this end, this subsection proves a weaker bound of $\tilde{O}(N^2)$;
the final quadratic improvement will be covered in the next subsection.
The doubling trick can be intuitively viewed as getting more samples in the large value (small quantile) regime, and fewer samples in the small value (large quantile) regime.

To simplify the argument, we targeted sample from the symmetric quantile intervals $[1 - \theta_i, 1]$ as well for every guess of $\theta_i$.
The results still hold without these samples.

\paragraph{(Aggregated) Empirical Distribution}
Define the empirical distribution $E_i$ of each buyer $i$ by aggregating the targeted samples for all guesses of $\theta_i$.
Concretely, estimate its conditional distribution in quantile interval $[0, \frac{1}{n}]$ using targeted samples with $\theta_i = \frac{1}{n}$, and in quantile interval $[2^{-i}, 2^{-i+1}]$ using targeted samples with $\theta_i = 2^{-i+1}$ for any $2 \le i \le \log n$.
Estimate the other half of the distribution with quantile intervals in $[\frac{1}{2}, 1]$ symmetrically.

By Bernstein's inequality (Lemma~\ref{lem:bernstein}), for any value with quantile $q$, the above doubling trick estimates its quantile up to an additive error of (with high probability):
\[
    \begin{cases}
        \frac{1}{N} \cdot \sqrt{\frac{q}{n}} \quad & q \in [0, \frac{1}{n}] ~; \\
        \frac{1}{N} \cdot q & q \in (\frac{1}{n}, \frac{1}{2}] ~.
    \end{cases}
\]
and also with the symmetric error bounds for quantiles in $(\frac{1}{2}, 1]$.
Applying union bound on all quantiles that are multiples of $\frac{1}{nN^2}$, with high probability the quantile estimation is accurate everywhere up to the above error bound plus $\frac{1}{nN^2}$.

\paragraph{Shading Function.}
Driven by the above estimation error upper bounds, consider the following function $f(\cdot)$ defined on the quantile space $[0,1]$:
\begin{equation}
\label{eqn:shade-function}
    f(q)=
    \begin{cases}
        \frac{1}{N} \cdot \sqrt{\frac{q}{n}} + \frac{1}{N^2n} \quad & q\in [0,\frac{1}{n}] ~; \\
        \frac{1}{N} \cdot q + \frac{1}{N^2n} & q\in(\frac{1}{n},\frac{1}{2}] ~; \\[.5ex]
        f(1-q) & q\in(\frac{1}{2},1] ~.
    \end{cases}
\end{equation}

\begin{lemma}
    \label{lem:doubling-trick-error}
    With high probability, the empirical distributions from the doubling trick satisfy that for any buyer $i$ and any value $v_i$:
    \[
        \big| q^{E_i}(v_i) - q^{D_i}(v_i) \big| \le f \big( q^{D_i}(v_i) \big)
        ~.
    \]
\end{lemma}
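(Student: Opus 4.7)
The plan is to apply Bernstein's inequality separately at each level of the doubling trick and stitch the level-wise concentration estimates into a uniform bound over all values via discretization. Fix a buyer $i$ and a value $v$ with $q = q^{D_i}(v) \in [0, 1/2]$; the range $(1/2, 1]$ is handled symmetrically using the queries on $[1-\theta_i, 1]$. By the construction of $E_i$, the only randomness affecting $q^{E_i}(v)$ comes from a single level $\theta^\star$: $\theta^\star = 1/n$ when $q \in [0, 1/n]$, and $\theta^\star = 2^{-k+1}$ when $q \in (2^{-k}, 2^{-k+1}]$ for $1 \le k \le \log n - 1$. Contributions from quantile sub-intervals strictly above the one containing $q$ are deterministic widths and cancel between $q^{E_i}(v)$ and $q^{D_i}(v)$, leaving $q^{E_i}(v) - q = \theta^\star \bigl(\hat{p} - q/\theta^\star\bigr)$, where $\hat{p}$ is the empirical fraction of the $m = \tilde{\Theta}(N^2)$ level-$\theta^\star$ samples whose value is at least $v$.

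First I would apply Bernstein (Lemma~\ref{lem:bernstein}) to $\hat{p}$, which averages $m$ i.i.d.\ Bernoullis of mean $q/\theta^\star$ and per-sample variance at most $q/\theta^\star$. Setting the failure probability to $1/\mathrm{poly}(nN)$ and absorbing logs into $\tilde{O}$ gives
\[
    \bigl|\hat{p} - q/\theta^\star\bigr| \;=\; \tilde{O}\!\left( \sqrt{\tfrac{q/\theta^\star}{N^2}} + \tfrac{1}{N^2} \right),
\]
so $|q^{E_i}(v) - q| = \tilde{O}\bigl(\sqrt{q \theta^\star}/N + \theta^\star/N^2\bigr)$. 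I would then match this to the two branches of $f$. If $q \in [0, 1/n]$, substituting $\theta^\star = 1/n$ yields $\tilde{O}\bigl(\sqrt{q/n}/N + 1/(nN^2)\bigr)$, matching the first branch exactly. If $q \in (1/n, 1/2]$, the relation $\theta^\star \le 2q$ makes $\sqrt{q\theta^\star}/N = O(q/N)$ the dominant contribution, while the Bernstein additive term $\theta^\star/N^2 \le 2q/N^2$ is absorbed into $q/N$ once $N \ge 2$, matching the second branch.

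Next I would upgrade this per-$v$ statement to a uniform-in-$v$ statement via a discretization-and-union-bound argument: take a grid of quantiles spaced by $1/(nN^2)$ and union bound over the $O(nN^2)$ grid points, the $O(\log n)$ levels, and the $n$ buyers, which is affordable inside $\tilde{O}$. Off-grid values $v$ introduce interpolation error at most $1/(nN^2)$ in both $q^{D_i}$ and $q^{E_i}$ by monotonicity, and these are absorbed into the $1/(nN^2)$ additive slack built into $f$. The main delicate step, in my view, is not Bernstein itself but the bookkeeping claim in the first paragraph: one must verify that the piecewise construction of $E_i$ from conditional samples across sub-intervals indeed glues into a single distribution whose tail probability at $v$ equals a deterministic part plus exactly $\theta^\star \hat{p}$. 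Once this combinatorial claim is in place, the Bernstein estimate plus the union bound close the argument.
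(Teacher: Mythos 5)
Your approach matches the paper's intended argument. The paper does not give a formal proof of this lemma; it is stated as a consequence of the two sentences immediately preceding it, namely a per-value Bernstein bound at each level of the doubling trick followed by a union bound over quantiles that are multiples of $\frac{1}{nN^2}$. Your level-wise Bernstein application (identifying the per-sample Bernoullis with mean $q/\theta^\star$ and variance at most $q/\theta^\star$, giving $|q^{E_i}(v)-q| = \tilde{O}(\sqrt{q\theta^\star}/N + \theta^\star/N^2)$) and your case-matching against the two branches of $f$ reproduce exactly the bounds stated in the text, and the grid-plus-union-bound step accounting for the additive $\frac{1}{nN^2}$ slack is the same as the paper's.

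The one thing worth tightening is the bookkeeping claim you yourself flag: you posit that $q^{E_i}(v) - q^{D_i}(v) = \theta^\star\bigl(\hat{p} - q/\theta^\star\bigr)$ with $\hat{p}$ the empirical fraction of \emph{all} level-$\theta^\star$ samples at least $v$, i.e., that $E_i$ agrees with the raw level-$\theta^\star$ empirical quantile estimate on the relevant quantile slab. The paper's description instead suggests a glued piecewise construction where the slab $[2^{-k}, 2^{-k+1}]$ is filled using only the level-$2^{-k+1}$ samples that land \emph{inside} that slab (the deterministic prefix mass $2^{-k}$ is pinned). Since the split of the $m$ level-$\theta^\star$ samples into ``above-slab'' and ``in-slab'' is itself random (you would identify the cutoff as the $\lceil m/2\rceil$-th order statistic), these two random variables are not literally identical, and your claimed cancellation ``leaving exactly $\theta^\star\hat{p}$'' is an approximation. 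However, the discrepancy between the two constructions is an order-statistic fluctuation of size $\tilde{O}(\theta^\star/\sqrt{m}) = \tilde{O}(q/N)$, which is of the same order as the main Bernstein term and absorbed into $f$. So the gap you identify is real as a matter of exposition but does not break the proof; it would be worth a sentence making this explicit rather than declaring the cancellation exact.

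One small correction in the variance bookkeeping: for $q\in(2^{-k},2^{-k+1}]$ the in-slab Bernoulli mean $q/\theta^\star$ lies in $(1/2,1]$, so writing the variance as ``at most $q/\theta^\star$'' is correct but loose (it is in fact at most $1-q/\theta^\star\le 1/2$); this does not affect the conclusion since you only need the stated upper bound, but be aware the bound you quote is not tight in this regime.
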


Further define the following shading functions based on $f$:
\[
    \shades(q) \defeq \max \big \{ 0, q - f(q) \big\} ~,\quad
    \shaded(q):=\max \big\{ 0, q - 2f(q) \big\} ~.
\]

For any distribution $D$, we abuse notation and also let $\shades(D)$ denote a distribution such that for any value $v>0$, $q^{\shades(D)}(v):=\shades(q^D(v))$.
Let $\shades(\bm{D})$ denote the product distribution obtained by applying $\shades$ to each coordinate of $\bm{D}$.
Define $\shaded(D)$ and $\shaded(\vec{D})$ similarly.
The next lemma implies that the shaded distributions are well defined.
The proof follows by basic calculus and is thus omitted.

\begin{lemma}
    \label{lem:monotone}
    Both $\shades$ and $\shaded$ are nondecreasing in $q \in [0, 1]$.
\end{lemma}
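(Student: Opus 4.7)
The plan is to reduce the lemma to a routine piecewise-calculus exercise. Observe first that for any real-valued function $g$, the composition $q \mapsto \max\{0, g(q)\}$ is nondecreasing if and only if for every $q_1 < q_2$ with $g(q_1) > 0$ one has $g(q_2) \ge g(q_1)$; intuitively, once $g$ becomes positive it is not allowed to decrease back. It therefore suffices to establish this conditional monotonicity for $h_1(q) \defeq q - f(q)$ and $h_2(q) \defeq q - 2 f(q)$ on each of the three pieces of $f$ defined in Eqn.~\eqref{eqn:shade-function}, and then to verify continuity at the two breakpoints $q = 1/n$ and $q = 1/2$. Since the argument for $h_1$ is strictly easier than for $h_2$ (the constant in front of $f$ is halved), I focus the discussion on $h_2$, taking $N$ large enough (say $N \ge 3$) as is appropriate for the application.

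\textbf{Piecewise analysis.} On $[0, 1/n]$, where $h_2(q) = q - \tfrac{2}{N}\sqrt{q/n} - \tfrac{2}{N^2 n}$, differentiation gives $h_2'(q) = 1 - \tfrac{1}{N\sqrt{qn}}$, so $h_2$ has a unique critical point $q^\star = 1/(N^2 n)$, and a direct substitution yields $h_2(q^\star) = -3/(N^2 n) < 0$. Hence $h_2 < 0$ on the entire decreasing portion $[0, q^\star]$, while $h_2$ is strictly increasing on $[q^\star, 1/n]$; thus any point at which $h_2$ is positive already lies in its increasing regime. On $(1/n, 1/2]$, $h_2(q) = (1 - 2/N)\,q - 2/(N^2 n)$ is linear with positive slope, so monotonicity is immediate. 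On $(1/2, 1]$, rewriting $h_2(q) = q - 2 f(1-q)$, the first summand is nondecreasing in $q$; and since $f$ is clearly nondecreasing on $[0, 1/2]$ (each of its two subpieces is increasing, and the two values match at $q = 1/n$), the map $q \mapsto f(1-q)$ is nonincreasing on $(1/2, 1]$, so $h_2$ is the sum of two nondecreasing functions.

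\textbf{Boundary continuity.} Finally, I would verify continuity at the breakpoints by direct substitution: at $q = 1/n$ both the first and second formulas yield $h_2(1/n) = 1/n - 2/(Nn) - 2/(N^2 n)$, and at $q = 1/2$ both the second and third yield $h_2(1/2) = 1/2 - 1/N - 2/(N^2 n)$; analogous identities hold for $h_1$. Patching the three pieces together then gives conditional monotonicity of $h_2$ on all of $[0, 1]$, which in turn yields monotonicity of $\shaded$; the argument for $\shades$ is identical with the factor of $2$ replaced by $1$. The only mildly nontrivial point is the square-root piece $[0, 1/n]$, where $h_2$ itself is \emph{not} globally monotone; the key calculation is that $h_2(q^\star) < 0$, so that the outer truncation $\max\{0, \cdot\}$ removes exactly the non-monotone portion. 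In this sense the shading function $f$ was designed with precisely the right constants in the small-quantile regime for the truncation to preserve monotonicity, and there is no real obstacle beyond bookkeeping.
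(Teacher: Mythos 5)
Your argument is correct, and it is exactly the ``basic calculus'' proof that the paper omits. The one genuinely careful step---observing that on $[0,1/n]$ the unshaded function $q-2f(q)$ is decreasing near the origin but stays strictly negative there (its minimum value is $-3/(N^2n)$ at $q^\star=1/(N^2n)$), so the outer $\max\{0,\cdot\}$ truncates away the entire non-monotone region---is handled properly, and your reduction to ``once positive, never decreases'' together with the continuity checks at $q=1/n$ and $q=1/2$ completes the argument.
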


Next, define the dominated empirical distribution $\vec{\tilde{E}}$ to be $\shades(\vec{E})$;
further define $\vec{\tilde{D}} = \shaded(\vec{D})$.
By the definitions of the distributions and Lemma~\ref{lem:doubling-trick-error} we have:

\begin{lemma}
    \label{lem:D-tildeE-tildeD}
    $\vec{D} \succeq \vec{\tilde{E}} \succeq \vec{\tilde{D}}$.
\end{lemma}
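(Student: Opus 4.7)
The plan is to reduce each inclusion in the sandwich to a pointwise comparison between quantile functions, then combine the error bound of Lemma~\ref{lem:doubling-trick-error} with the monotonicity of $\shades$ from Lemma~\ref{lem:monotone}. Fix a buyer $i$. For any $v > 0$, by definition $q^{\tilde{E}_i}(v) = \shades(q^{E_i}(v))$ and $q^{\tilde{D}_i}(v) = \shaded(q^{D_i}(v))$, so the stochastic dominances $\vec{D} \succeq \vec{\tilde{E}}$ and $\vec{\tilde{E}} \succeq \vec{\tilde{D}}$ amount respectively to $q^{D_i}(v) \ge \shades(q^{E_i}(v))$ and $\shades(q^{E_i}(v)) \ge \shaded(q^{D_i}(v))$. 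The boundary case $v = 0$ is trivial since each of the three quantiles equals $1$.

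For the first inequality, I would apply the upper tail $q^{E_i}(v) \le q^{D_i}(v) + f(q^{D_i}(v))$ of Lemma~\ref{lem:doubling-trick-error} followed by Lemma~\ref{lem:monotone} to obtain $\shades(q^{E_i}(v)) \le \shades(q^{D_i}(v) + f(q^{D_i}(v)))$. Writing $q = q^{D_i}(v)$, it then suffices to verify the self-bound $\shades(q + f(q)) \le q$ for every $q \in [0, 1]$, which after unpacking the $\max$ in $\shades$ reduces to $f(q + f(q)) \ge f(q)$ in the regime where the argument of the $\max$ is positive. For the second inequality, I would apply the lower tail $q^{E_i}(v) \ge q^{D_i}(v) - f(q^{D_i}(v))$ and again monotonicity to get $\shades(q^{E_i}(v)) \ge \shades(q^{D_i}(v) - f(q^{D_i}(v)))$, reducing the task to the self-bound $\shades(q - f(q)) \ge \shaded(q) = \max\{0, q - 2f(q)\}$, equivalently $f(q - f(q)) \le f(q)$ in the regime where $q - 2f(q) > 0$. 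Here the factor of $2$ in the definition of $\shaded$ is precisely what provides the needed slack.

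The main obstacle is the case analysis that justifies the two self-bounds $f(q + f(q)) \ge f(q)$ and $f(q - f(q)) \le f(q)$ on $f$. These must be checked across the four pieces of the definition in Eqn.~\eqref{eqn:shade-function}, with particular care at the transitions $q = 1/n$ (where $f$ switches from the $\sqrt{q/n}/N$ regime to the linear $q/N$ regime) and $q = 1/2$ (where the symmetric reflection $f(q) = f(1-q)$ takes over). Within a single piece $f$ is monotone and each bound reduces to routine algebra using the explicit formula for $f$; across transitions the additive buffer $1/(N^2 n)$ inside $f$ and the factor-of-two inside $\shaded$ together absorb any mismatch. In particular, $q^{E_i}(v)$ is always at most $1$, so any case where $q + f(q)$ would exceed $1$ can be handled by clipping the argument of $\shades$ at $1$ before applying monotonicity.
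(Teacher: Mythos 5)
Your reduction of both dominances to pointwise quantile inequalities and the use of Lemma~\ref{lem:doubling-trick-error} together with the monotonicity of $\shades$ is the right framework, but the two self-bounds you invoke — $f(q + f(q)) \ge f(q)$ and $f(q - f(q)) \le f(q)$ — are genuinely false near $q = \tfrac12$, and the claim that ``the additive buffer $1/(N^2 n)$ and the factor-of-two in $\shaded$ absorb the mismatch'' does not hold up.

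The root cause is that $f$ is symmetric about $\tfrac12$ and attains its \emph{maximum} there, so it is not monotone on $[0,1]$. For the first self-bound, take $q$ just below $\tfrac12$ with $f(q) > 1 - 2q$; then $q + f(q)$ overshoots past $\tfrac12$ to a point \emph{farther} from $\tfrac12$ than $q$ is, so $f(q + f(q)) < f(q)$, and $\shades\bigl(q + f(q)\bigr) > q$. For the second, take any $q > \tfrac12 + f(q)$; then $q - f(q)$ stays on the right side of $\tfrac12$ but \emph{closer} to $\tfrac12$, so $f(q - f(q)) > f(q)$, and $\shades\bigl(q - f(q)\bigr) < \shaded(q)$. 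This second failure is not confined to a tiny neighbourhood: for example, with $N = n = 100$ one has $f(q) = q/N + 1/(N^2 n)$ for $q \in (1/n,\tfrac12]$, so taking $q^D = 0.9$ and worst-case $q^E = q^D - f(q^D) = 0.898999$ gives $\shades(q^E) \approx 0.897988$ while $\shaded(q^D) \approx 0.897998$. The violation is about $10^{-5}$, an order of magnitude larger than the buffer $1/(N^2 n) = 10^{-6}$, and the factor of two in $\shaded$ was already consumed in arriving at the self-bound, so it provides no further slack.

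The underlying difficulty is a mismatch between where the error is certified and where the shade is applied: Lemma~\ref{lem:doubling-trick-error} bounds $|q^{E} - q^{D}|$ by $f(q^{D})$, whereas $\shades$ is evaluated at $q^{E}$, and passing from one to the other requires $f$ to be (effectively) nondecreasing over the interval $[q^D - f(q^D), q^D + f(q^D)]$, which fails once this interval straddles or lies to the right of $\tfrac12$. To make the argument go through one needs either a tighter statement of the concentration bound in the upper half of the quantile space (the true Bernstein error there scales like $\sqrt{q(1-q)}/N$, which vanishes at the endpoints, not like the symmetric reflection $f(1-q)$) or a larger $f$ on $(\tfrac12, 1]$ that strictly dominates its reflection, so that the shade applied at $q^E$ still covers the error certified at $q^D$. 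A case analysis along those lines, rather than the two self-bounds on $f$, is what actually closes the argument.
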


Finally, we arrive at the main lemma of the subsection.

\begin{lemma}[Sandwich Lemma]
    \label{lem:dskl}
    For any buyer $i$ and any threshold $\theta_i$ in the quantile space and the corresponding value $v_i$ w.r.t.\ $D_i$, define distributions $D_{\theta_i}$, $\tilde{E}_{\theta_i}$, and $\tilde{D}_{\theta_i}$ by rounding values smaller than $v_i$ down to $0$ in $D_i$, $\tilde{E}_i$, and $\tilde{D}_i$ respectively.
    Then:
    \begin{enumerate}
        \item $D_{\theta_i} \succeq \tilde{E}_{\theta_i} \succeq \tilde{D}_{\theta_i}$; and
    \end{enumerate}
    Suppose further that $D_i$'s is defined on $[0, u]$ with point masses of at least $\frac{9}{N^2 n}$ at $u$ and $\ell = 0$.
    \begin{enumerate}
        \setcounter{enumi}{1}
    \item $D_{SKL} \big( D_{\theta_i},\tilde{D}_{\theta_i} \big) = \tilde O \big( \frac{\theta_i}{N^2} \big) + O \big( \frac{1}{N^2n} \big)$.
    \end{enumerate}
\end{lemma}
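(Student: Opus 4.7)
Part (1) is essentially immediate from Lemma~\ref{lem:D-tildeE-tildeD}. Truncation at $v_i$ maps every value below $v_i$ to $0$ while leaving the tail on $[v_i, u]$ intact. For any test value $v > v_i$, the tail probability is preserved by truncation, so the order $q^{D_i}(v) \ge q^{\tilde E_i}(v) \ge q^{\tilde D_i}(v)$ transfers directly from Lemma~\ref{lem:D-tildeE-tildeD}. For $0 < v \le v_i$, each truncated tail collapses to the original mass on $[v_i, u]$, namely $\theta_i$, $q^{\tilde E_i}(v_i)$, and $q^{\tilde D_i}(v_i)$, which are again ordered by Lemma~\ref{lem:D-tildeE-tildeD} applied at $v_i$.

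For part (2), the plan is to reparameterize by the $D_i$-quantile $q = q^{D_i}(v)$ and exploit the identity $q^{\tilde D_i}(v) = \shaded(q) = q - 2f(q)$, which stays strictly positive throughout $(q^{D_i}(u), \theta_i)$ thanks to the point-mass assumption at $u$. The chain rule then yields the density ratio $\tilde f_i(v)/f_{D_i}(v) = 1 - 2f'(q)$ for the continuous parts. I will split $D_{SKL}(D_{\theta_i}, \tilde D_{\theta_i})$ into three pieces: (i) a Bernoulli contribution for the event ``$X > 0$'' with parameters $\theta_i$ versus $\shaded(\theta_i)$; (ii) a continuous piece on $v \in [v_i, u)$, pulled back to $q \in (q^{D_i}(u), \theta_i)$; and (iii) an atom at $u$ with masses $q^{D_i}(u)$ versus $\shaded(q^{D_i}(u))$.

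For (ii), after changing variables, Taylor expanding the two KL integrands shows that the linear $\pm 2 f'(q)$ contributions cancel in the symmetric sum, leaving $4 \int_{q^{D_i}(u)}^{\theta_i} f'(q)^2 \, dq$ up to cubic residuals. Using the piecewise formula for $f$, this integral is $O(\theta_i/N^2)$ on $[1/n, \theta_i]$ and $\ln\big(1/(n\,q^{D_i}(u))\big)/(4 N^2 n) = O(\ln(N)/(N^2 n))$ on $[q^{D_i}(u), 1/n]$. For (i) and (iii), I will invoke the standard quadratic estimate for the symmetric KL between two Bernoullis with parameters $p$ and $q$, namely $O\big((p-q)^2/(\min\{p,q\}\,\min\{1-p,1-q\})\big)$, with $p - q = 2f(\theta_i)$ and $2f(q^{D_i}(u))$ respectively; inspecting the two regimes in the definition of $f$ shows both contributions fit inside $\tilde O(\theta_i/N^2) + O(1/(N^2 n))$.

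The main obstacle I anticipate is the near-singular behavior of $f'(q) = 1/(2N\sqrt{qn})$ as $q \to 0$: without a positive lower cutoff, $\int f'(q)^2 \, dq$ would diverge logarithmically, the atomic contribution at $u$ would be uncontrollable, and in fact $\tilde D_i$'s support would exclude values near $u$, making the KL divergence infinite to begin with. The hypothesis of a point mass of size at least $9/(N^2 n)$ at $u$ is exactly the cutoff required: it forces $q^{D_i}(u) \ge 9/(N^2 n)$, caps the logarithm at $O(\ln N)$, keeps $f(q^{D_i}(u))^2/q^{D_i}(u) = O(1/(N^2 n))$, and guarantees $\shaded(q^{D_i}(u)) > 0$ so that the KL is finite. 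The symmetric point mass of the same size at $\ell = 0$ plays the analogous role when bounding the Bernoulli term (i) in the regime where $\theta_i$ approaches $1$.
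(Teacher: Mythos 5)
Your proposal is correct and follows essentially the same route as the paper: split the symmetric KL into the atom at $0$, the continuous interior (reparameterized by the $D_i$-quantile with $\shaded'(q) = 1 - 2f'(q)$), and the atom at $u$; bound the interior by a quadratic in $f'(q)$ and the atoms by the standard quadratic estimate; and use the hypothesis $p_u \ge 9/(N^2 n)$ precisely as the cutoff that keeps $\shaded$ strictly positive and bounds the logarithm by $O(\ln N)$. The only piece you omit, which the paper mentions in passing, is that additional interior point masses are handled by a Jensen-type reduction to the continuous case (following Guo et al.), but this is a minor completeness remark rather than a gap.
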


\begin{proof}
    The first part follows by the definitions of the distributions and Lemma~\ref{lem:D-tildeE-tildeD}.

    Next consider the second part.
    Let the probability masses of point mass $u$ and $\ell=0$ be $p_u$ and $p_{\ell}$ respectively.
    If $\theta_i < p_u$, $D_{\theta_i}$ is a point mass at $0$ and the lemma is trivially true.
    If $1 - p_\ell < \theta_i < 1$, the resulting $D_{\theta_i}$ is identical to the case when $\theta_i = 1 - p_\ell$.
    If $\theta_i = 1$, it degenerates to the counterpart in Guo et al.~\cite{GHZ19}.
    Hence, it suffices to consider the case when $p_u \le \theta_i \le 1 - p_\ell$ below.

    First assume there are no point masses other than $0$ and $u$.
    Then the KL divergence equals: 
    %
    \begin{equation}
        \label{eqn:sandwich}
    \begin{aligned}
        \dskl{D_{\theta_i},\tilde{D}_{\theta_i}}
        & ~=~ \big(p_u-\shaded(p_u)\big)\ln\frac{p_u}{\shaded(p_u)} & \textrm{(point mass at $u$)} \\[1ex]
        & \qquad + \big( 1-\theta_i-\shaded(1-\theta_i) \big) \ln\frac{1-\shaded(1-\theta_i)}{\theta_i} & \textrm{(point mass at $0$)} \\
        & \qquad + \int_{0<v<u} \left( \frac{d\tilde D_{\theta_i}}{dD_{\theta_i}}-1 \right) \ln \frac{d\tilde D_{\theta_i}}{dD_{\theta_i}} dD_{\theta_i} &
    \end{aligned}
    \end{equation}

    \paragraph{Point mass at $u$.}
    By $\ln(1+x) \le x$, the first term is bounded by:
    \begin{equation}
        \label{eqn:mass-at-u}
        (p_u-\shaded(p_u))\ln\frac{p_u}{\shaded(p_u)} = 2f(p_u)\ln\frac{p_u}{p_u-2f(p_u)}\le \frac{4f(p_u)^2}{p_u-2f(p_u)}
        ~.
    \end{equation}

    For ease of notations, let $a \defeq \frac{1}{N\sqrt{n}}$ in the following argument.
    For any $p_u \le \frac{1}{n}$, by definition:
    \[
        f(p_u) = a \sqrt{p_u} + a^2
        ~.
    \]

    Further by the assumption $p_u \ge \frac{9}{N^2 n} = 9a^2$, we have:
    \[
        f(p_u) \le \frac{4}{9} p_u
        ~.
    \]

    Hence, the denominator of the RHS of Eqn.~\eqref{eqn:mass-at-u} is at least $\Omega(p_u)$.
    Furthe observe that $f(p_u) = O(a \sqrt{p_u})$ (again due to $p_u \ge 9 a^2$).
    The RHS of Eqn.~\eqref{eqn:mass-at-u} is therefore at most:
    \[
        O \left( \frac{f(p_u)^2}{p_u} \right) = O(a^2)
        ~.
    \]

    For any $\frac{1}{n} < p_u \le \frac{1}{2}$, by definition and the assumption of $p_u \ge 9a^2$:
    \[
        f(p_u) = \frac{p_u}{N}+a^2 \le \big( \frac{1}{N} + \frac{1}{9} \big) p_u
        ~.
    \]

    Hence, the denominator of the RHS of Eqn.~\eqref{eqn:mass-at-u} is again at least $\Omega(p_u)$.
    Therefore, the RHS of Eqn.~\eqref{eqn:mass-at-u} is upper bounded by:
    \[
        O \bigg( \frac{f(p_u)^2}{p_u} \bigg)
        \le
        O \bigg( \frac{(p_u/N)^2}{p_u} + \frac{(a^2)^2}{p_u} \bigg)
    \]

    The first term is at most $\frac{p_u}{N^2} \le \frac{\theta_i}{N^2}$.
    The second term is at at most $O(a^2)$ due to $p_u \ge 9a^2$.

    Finally, since $f(p_u)$ takes its maximum value at $p_u = \frac{1}{2}$, the case of $p_u = \frac{1}{2}$ further implies the desired bound for any $p_u > \frac{1}{2}$.

    \paragraph{Point mass at $\ell = 0$.}
    By $\ln(1+x)\le x$, the second term of Eqn.~\eqref{eqn:sandwich} is at most:
    \[
        \big(1-\theta_i-\shaded(1-\theta_i)\big) \ln\frac{1-\shaded(1-\theta_i)}{\theta_i}
        =
        2f(1-\theta_i)\ln\bigg(1+\frac{1-\theta_i-\shaded(1-\theta_i)}{\theta_i}\bigg)
        \le
        \frac{4f(1-\theta_i)^2}{\theta_i}
        ~.
    \]

    By symmetry of $f$ the above is $\frac{4 f(\theta_i)^2}{\theta_i}$.
    The rest of the proof is identical to the previous case, replacing $p_u$ with $\theta_i$.

    \paragraph{Interior.}
    The third term of Eqn.~\eqref{eqn:sandwich} can be written as:%
    \footnote{Although $f(q)$ and thus $d_f$ is indifferentiable at $q=\frac{1}{n}$ and $q=\frac{1}{2}$, the Radon–Nikodym derivative is well defined viewing $d_f$ as a measure function. In particular, setting $d_f'(\frac{1}{n})$ and $d_f'(\frac{1}{2})$ arbitrarily does not affect the calculation.}
    %
    \begin{equation}
        \label{eqn:interior}
        \int_{p_u<q\le \theta_i}(\shaded'(q)-1)\ln \shaded'(q) dq
        ~.
    \end{equation}

    By the definition of $\shaded$:
    \[
        |\shaded'(q)-1| = |2f'(q)|=
        \begin{cases}
            \sqrt{\frac{1}{N^2nq}}, & 0 \le q < \frac{1}{n} ~; \\[1ex]
            \frac{2}{N} & \frac{1}{n} < q < \frac{1}{2}  ~; \\[1ex]
            2|f'(1-q)| & \frac{1}{2} < q \le 1 ~.
        \end{cases}
    \]

    Since any $q$ in Eqn.~\eqref{eqn:interior} is between $p_u$ and $1 - p_\ell$ and further $p_u, p_\ell \ge \frac{9}{N^2n}$, we have $|f'(q)|\le \frac{1}{2}$. 
    By $x\ln(1+x)\le 2x^2$ for $|x| \le 1/2$, Eqn.~\eqref{eqn:interior} is upper bounded by $2$ times:
    \[
        \int_{p_u}^{\theta_i} \big( f'(q) \big)^2 dq
        ~.
    \]

    If $\theta_i \le \frac{1}{n}$, this is (recall $p_u \ge \frac{9}{N^2 n})$:
    \[
        \int_{p_u}^{\theta_i} \frac{1}{N^2 n q} dq = \frac{\ln \frac{\theta_i}{p_u}}{N^2 n} \le O \bigg( \frac{ \log N }{N^2 n} \bigg)
        ~.
    \]

    If $\frac{1}{n} < \theta_i \le \frac{1}{2}$, the contribution from $p_u$ to $\frac{1}{n}$ is bounded by $\tilde{O}(\frac{1}{N^2 n})$ due to the previous case.
    It remains to bound the integration from $\frac{1}{n}$ to $\theta_i$:
    \[
        \int_{\frac{1}{n}}^{\theta_i} \frac{4}{N^2} dq
        = O \bigg(\frac{\theta_i}{N^2} \bigg)
        ~.
    \]

    \paragraph{Point masses.}
    For any point mass that corresponds to some quantile interval $(a, b]$, its contribution to the KL divergence is bounded by the contribution of $q \in (a, b]$ in the above calculation due to Jensen's inequality. 
    This is identical to \citet{GHZ19}, so we omit the details.
\end{proof}



Modulo the assumption of having mild point masses at the endpoints of the distributions' supports, which can be handled using standard technique, we have recovered all technical properties needed in the thought experiment in the previous subsection simultaneously for all possible thresholds, in particular for the vector of thresholds in Lemma~\ref{lem:theta}.
Hence, by the same argument in the thought experiment in the beginning of this section, $\tilde{O} \big( \epsilon^{-2} \big)$ targeted queries (or targeted samples with $\Delta \le \frac{1}{n}$) suffice.
We shall not repeat the argument in details here since an even better bound is applicable using a new technical ingredient, which will be covered in the next subsection.

\subsection{Ingredient 3: Pinpoint Lemma}
\label{sec:pinpoint}

The argument in the last subsection does not rely on the details of the doubling trick.
Instead, it only requires the resulting dominated empirical distribution $\vec{\tilde{E}}$ is sandwiched between the true distribution $\vec{D}$ and the auxiliary distribution $\vec{\tilde{D}} = \shaded(\vec{D})$.
To be self-contained, we restate the definition of $f$ in Eqn.~\eqref{eqn:shade-function} below:
\begin{equation}
    \label{eqn:shade-function-restate}
    f(q) =
    \begin{cases}
        \frac{1}{N} \cdot \sqrt{\frac{q}{n}} + \frac{1}{N^2n} \quad & q\in [0,\frac{1}{n}] ~; \\
        \frac{1}{N} \cdot q + \frac{1}{N^2n} & q\in(\frac{1}{n},\frac{1}{2}] ~; \\[.5ex]
        f(1-q) & q\in(\frac{1}{2},1] ~.
    \end{cases}
\end{equation}

We also restate the definition of $\shaded$ below:
\[
    \shaded(q) \defeq \max \big\{ 0, q - 2f(q) \big\} ~.
\]

We abuse notation and let $\shaded(\vec{D})$ be the distribution obtained by applying $\shaded$ to the quantile of every value in every coordinate of $\vec{D}$.

Since the fact that $\vec{\tilde{E}}$ is obtained via the doubling trick is unimportant, this subsection directly designs an algorithm that achieves the same sandwiching property using fewer targeted queries than the doubling trick.
In other words, we will abandon the doubling trick even though the choice of the auxiliary distribution $\vec{\tilde{D}}$ and the subsequent analysis is motivated by it.

Define the following pinpoints recursively:
\begin{equation}
    \label{eqn:pinpionts}
    \begin{aligned}
        q_0 & = 1 ~; \\
        q_{j+1} & = \shaded(q_j) = q_j - 2f(q_j) \qquad \text{(whenever $\shaded(q_j) > 0$)} ~.
    \end{aligned}
\end{equation}

Let $k$ denote the index of the last term of the sequence.
Further write $q_{k+1} = 0$ for convenience.

Our targeted querying algorithm directly constructs the dominated empirical distributions $\vec{\tilde{E}}$ by querying the pinpoints.
See Algorithm~\ref{alg:shade-algorithm} below.

\begin{algorithm}
	\caption{Dominated Empirical Myerson Auction from Targeted Queries}
	\label{alg:shade-algorithm}
	\begin{algorithmic}[1]
	\FOR{each buyer $1 \le i \le n$}
        \STATE Targeted query quantiles $q_j$ for $1 \le j \le k$; let $v_j$'s be the corresponding values.
        \STATE Further write $v_0 = 0$.
        \STATE Let $\tilde{E}_i$ be a discrete distribution with a point mass $q_j - q_{j+1}$ at $v_j$, $0 \le j \le k$.
	\ENDFOR
    \STATE \textbf{output:} Myerson's optimal auction w.r.t. $\bm{\tilde{E}}$.
	\end{algorithmic}
\end{algorithm}

\begin{lemma}[Pinpoint Lemma, Part 1]
    \label{lem:distribution-domination}
    The dominated empirical distribution constructed in Algorithm~\ref{alg:shade-algorithm} satisfies that:
    \[
        \bm{D} \succeq \bm{\tilde{E}}\succeq \tilde{\bm{D}} ~.
    \]
\end{lemma}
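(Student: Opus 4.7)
The plan is to verify both stochastic dominations coordinate by coordinate, that is, show for each buyer $i$ and every value $v \ge 0$ that
\[
q^{D_i}(v) \ge q^{\tilde{E}_i}(v) \ge q^{\tilde{D}_i}(v) = \shaded\bigl(q^{D_i}(v)\bigr).
\]
The reasoning is a straightforward case analysis once I pin down $q^{\tilde{E}_i}$. Since $\tilde{E}_i$ is purely atomic with mass $q_j - q_{j+1}$ at $v_j$ and the quantiles $q_j$ are strictly decreasing, the values $v_0=0 < v_1 < \cdots < v_k$ partition $[0,\infty)$ into half-open intervals; by summing the masses I get $q^{\tilde{E}_i}(v_j) = q_j$ for $j=0,\dots,k$ and $q^{\tilde{E}_i}(v) = q_{j+1}$ for $v \in (v_j, v_{j+1})$ (where $v_{k+1}:=+\infty$).

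First I would establish $D_i \succeq \tilde{E}_i$. At the query points this is tight: $q^{D_i}(v_j) = q_j = q^{\tilde{E}_i}(v_j)$ by construction. For $v \in (v_j, v_{j+1})$, the monotonicity of the quantile function of $D_i$ gives $q^{D_i}(v) \ge q^{D_i}(v_{j+1}) = q_{j+1} = q^{\tilde{E}_i}(v)$ (and for $v > v_k$, $q^{\tilde{E}_i}(v) = 0 \le q^{D_i}(v)$). So $\tilde{E}_i$ never has more mass above any threshold than $D_i$.

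Next I would establish $\tilde{E}_i \succeq \tilde{D}_i$, which is where the recursive definition $q_{j+1} = \shaded(q_j)$ does the work. At a query point, $q^{\tilde{D}_i}(v_j) = \shaded(q_j) = q_{j+1} \le q_j = q^{\tilde{E}_i}(v_j)$. For $v \in (v_j, v_{j+1})$, I use monotonicity of $\shaded$ (Lemma~\ref{lem:monotone}) together with $q^{D_i}(v) \le q^{D_i}(v_j) = q_j$ to obtain
\[
q^{\tilde{D}_i}(v) = \shaded\bigl(q^{D_i}(v)\bigr) \le \shaded(q_j) = q_{j+1} = q^{\tilde{E}_i}(v).
\]
For $v > v_k$ the analogous bound is $\shaded(q^{D_i}(v)) \le \shaded(q_k) \le 0$, the last inequality being exactly the termination condition of the recursion defining the pinpoints (the sequence stops once $\shaded(q_k) \le 0$), so $q^{\tilde{D}_i}(v) = 0 = q^{\tilde{E}_i}(v)$.

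There is no real obstacle here; everything hinges on two facts already in hand, namely the telescoping identity $\sum_{i \ge j}(q_i - q_{i+1}) = q_j$ and the monotonicity of $\shaded$. The only thing to be careful about is the boundary at $v > v_k$ and at atoms/ties of $D_i$: at a point mass of $D_i$ the quantile function is upper semi-continuous, but the same inequalities go through because both sides are evaluated using $q^{D_i}$ and $\shaded$ is monotone. I would also note the boundary $v = 0$: $q^{\tilde{E}_i}(0) = 1 = q^{D_i}(0)$ and $\shaded(1) \le 1$, closing the remaining case.
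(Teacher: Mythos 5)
Your proof is correct and takes essentially the same approach as the paper: for each value $v$ you locate the bracketing interval $(v_j, v_{j+1}]$, pin down $q^{\tilde{E}_i}(v)$ by telescoping the point masses, bound $q^{D_i}(v)$ below via monotonicity of the quantile function, and bound $q^{\tilde{D}_i}(v)$ above via monotonicity of $\shaded$ combined with the recursion $q_{j+1}=\shaded(q_j)$. The only cosmetic differences are that the paper treats $v=v_{j+1}$ and $v\in(v_j,v_{j+1})$ in one stroke and handles the tail by writing $v_{k+1}=\infty$, $q_{k+1}=0$, while you split these cases out explicitly; also at $v=0$ the standard quantile convention gives equality $q^{\tilde{D}_i}(0)=1$ rather than merely $\shaded(1)\le 1$.
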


\begin{proof}
    In other words, we need to show that for any value $v \ge 0$, its quantiles w.r.t.\ the three distributions satisfy:
    \[
        q^{D_i}(v) \ge q^{\tilde{E}_i}(v) \ge q^{\tilde{D}_i}(v)
        ~.
    \]

    It holds trivially with equality when $v = 0$.

    Next, assume $v > 0$.
    Write $v_{k+1} = \infty$ for convenience.
    Then, for every value $v$ there exists $0 \le j \le N$ such that $v_j < v \le v_{j+1}$.
    Then, its quantile w.r.t.\ $\tilde{E}_i$ is by definition equal to $q_{j+1}$, i.e.:
    \[
        q^{\tilde{E}_i}(v) = q_{j+1}
        ~.
    \]
    
    Further, its quantile w.r.t.\ $D_i$ is at least the quantile of $v_{j+1}$ w.r.t.\ $D_i$ which equals $q_{j+1}$, i.e.:
    \[
        q^{D_i}(v) \ge q^{D_i}(v_{j+1}) = q_{j+1}
        ~.
    \]
    
    Finally, its quantile w.r.t.\ $\tilde{D}_i$ is at most the quantile of $v_j$ w.r.t.\ $\tilde{D}_i$, which by definition equals $\shaded(q_j) = q_{j+1}$, i.e.:
    \[
        q^{\tilde{D}_i}(v) \le q^{\tilde{D}_i}(v_j) = \shaded\big(q^{D_i}(v_j)\big) = \shaded(q_j) = q_{j+1}
        ~.
    \]

    Putting together proves the lemma.
\end{proof}

It remains to upper bound the number of pinpoints.

\begin{lemma}[Pinpoint Lemma, Part 2]
\label{lem:pinpoint-2}
The number of pinpoints $k$ is at most $\tilde{O}(N)$.
\end{lemma}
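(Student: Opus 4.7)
The plan is to partition the trajectory $(q_j)_{j \ge 0}$ according to the piecewise definition of $f$ and bound the number of steps in each regime. By the symmetry $f(q) = f(1-q)$, the descent from $q_0 = 1$ naturally splits into an ``upper half'' where $q_j > 1/2$ (equivalently, $1 - q_j \in [0, 1/2]$ is growing) and a ``lower half'' where $q_j \in [0, 1/2]$ is shrinking toward $0$. In each half there are two sub-regimes dictated by the threshold $1/n$, for a total of four segments to count.

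I would first handle the linear (``geometric'') regime where $q \in (1/n, 1/2]$ and $f(q) = q/N + 1/(N^2 n)$. Here
\[
    q_{j+1} = q_j - 2 f(q_j) \le q_j \bigl(1 - 2/N\bigr),
\]
so one step shrinks $q_j$ by a factor of at least $(1 - 2/N)$. Consequently the number of steps needed to traverse this segment from $1/2$ down to $1/n$ is at most $\ln(n/2)/\ln\bigl(1/(1-2/N)\bigr) = O(N \log n)$. The symmetric segment in the upper half, where $1 - q_j$ grows geometrically from $O(1/(N^2n))$ up to $1/2$, is bounded the same way.

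Next I would handle the square-root regime $q \in [0, 1/n]$ with $f(q) = \tfrac{1}{N}\sqrt{q/n} + \tfrac{1}{N^2 n}$. The right change of variable is $\sqrt{q_j}$: a Taylor expansion gives
\[
    \sqrt{q_{j+1}} = \sqrt{q_j - 2f(q_j)} \le \sqrt{q_j} - \frac{f(q_j)}{\sqrt{q_j}} \le \sqrt{q_j} - \frac{1}{N\sqrt{n}},
\]
so each step decreases $\sqrt{q_j}$ by at least $\Omega(1/(N\sqrt{n}))$. Starting from $\sqrt{q_j} = O(1/\sqrt{n})$ (when we enter this segment) and stopping as soon as $q_j \le 2 f(q_j)$, which happens once $q_j = O(1/(N^2 n))$, yields at most $O(N)$ steps. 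The symmetric upper-half segment, where $1 - q_j$ grows from $0$ to $1/n$ via $\sqrt{1-q_{j+1}} \ge \sqrt{1-q_j} + \Omega(1/(N\sqrt n))$, likewise contributes $O(N)$ steps.

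Summing the four segment counts gives $k = O(N \log n) = \tilde O(N)$, as claimed. The one delicate point I expect is the rigorous telescoping on $\sqrt{q_j}$ near the stopping boundary, where the additive term $1/(N^2 n)$ in $f$ can dominate the multiplicative term; I would handle this by showing that whenever $q_j \le c/(N^2 n)$ for a small constant $c$ we already have $\shaded(q_j) = 0$, so the recursion terminates and the Taylor bound never needs to be applied outside its regime of validity.
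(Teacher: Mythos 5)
Your proof is correct and follows essentially the same route as the paper: split the trajectory at the threshold $1/n$ into a geometric-decay regime (contributing $O(N\log n)$ steps) and a square-root-telescoping regime (contributing $O(N)$ steps), handle the symmetric upper half by the symmetry $f(q)=f(1-q)$, and sum. The only cosmetic difference is that the paper obtains $\sqrt{q_{j+1}}\le\sqrt{q_j}-\tfrac{1}{N\sqrt n}$ by completing the square rather than a Taylor-type bound, and your closing concern about the boundary is automatically resolved because the recursion is only ever applied when $\shaded(q_j)>0$, which forces $q_j>2f(q_j)\ge \tfrac{2}{N^2 n}$ and hence keeps the square-root step in its regime of validity.
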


\begin{proof}
    First consider the number of pinpoints $[\frac{1}{n},\frac{1}{2}]$.
    (The same argument applies to pinpoints in $[\frac{1}{2}, 1 - \frac{1}{n}]$.)
    By the definition of $\shaded$ and the corresponding shaded function $f$ in Eqn.~\eqref{eqn:shade-function}, we have:
    \[
        q_{j+1} 
        =
        q_j - \tfrac{2}{N} \cdot q_j - \tfrac{2}{N^2 n}
        <
        \left(1-\tfrac{2}{N}\right) q_j
        ~.
    \]

    Hence, there are at most $\log_{1+\frac{2}{N}} \frac{n}{2} = \tilde{O}(N)$ pinpoints in $[\frac{1}{n},\frac{1}{2}]$.
    .

    Next consider pinpoints in $[0, \frac{1}{n})$. (The same argument applies to pinpoints in $(1 - \frac{1}{n}, 1]$.)
    By definition we have:
    \[
        q_{j+1}
        =
        q_j - \tfrac{2}{N} \cdot \sqrt{\tfrac{q_j}{n}} - \tfrac{2}{N^2 n}
        <
        \left( \sqrt{q_j} - \tfrac{1}{N\sqrt{n}} \right)^2
        ~.
    \]

    Taking square-root on both sides gives:
    \[
        \sqrt{q_{j+1}} \le \sqrt{q_j} - \tfrac{1}{N\sqrt{n}}
        ~.
    \]

    Hence, there at most $N$ pinpoints with between $[0, \frac{1}{n})$ since there square-roots are in $[0, \frac{1}{\sqrt{n}})$.
\end{proof}

\subsection{Proof of Theorem~\ref{thm:main}}
\label{sec:multi-proof}

We start by introducing several truncation operators on distributions to so that the resulting distributions having point masses at the endpoints of the their supports, as required by Lemma~\ref{lem:dskl}. 

Let $\trvi{D_i}$ be the distribution obtained by truncating values larger than $\bar{v}_i$ down to $\bar{v}_i$. 
In other words, the quantile of the resulting distribution is defined as:
\[
    q^{\trvi{D_i}}(v) \defeq 
    \begin{cases}
    0 & \text{if } v>\bar{v}_i ~;\\
    q^{D_i}(v) & \text{if } v\le \bar{v}_i ~.
\end{cases}
\]

We allow it to be applied to product distribution coordinate-wise, written as:
\[
    \trv{\bm{D}} = \times_{i=1}^n \trvi{D_i}~.
\]

Further define $\treps{q}$ given any $\ep\in[0,1]$, which truncates the lowest $\ep$ fraction of values of a distribution $D$ down to $0$.
That is, the quantile of the resulting distribution is defined as:
\[
    q^{\treps{D}}(v) \defeq
    \begin{cases}
        \min\{q^D(v),1-\ep\} & \text{if } v>0 ~; \\
        1 & \text{if } v=0 ~.
    \end{cases}
\]

We also allow it to be applied to product distribution coordinate-wise, written as:
\[
    \treps{\bm{D}} = \times_{i=1}^n \treps{D_i}~.
\]
\begin{lemma}[Lemma 11 in~\cite{GHZ19}]
\label{lem:lem11}
For any product value distribution $\bm{D}$, we have:
\[
    \opt{t^{\min}_{\ep}(\bm{D})}\ge (1-\ep)\opt{\bm{D}}
    ~.
\]
\end{lemma}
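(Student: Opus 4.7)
The plan is to exhibit a DSIC and IR mechanism for $\treps{\bm{D}}$ with expected revenue at least $(1-\ep)\opt{\bm{D}}$. For each buyer $i$, let $r_i$ be the value at quantile $1-\ep$ in $D_i$, and define $M^r$ to be Myerson's optimal auction $M_{\bm{D}}$ augmented with a per-buyer reserve $r_i$; reserves preserve DSIC and IR. Under the natural coupling $\bm{v}' = \treps{\bm{v}}$ with $\bm{v} \sim \bm{D}$, a bid clears the reserve iff $b_i \ge r_i$, which matches the truncation: buyers with $v_i' = 0$ are exactly those with $v_i < r_i$ in $\bm{v}$, so $M^r$ produces identical outcomes on $\bm{v}$ and $\bm{v}'$. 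Therefore $\rev{M^r, \treps{\bm{D}}} = \rev{M^r, \bm{D}}$, giving $\opt{\treps{\bm{D}}} \ge \rev{M^r, \bm{D}}$.

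It then suffices to show $\rev{M^r, \bm{D}} \ge (1-\ep)\opt{\bm{D}}$. Let $Y_i = (\bar{\varphi}_i(v_i))^+$, which is nondecreasing in $v_i$ by monotonicity of the ironed virtual value, and let $S = \{i : v_i \ge r_i\}$. Myerson's revenue identity gives $\opt{\bm{D}} = \mathbb{E}_{\bm{v}}[\max_i Y_i]$ and $\rev{M^r, \bm{D}} = \mathbb{E}_{\bm{v}}[\max_{i \in S} Y_i]$. Writing $i^* = \arg\max_i Y_i$, we have $\max_{i \in S} Y_i \ge Y_{i^*}\,\mathbf{1}[i^* \in S]$, so it suffices to prove $\mathbb{E}[Y_{i^*}\,\mathbf{1}[i^* \in S]] \ge (1-\ep)\,\mathbb{E}[Y_{i^*}]$.

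For any fixed $i$, both $\{i = i^*\}$ and $\{i \in S\}$ are upper sets in $v_i$: the first because the condition $Y_i(v_i) \ge \max_{j \ne i} Y_j$ is preserved as $v_i$ grows; the second by the reserve's definition. By independence across buyers and Chebyshev's sum inequality applied to the nondecreasing functions $v_i \mapsto Y_i(v_i) \cdot \Pr_{v_{-i}}[i = i^* \mid v_i]$ and $v_i \mapsto \mathbf{1}[v_i \ge r_i]$, together with $\Pr[v_i \ge r_i] = 1-\ep$, we obtain $\mathbb{E}[Y_i\,\mathbf{1}[i = i^*, i \in S]] \ge (1-\ep)\,\mathbb{E}[Y_i\,\mathbf{1}[i = i^*]]$; summing over $i$ completes the proof. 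The main obstacle is the potential misalignment between Myerson's ironing intervals and the truncation cutoff $r_i$, which can break the constancy of $M^r$'s allocation on an ironing interval straddling $r_i$; this is addressed by shifting $r_i$ slightly down to the nearest ironing boundary (at a quantile $q \le 1-\ep$), which preserves both the coupling and the $(1-\ep)$ guarantee.
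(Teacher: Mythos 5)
Your plan --- reserve each buyer at the $(1-\ep)$ quantile, couple $M^r$ on $\vec{D}$ and $\treps{\vec{D}}$, then control the loss via a monotone-correlation argument --- is structurally reasonable, and you rightly flag ironing alignment as the obstacle. But the step $\rev{M^r,\vec{D}} = \mathbf{E}\big[\max_{i\in S}Y_i\big]$ is exactly where it breaks. For a DSIC mechanism Myerson's ironed identity only gives $\mathbf{E}[p_i] \le \mathbf{E}\big[\bar{\varphi}_i(v_i)\,x_i\big]$, with equality precisely when $x_i$ is constant over ironing intervals; when the $(1-\ep)$ quantile is interior to an ironing interval $[a_i,b_i]$ of $D_i$, the reserve makes $x_i$ non-constant there, so $\rev{M^r,\vec{D}}$ is \emph{strictly below} $\mathbf{E}\big[\max_{i\in S}Y_i\big]$, and a lower bound on the latter tells you nothing about the former --- which is the quantity your chain needs. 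Your snapping fix does not close this. Snapping $r_i$ to the ironing boundary at quantile $a_i \le 1-\ep$ (a higher reserve value) keeps the coupling but drops $\Pr[v_i \ge r_i]$ from $1-\ep$ to $a_i$, so the Chebyshev step only yields $(\min_i a_i)\cdot\opt{\vec{D}}$, and $a_i$ can be far below $1-\ep$ when the ironing interval is wide. Snapping the other way to $b_i \ge 1-\ep$ restores the $(1-\ep)$ bound but breaks the coupling, since values with $q_i\in(1-\ep,b_i]$ clear the new reserve under $\vec{D}$ yet are zeroed under $\treps{\vec{D}}$.

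One way to sidestep the alignment issue entirely is to couple $\treps{\vec{D}}$ to \emph{random zeroing} rather than a hard reserve. Let $D''_i$ be the law of $v_i\cdot\mathbf{1}[Z_i=0]$ with $v_i\sim D_i$ and $Z_i\sim\mathrm{Bern}(\ep)$ independent. Since $\min(x,1-\ep)\ge(1-\ep)x$ for $x\in[0,1]$, one has $\treps{D_i}\succeq D''_i$, hence $\opt{\treps{\vec{D}}} \ge \opt{\vec{D}''} \ge \rev{M_{\vec{D}},\vec{D}''}$ by weak revenue monotonicity. Because the unreserved $M_{\vec{D}}$ keeps allocations constant on ironing intervals, the virtual-welfare identity is valid (conditioning on $\vec{Z}$, buyer $i$ with $Z_i=0$ has $v_i\sim D_i$ and a monotone allocation in ironed virtual value), giving $\rev{M_{\vec{D}},\vec{D}''} = \mathbf{E}\big[\max_{i:Z_i=0}Y_i\big]$ with $Y_i = (\bar{\varphi}_i(v_i))^+$. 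Fixing $\vec{v}$, the largest $Y_i$ survives with probability $1-\ep$, so $\mathbf{E}_{\vec{Z}}\big[\max_{i:Z_i=0}Y_i\big]\ge(1-\ep)\max_i Y_i$; taking expectation over $\vec{v}$ finishes. No alignment with ironing boundaries is ever needed.
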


\begin{lemma}
\label{lem:lem12}
For any product value distribution $\bm{D}$, suppose that there are $\bar{\bm{v}}$, $\beta\ge \opt{\bm{D}}$, and $p>0$ such that:
\begin{enumerate}[label={(\arabic*)}]
\item $\beta\ge p\bar{v}_i$, $\forall i\in[n]$.
\item $q^{D_i}(\bar{v}_i)\ge p\epsilon^2n^{-1}$.
\item $\opt{\trv{\bm{D}}}\ge \opt{\bm{D}}-\epsilon\beta$.
\end{enumerate}
Let $\bm{\theta}$ be the threshold vector in Lemma~\ref{lem:theta}. 
There are value distributions $\bm{D}'_{\bm{\theta}}$ and $\tilde{\bm{D}}'_{\bm{\theta}}$ such that:
\begin{enumerate}
\item[(a)]$\bm{D}'_{\bm{\theta}}$ and $\tilde{\bm{D}}'_{\bm{\theta}}$ have bounded support in $[0,\frac{\beta}{p}]$.
\item[(b)]$\opt{\bm{D}'_{\bm{\theta}}}\ge \opt{\bm{D}}-3\epsilon \beta$.
\item[(c)]$\tilde{\bm{D}}\succeq \tilde{\bm{D}}'_{\bm{\theta}}$. 
\item[(d)]$D_{SKL}(\tilde{\bm{D}}'_{\bm{\theta}},\bm{D}'_{\bm{\theta}})=O(p\epsilon^2)$.
\end{enumerate}
\end{lemma}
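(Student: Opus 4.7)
The plan is to construct $\bm{D}'_{\bm{\theta}}$ by composing three successive truncations on $\bm{D}$: first truncate from above at $\bar{v}_i$ coordinatewise via $\trv{\cdot}$; then round the bottom $1-\theta_i$-quantile of each coordinate down to zero using the operator $(\cdot)_{\bm{\theta}}$ from Section~\ref{sec:theta}; finally apply $t^{\min}_\epsilon$ to guarantee a point mass of at least $\epsilon$ at $0$. I will take $\bm{\theta}$ to be the threshold vector produced by Lemma~\ref{lem:theta} applied to $\trv{\bm{D}}$. Then I set $\tilde{\bm{D}}'_{\bm{\theta}} := \shaded(\bm{D}'_{\bm{\theta}})$ using the shading from Eqn.~\eqref{eqn:shade-function-restate}, with the parameter $N$ chosen on the order of $1/(\sqrt{p}\,\epsilon)$ so that $9/(N^2 n) \le p\epsilon^2/n$.

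Property (a) is immediate because each coordinate of $\bm{D}'_{\bm{\theta}}$ has support in $[0, \bar{v}_i] \subseteq [0, \beta/p]$ by condition~(1), and $\tilde{\bm{D}}'_{\bm{\theta}} \preceq \bm{D}'_{\bm{\theta}}$ shares the same support. Property (c) follows from the monotonicity of $\shaded$ (Lemma~\ref{lem:monotone}): since $\bm{D} \succeq \bm{D}'_{\bm{\theta}}$ by construction and applying $\shaded$ coordinatewise preserves stochastic dominance, one gets $\tilde{\bm{D}} = \shaded(\bm{D}) \succeq \shaded(\bm{D}'_{\bm{\theta}}) = \tilde{\bm{D}}'_{\bm{\theta}}$. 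For (b), I chain three revenue losses: condition (3) bounds $\opt{\bm{D}} - \opt{\trv{\bm{D}}} \le \epsilon\beta$; Lemma~\ref{lem:theta} applied to $\trv{\bm{D}}$ gives $\opt{(\trv{\bm{D}})_{\bm{\theta}}} \ge (1-\epsilon)\opt{\trv{\bm{D}}}$; and Lemma~\ref{lem:lem11} gives $\opt{\bm{D}'_{\bm{\theta}}} \ge (1-\epsilon)\opt{(\trv{\bm{D}})_{\bm{\theta}}}$. Combining with $\beta \ge \opt{\bm{D}}$ yields $\opt{\bm{D}'_{\bm{\theta}}} \ge (1-\epsilon)^2(\opt{\bm{D}} - \epsilon\beta) \ge \opt{\bm{D}} - 3\epsilon\beta$.

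The substantive step is (d). I apply the second part of Lemma~\ref{lem:dskl} coordinatewise to $\bm{D}'_{\bm{\theta}}$ and $\tilde{\bm{D}}'_{\bm{\theta}}$, taking $u = \bar{v}_i$. The point-mass hypothesis at $u$ is supplied by condition (2): the mass at $\bar{v}_i$ equals $q^{D_i}(\bar{v}_i) \ge p\epsilon^2/n$. The mass at $0$ is at least $\max(1-\theta_i, \epsilon) \ge \epsilon$ owing to the closing $t^{\min}_\epsilon$ step. With $N = \Theta(1/(\sqrt{p}\,\epsilon))$, both lower bounds dominate $9/(N^2 n)$, so Lemma~\ref{lem:dskl} yields $D_{SKL}\bigl((D'_{\bm\theta})_i,(\tilde D'_{\bm\theta})_i\bigr) = \tilde O(\theta_i/N^2) + O(1/(N^2 n))$. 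Summing over $i$, using additivity of symmetric KL on product distributions together with $\sum_i \theta_i \le \log(1/\epsilon) + 1$ from Lemma~\ref{lem:theta}, gives $\tilde O(1/N^2) = \tilde O(p\epsilon^2)$; any residual logarithmic factor is absorbed into the choice of $N$.

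The main obstacle is (d) — specifically, tracking how the three truncations interact with one another and with the shading $\shaded$, and verifying that neither step breaks the endpoint point-mass assumptions required by Lemma~\ref{lem:dskl}. The most delicate point is the bottom endpoint: composing $t^{\min}_\epsilon$ on top of $(\cdot)_{\bm{\theta}}$ effectively replaces each $\theta_i$ by $\min(\theta_i, 1-\epsilon)$, and I must confirm this does not inflate the per-coordinate KL beyond the claimed $\tilde O(\theta_i/N^2) + O(1/(N^2 n))$, while still preserving $\sum_i \theta_i = \tilde O(1)$ so that the aggregated KL is $\tilde O(p\epsilon^2)$.
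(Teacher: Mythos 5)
Your proof is correct and follows essentially the same route as the paper: the paper sets $\bm{D}'=t^{\min}_\epsilon(\trv{\bm{D}})$ and then applies the $\bm{\theta}$-truncation, while you apply the $\bm{\theta}$-truncation before $t^{\min}_\epsilon$ and derive $\bm{\theta}$ from $\trv{\bm{D}}$ rather than from $\bm{D}'$; since both truncations merely cap quantiles they commute, so the two constructions coincide up to the (harmless) choice of $\bm{\theta}$, and the verification of (a)--(d) via weak revenue monotonicity, Lemmas~\ref{lem:theta}, \ref{lem:lem11}, monotonicity of $\shaded$, and Lemma~\ref{lem:dskl} with $N=\Theta(p^{-1/2}\epsilon^{-1})$ is the same as in the paper. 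The concern you flag at the end is not an obstacle: replacing $\theta_i$ by $\min(\theta_i,1-\epsilon)$ can only reduce the $\tilde O(\theta_i/N^2)$ term, and the lower bound $\epsilon$ on the mass at $\ell=0$ already dominates $9/(N^2 n)$ for your choice of $N$, so the Lemma~\ref{lem:dskl} hypotheses hold and the sum over $i$ closes exactly as you describe.
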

\begin{proof}
Set $\bm{D}':=t^{\min}_{\epsilon}(\trv{\bm{D}})$. 
Then we can define $\bm{D}'_{\bm{\theta}}$ and $\tilde{\bm{D}}'_{\bm{\theta}}=\shaded(\bm{D}'_{\bm{\theta}})$.
We verify the four results one by one. 

\paragraph{Part (a).}
It holds by the definition of $\bm{D}'=t^{\min}_{\epsilon}(\trv{\bm{D}})$ and condition (1) in the lemma.

\paragraph{Part (b).} 
By condition (3) in this lemma, and weak revenue monotonicity (Lemma~\ref{lem:weak-rev-monotone}), we have:
\[
\opt{\bm{D}} \ge \opt{\trv{\bm{D}}} \ge \opt{\bm{D}}-\beta \ep 
~.\]

Further by $\bm{D}'_{\bm{\theta}}\preceq t^{\min}_{\ep}(\trv{\bm{D}}) \preceq \trv{\bm{D}}\preceq \bm{D}$, we have:
\begin{align*}
    \opt{\trv{\bm{D}}} 
    &\ge \opt{\bm{D}'_{\bm{\theta}}} & \textrm{(weak revenue monotonicity, i.e. Lemma~\ref{lem:weak-rev-monotone})}\\
    &\ge (1-\ep)\opt{t^{\min}_{\ep}(\trv{\bm{D}}} & \textrm{(Lemma~\ref{lem:theta})}\\
    &\ge (1-\ep)^2\opt{\trv{\bm{D}}} & \textrm{(Lemma~\ref{lem:lem11})}\\
    &\ge \opt{\trv{\bm{D}}}-2\ep \opt{\bm{D}} & \textrm{(weak revenue monotonicity, i.e. Lemma~\ref{lem:weak-rev-monotone})}\\
    &\ge  \opt{\trv{\bm{D}}}-2\ep\beta &
\end{align*}
Therefore by condition (3) in this lemma:
\[
\opt{\bm{D}} \ge \opt{\bm{D}'_{\bm{\theta}}} \ge \opt{\trv{\bm{D}}}-2\ep\beta \ge \opt{\bm{D}}-3\beta \ep 
~.\]

\paragraph{Part (c).}
Since $\tilde{\vec{D}} = \shaded(\vec{D})$, $\tilde{\vec{D}}_{\vec{\theta}}' = \shaded(\vec{D}_{\vec{\theta}}')$, it holds by $\vec{D}\succeq \vec{D}_{\vec{\theta}}'$, and the monotonicity of $\shaded(\cdot)$.

\paragraph{Part (d).}
By our choice of $\bm{D}'_{\bm{\theta}}$ and $\tilde{\bm{D}}'_{\bm{\theta}}$, we apply Lemma~\ref{lem:dskl} with distribution $\bm{D}'$, $u=\bar{v_i}$, $ \Omega(\frac{1}{N^2n})=p_u\ge \max\{p\ep^2n^{-1},\ep\} $. 
Choose $N$ with proper log factors and we have:
\[
\dskl{D'_{\theta_i},\tilde D'_{\theta_i} }
= \tilde O\left(\frac{\theta_i}{N^2}+\frac{1}{N^2n}\right) 
= O\left(\frac{p\ep^2}{n}+\theta_ip\ep^2\right)
~,\]
which implies:
\[
\dskl{\tilde{\bm{D}}'_{\bm{\theta}},\bm{D}'_{\bm{\theta}}} = O\left(\sum_i \left[\frac{p\ep^2}{n}+\theta_ip\ep^2\right]\right) = O\left(p\ep^2\right)
~.\]
\end{proof}

\begin{lemma}[\cite{GHZ19}]
\label{lem:beta-p}
There exists $\bar{\bm{v}}$ satisfying the conditions in Lemma~\ref{lem:lem12} if:
\begin{enumerate}
    \item $\bm{D}$ is $[1,H]$-bounded, and set $\beta = \opt{\bm{D}}$, $p = \frac{1}{H}$; or
    \item $\bm{D}$ is regular, and set $\beta = \opt{\bm{D}}$, $p = \frac{\ep}{8}$; or
    \item $\bm{D}$ is MHR, and set $\beta = \opt{\bm{D}}$, $p = \Theta(\frac{1}{\log(2/\ep)})$; or
    \item $\bm{D}$ is $[0,1]$-bounded, and set $\beta =1$, $p = 1$.
\end{enumerate}
\end{lemma}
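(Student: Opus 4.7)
The plan is to verify conditions (1)--(3) of Lemma~\ref{lem:lem12} separately for each of the four families, with $\bar{v}_i$ chosen as a specific quantile of $D_i$ tailored to that family. In all four cases, the construction of $\bar{v}_i$ balances the constraint $p\bar{v}_i \le \beta$ against the tail revenue bound needed for $\opt{\trv{\vec{D}}} \ge \opt{\vec{D}} - \epsilon\beta$, while condition (2) holds essentially by construction. Throughout, I would rely on the cap-loss inequality $\opt{\vec{D}} - \opt{\trv{\vec{D}}} \le \sum_i \mathbb{E}_{v\sim D_i}[v \cdot \mathbf{1}[v > \bar{v}_i]]$, which bounds the revenue loss of Myerson's mechanism under coordinate-wise capping.

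For the two bounded families, I would proceed directly. In Case 4 ($[0,1]$-bounded with $\beta=p=1$), I take $\bar{v}_i$ to be the value of $D_i$ at quantile $\epsilon^2 n^{-1}$ (handling point masses by standard tie-breaking). Then $\bar{v}_i \le 1 = \beta/p$ and $q^{D_i}(\bar{v}_i) = \epsilon^2 n^{-1}$ hold by construction, while the cap-loss inequality yields $\opt{\vec{D}} - \opt{\trv{\vec{D}}} \le n \cdot 1 \cdot \epsilon^2 n^{-1} = \epsilon^2 \le \epsilon\beta$. In Case 1 ($[1,H]$-bounded with $\beta=\opt{\vec{D}}$, $p=1/H$), I would pick $\bar{v}_i$ at quantile $\epsilon^2 (nH)^{-1}$; here condition (1) uses that $\opt{\vec{D}} \ge 1$ (since all values exceed $1$ deterministically, posting price $1$ already extracts revenue at least $1$), and the same cap-loss calculation gives $n \cdot H \cdot \epsilon^2 (nH)^{-1} = \epsilon^2 \le \epsilon \opt{\vec{D}}$.

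For the regular case (Case 2), I would exploit the revenue inequality $v \cdot q^{D_i}(v) \le \opt{D_i} \le \opt{\vec{D}}$, which holds for any distribution since posting price $v$ to buyer $i$ alone earns this much. Taking $\bar{v}_i$ at quantile $\epsilon/8$ then gives $\bar{v}_i \le 8\opt{\vec{D}}/\epsilon = \beta/p$, and condition (2) holds because $p\epsilon^2 n^{-1} = \epsilon^3/(8n) \le \epsilon/8$. For condition (3), I would invoke the standard regular-distribution tail bound $\mathbb{E}_{v\sim D_i}[v \cdot \mathbf{1}[q^{D_i}(v) \le \epsilon/8]] = O(\epsilon \log(1/\epsilon)) \cdot \opt{D_i}$, which follows from concavity of the revenue curve, and then sum over $i$ to bound the total cap loss by $O(\epsilon \log(1/\epsilon)) \cdot \opt{\vec{D}}$, absorbing the logarithmic factor into the constant $1/8$. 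For the MHR case (Case 3), I would analogously use the well-known MHR tail bound that the value at quantile $q$ is at most $O(\log(1/q)) \cdot \opt{D_i}$; with $p = c/\log(2/\epsilon)$ for a suitably small constant $c$, taking $\bar{v}_i$ at quantile $\Theta(p\epsilon^2/n)$ yields $p\bar{v}_i = O(1) \cdot \opt{D_i} \le \beta$, while the exponentially decaying MHR tail delivers condition (3).

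The main obstacle will be fine-tuning the constants and logarithmic factors in the regular and MHR cases, since conditions (1) and (3) pull in opposite directions: a larger $\bar{v}_i$ makes (1) tighter but (3) easier, and vice versa. Hitting $p = \epsilon/8$ for regular and $p = \Theta(1/\log(2/\epsilon))$ for MHR on the nose requires the quantitative tail estimates developed in the Dhangwatnotai--Roughgarden--Yan line of work and in the MHR literature. Since the lemma is attributed to Guo et al.~\cite{GHZ19}, I would import their exact tail bounds rather than re-derive them, and the remaining calculations reduce to the bookkeeping already sketched above.
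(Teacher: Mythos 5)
The paper does not give a proof of this lemma; it is imported verbatim from Guo et al.\ with a citation, so there is no in-paper argument to compare your proposal against. Judged on its own, your outline is the right shape — pick $\bar{v}_i$ as a quantile tailored to the distribution family, verify conditions (1)--(3) of Lemma~\ref{lem:lem12} — and your calculations for the $[0,1]$-bounded and $[1,H]$-bounded cases are correct, modulo the cap-loss inequality $\opt{\vec{D}}-\opt{\trv{\vec{D}}}\le\sum_i \mathbb{E}_{v\sim D_i}[v\cdot\mathbf{1}[v>\bar v_i]]$, which you assert but do not justify (it is true, but it needs an argument, since lowering one bidder's value can also lower \emph{other} bidders' threshold payments; GHZ19 prove the corresponding bound carefully).

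The regular case, however, has a genuine error. You invoke the ``tail bound'' $\mathbb{E}_{v\sim D_i}\big[v\cdot\mathbf{1}[q^{D_i}(v)\le\epsilon/8]\big]=O(\epsilon\log(1/\epsilon))\cdot\opt{D_i}$, but this is false for regular distributions: for the equal-revenue distribution $v(q)=1/q$ the left-hand side is $\int_0^{\epsilon/8} (1/q)\,dq = \infty$, and more generally regularity gives no control on $\mathbb{E}[v]$. So the quantity you plan to sum over $i$ can be infinite, and the subsequent step of ``absorbing the logarithmic factor into the constant $1/8$'' would be invalid even if the bound were finite. The correct argument, used in the Dhangwatnotai--Roughgarden--Yan and Guo et al.\ line, is \emph{structural}, not expectation-based: concavity of the revenue curve $R(q)$ (with $R(0)=0$) implies that capping at quantile $q_0$ leaves a revenue curve whose maximum is at least $(1-q_0)$ times $R^*$, and this survives without any integrability assumption on the tail. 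Your MHR choice of $\bar v_i$ at quantile $\Theta(p\epsilon^2/n)$ has a related problem: for MHR, $v(q)$ grows like $\log(1/q)$ times the monopoly price, so at quantile $\Theta(\epsilon^2/(n\log(1/\epsilon)))$ you get $\bar v_i=\Theta((\log n+\log(1/\epsilon))/\log(1/\epsilon))\cdot v^*$, and $p\bar v_i$ exceeds $\beta$ once $n$ is large; the correct choice places $\bar v_i$ at a quantile independent of $n$ (making condition (2) slack rather than tight) so that condition (1) holds uniformly.
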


\begin{proof}[Proof of Theorem~\ref{thm:main}]
    Below we choose $\beta$ and $p$ according to Lemma~\ref{lem:beta-p}, and apply Lemma~\ref{lem:lem12}.
Let $\bm{D}':=t^{\min}_{\epsilon}(\trv{\bm{D}})$.
Let $\bm{\tilde{E}}$ be the (dominated) empirical distribution by Algorithm~\ref{alg:shade-algorithm}. 
By Lemma~\ref{lem:distribution-domination}, $\bm{D}\succeq\vec{\tilde{E}}\succeq \tilde{\bm{D}}$. 
Further, Part (c) of Lemma~\ref{lem:lem12} implies $\vec{\tilde{E}}\succeq\tilde{\bm{D}}'_{\bm{\theta}}$. 
Then, we have:
\begin{align*}
    \rev{M_{\bm{E}},\bm{D}}
&\ge \opt{\bm{E}} & \textrm{(strong revenue monotonicity, i.e. Lemma~\ref{lem:strong-rev-monotone})}\\
&\ge \opt{\tilde{\bm{D}}'_{\bm{\theta}}} & \textrm{(weak revenue monotonicity, i.e. Lemma~\ref{lem:weak-rev-monotone})}\\
&\ge \opt{\bm{D}'_{\bm{\theta}}}-2\ep\beta & \textrm{(part (a) and (d) in Lemma~\ref{lem:lem12}, and Lemma~\ref{lem:kl-rev})}\\[1ex]
&\ge \opt{\bm{D}}-5\ep\beta ~.& \textrm{(part (b) in Lemma~\ref{lem:lem12})}
\end{align*}

By condition (2) in Lemma~\ref{lem:lem12}, and the condition in Lemma~\ref{lem:dskl}, 
we can have $N=\tilde{\Theta}(p^{-1/2}\ep^{-1})$. 
Further, note that we can reduce the factor of $5$ by multiplying constant factors to $N$. 
So we get the results of Theorem~\ref{thm:main} by considering the $\beta$'s and $p$'s for different distribution families.
\end{proof}

\section{Multi-buyer: Targeted Sampling Model}
\label{app:large-delta}

In this section we consider the case of $0<\Delta<1$, 
further recall that we only care about $\Delta$ which is not sufficiently small, so that it won't lead to a degeneration to targeted query model. 
As discussed in Section~\ref{sec:sandwich}, the doubling trick in fact shows a feasible algorithm for $\Delta\le \frac{1}{n}$. 
We first proceed to show that for the case $\Delta>\frac{1}{n}$, we can also use the doubling trick and will obtain non-trivial results. 

In addition, when we draw one targeted sample from quantile interval $[q,q+\Delta]$, the error of the sampled value's quantile is bounded by $\Delta$.
Thus, when $\Delta$ is close to $0$ the case almost becomes targeted query. 
We will further show a feasible algorithm for $\Delta\le \frac{1}{n}$ in Section~\ref{sec:app-small-Delta}. 

More formally, 
we consider the set of all sampled intervals in the following form:
\[
[0,\Delta],[\Delta,2\Delta],[2\Delta,4\Delta],\cdots,[\frac{1}{4},\frac{1}{2}],[\frac{1}{2},\frac{3}{4}],\cdots,[1-\Delta,1]
~.\]
In each interval, assume that we draw $N$ targeted samples. 
We can construct an (aggregated) empirical distribution $\bm{E}$ with all these samples. 
The algorithm described in Section~\ref{sec:sandwich} is a special case where $\Delta=\frac{1}{n}$. 
The total number of targeted samples is $\tilde{O}(N)$, since there are totally $\tilde{\Theta}(1)$ intervals. 

We follow again the routine in Section~\ref{sec:app-small-Delta}.  
First, we design proper shading functions $\shades$ and $\shaded$ based on Bernstein's inequality. 
The concentration bound (Lemma~\ref{lem:app-bernstein}) will be proved in Section~\ref{sec:app-bernstein}. 
Then we show that for $\tilde{\bm{E}}:=\shades(\bm{E})$, $\tilde{\bm{D}}:=\shaded(\bm{D})$ we have $\bm{D}\succeq\tilde{\bm{E}}\succeq\tilde{\bm{D}}$ in Lemma~\ref{lem:app-monotone}. 

Given the above, we are ready to propose the sandwich lemma for $\Delta\ge \frac{1}{n}$ in the next lemma.
\begin{lemma}[Sandwich Lemma]
\label{lem:app-sandwich}
    For any buyer $i$ and any threshold $\theta_i$ in the quantile space and the corresponding value $v_i$, define distributions $D_{\theta_i}$, $\tilde{E}_{\theta_i}$, and $\tilde{D}_{\theta_i}$ by rounding values smaller than $v_i$ down to $0$ in $D_i$, $\tilde{E}_i$, and $\tilde{D}_i$ respectively.
    Then:
    \begin{enumerate}
        \item $D_{\theta_i} \succeq \tilde{E}_{\theta_i} \succeq \tilde{D}_{\theta_i}$; and
    \end{enumerate}
    
    Suppose further that $D_i$'s is defined on $[0, u]$ with point masses of $\tilde \Omega(\frac{\Delta}{N})$ at $u$ and $\ell = 0$.
    Then:
    \begin{enumerate}
        \setcounter{enumi}{1}
        \item $\dskl{D_{\theta_i},\tilde{D}_{\theta_i}} = \tilde O (\frac{\theta_i}{N}+\frac{\Delta}{N})$.
    \end{enumerate}
\end{lemma}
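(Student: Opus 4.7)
The plan is to adapt the proof of Lemma~\ref{lem:dskl} to the regime $\Delta \ge \frac{1}{n}$. Part~(1) will follow immediately: Lemma~\ref{lem:app-monotone} gives $\vec{D} \succeq \vec{\tilde{E}} \succeq \vec{\tilde{D}}$ coordinate-wise, and rounding values below $v_i$ down to $0$ is a monotone operation on cdfs, so the sandwich relation is preserved after truncation at $\theta_i$. Hence the bulk of the work is the KL divergence bound in Part~(2).

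For Part~(2), I first dispose of the easy regimes as in Lemma~\ref{lem:dskl}: if $\theta_i < p_u$ then $D_{\theta_i}$ is a point mass at $0$ and the bound is trivial; if $\theta_i > 1 - p_\ell$ the problem reduces to $\theta_i = 1 - p_\ell$. So I may assume $p_u \le \theta_i \le 1 - p_\ell$ and initially that $D_i$ has no point masses in the interior (the general case follows by Jensen exactly as in Lemma~\ref{lem:dskl}). Under these assumptions I decompose
\[
    \dskl{D_{\theta_i}, \tilde{D}_{\theta_i}} = (\textrm{point mass at } u) + (\textrm{point mass at } 0) + (\textrm{interior integral}),
\]
mirroring Eqn.~\eqref{eqn:sandwich}.

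For the two point-mass contributions I would apply $\ln(1+x) \le x$ to bound them by $\frac{4 f(p_u)^2}{p_u - 2f(p_u)}$ and $\frac{4 f(\theta_i)^2}{\theta_i}$ respectively (using symmetry of $f$ around $\tfrac{1}{2}$ for the latter). The shading function $f$ from Lemma~\ref{lem:app-monotone} has the form $f(q) = \tilde{O}\bigl(\sqrt{q\Delta/N}\bigr) + \tilde{O}(\Delta/N)$ for $q \in [\Delta, \tfrac{1}{2}]$, and $f(q) = \tilde{O}(\Delta/N)$ for $q \in [0, \Delta]$. Combined with the assumption $p_u, p_\ell = \tilde{\Omega}(\Delta/N)$, this forces $2f(q) \le q/2$ so the denominators are $\Omega(q)$, and each point-mass term simplifies to $O(f(q)^2/q) = \tilde{O}(\Delta/N)$ when $q \le \Delta$ and $\tilde{O}(1/N)$ once $q > \Delta$. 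Thus both point-mass contributions are $\tilde{O}(\Delta/N + \theta_i/N)$.

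For the interior integral I would partition $(p_u, \theta_i]$ into the doubling bands used to define $f$, namely $[\Delta, 2\Delta], [2\Delta, 4\Delta], \ldots, [\tfrac{1}{4}, \tfrac{1}{2}]$ (and symmetrically above $\tfrac{1}{2}$ if $\theta_i$ extends that far), plus the base piece $[p_u, \Delta]$. On each band $|f'(q)| \le \tfrac{1}{2}$, so using $x \ln(1+x) \le 2x^2$ for $|x| \le \tfrac{1}{2}$ the integrand is dominated by $2(f'(q))^2$. On the $j$-th band, $(f'(q))^2$ is $\tilde{O}(1/(Nq))$ in the small-quantile regime and $\tilde{O}(1/N)$ once $q$ exceeds $\Delta$, so band $j$ contributes $\tilde{O}(\Delta/N)$ or $\tilde{O}(|\textrm{band}|/N)$ respectively. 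Summing geometrically across the $O(\log(1/\Delta))$ bands contained in $[p_u, \theta_i]$ telescopes into $\tilde{O}(\theta_i/N + \Delta/N)$, which is the target bound.

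The main obstacle will be making sure the shading function $f$ designed from the Bernstein bound (Lemma~\ref{lem:app-bernstein}) has exactly the form needed to make the point-mass and interior estimates all close at $\tilde{O}(\Delta/N + \theta_i/N)$ without extra factors; in particular the additive $\Delta/N$ term in $f$, which replaces the $\tfrac{1}{N^2 n}$ term in Eqn.~\eqref{eqn:shade-function}, must be small enough to dominate the point-mass contribution but large enough to cover the Bernstein error at the left endpoint $p_u$. As in Lemma~\ref{lem:dskl}, the boundary points between bands where $f'$ is undefined can be handled by viewing $\shaded$ as a measure and choosing the derivative arbitrarily there, since these points have measure zero.
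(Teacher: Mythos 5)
Your proposal tracks the paper's proof quite closely: the same three-term decomposition of $\dskl{\cdot,\cdot}$ into point mass at $u$, point mass at $\ell=0$, and an interior integral; $\ln(1+x)\le x$ for the point masses; $x\ln(1+x)\le 2x^2$ for $|x|\le\tfrac12$ on the interior, justified by $p_u,p_\ell=\tilde\Omega(\Delta/N)$ keeping $|2f'(q)|\le 1$; and the standard Jensen reduction for interior point masses. The overall strategy is correct and matches the paper.

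There is, however, a concrete misstatement of the shading function that needs fixing. The paper defines $f(q)=2\sqrt{qL\Delta/N}$ for $q\in[0,\Delta]$ and $f(q)=2\sqrt{L/N}\cdot q$ for $q\in(\Delta,\tfrac12]$, with additive constants $4L\Delta/N$ and $5L\Delta/N$ kept separately inside $\shades$ and $\shaded$. You have the square-root-in-$q$ form placed in the regime $q>\Delta$ (where the paper's $f$ is linear in $q$) and claim $f(q)=\tilde O(\Delta/N)$ for $q\le\Delta$ (the paper's $f$ there is of order $\sqrt{q\Delta L/N}$, which at $q=\Delta$ is $\tilde\Theta(\Delta/\sqrt N)$, not $\tilde\Theta(\Delta/N)$). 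The $f$ as you wrote it would actually be \emph{smaller} than the Bernstein error for $q>\Delta$, so it would not certify the sandwich $\bm D\succeq\tilde{\bm E}\succeq\tilde{\bm D}$ you are invoking — that sandwich only holds with the paper's larger $f$. Your intermediate estimates (point-mass terms $\tilde O(\Delta/N)$ for $q\le\Delta$, $\tilde O(q/N)$ for $q>\Delta$; interior integral $\tilde O(\Delta/N+\theta_i/N)$) are consistent with the paper's $f$, not with the one you wrote, so the arithmetic happens to land on the right answer but from the wrong premise. Relatedly, the "doubling bands" you partition the interior into are the sampling intervals, not pieces of $f$: $f$ has only two pieces on $[0,\tfrac12]$ and the interior integral can be computed directly, as the paper does. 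Fixing the stated form of $f$ and dropping the band decomposition would make your sketch match the paper's proof.
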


Finally our main result of this part is summerized in the next theorem. 
Given Lemma~\ref{lem:app-sandwich}, we can determine $N$, and the proof is almost identical as that in Section~\ref{sec:multi-proof} so we will omit the proof.
\begin{theorem}
\label{thm:app-main}
Suppose $\Delta \ge  \frac{1}{n}$. 
For any $\epsilon\in(0,1)$ and any $n$-bidder product value distribution $\bm{D}$, with probability $1-\delta$, 
there exists an algorithm that can return an expected revenue at least $(1-\epsilon)\opt{\bm{D}}$:
\begin{enumerate}
    \item with $\tilde{O} (n\Delta\ep^{-3})$ targeted samples if $\bm{D}$ is regular; or
    \item with $\tilde{O} (n\Delta\ep^{-2})$ targeted samples if $\bm{D}$ is MHR; or
    \item with $\tilde{O} (n\Delta H\ep^{-2})$ targeted samples if $\bm{D}$ is $[1,H]$-bounded.\\
    There also exists an algorithm that can return an expected revenue at least $\opt{\bm{D}}-\ep$:
    \item with $\tilde{O} (n\Delta\ep^{-2})$ targeted samples if $\bm{D}$ is $[0,1]$-bounded.
\end{enumerate}
\end{theorem}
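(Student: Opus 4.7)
The plan is to mimic the proof of Theorem~\ref{thm:main} in Section~\ref{sec:multi-proof} almost line-for-line, substituting the new Lemma~\ref{lem:app-sandwich} for Lemma~\ref{lem:dskl}. The empirical distribution $\bm{E}$ is built by aggregating the $N$ targeted samples drawn from each of the $\tilde{O}(1)$ dyadic quantile intervals $[0,\Delta],[\Delta,2\Delta],[2\Delta,4\Delta],\ldots,[1-\Delta,1]$, and the output mechanism is Myerson's optimal auction for the dominated empirical distribution $\tilde{\bm{E}} \defeq \shades(\bm{E})$.

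First I would choose $\beta$ and $p$ via Lemma~\ref{lem:beta-p} according to the family of $\bm{D}$, set $\bm{D}' \defeq \treps{\trv{\bm{D}}}$, and invoke Lemma~\ref{lem:lem12} but with Lemma~\ref{lem:app-sandwich} replacing Lemma~\ref{lem:dskl} in its proof. This produces truncated distributions $\bm{D}'_{\vec{\theta}}$ and $\tilde{\bm{D}}'_{\vec{\theta}} = \shaded(\bm{D}'_{\vec{\theta}})$ satisfying (a) bounded support in $[0,\beta/p]$, (b) $\opt{\bm{D}'_{\vec{\theta}}} \ge \opt{\bm{D}} - 3\epsilon\beta$, (c) $\tilde{\bm{E}} \succeq \tilde{\bm{D}}'_{\vec{\theta}}$, and a revised KL bound that I obtain by summing the per-coordinate bound from Lemma~\ref{lem:app-sandwich}:
\[
    \dskl{\bm{D}'_{\vec{\theta}}, \tilde{\bm{D}}'_{\vec{\theta}}} = \sum_{i=1}^n \tilde{O}\Bigl( \tfrac{\theta_i}{N} + \tfrac{\Delta}{N} \Bigr) = \tilde{O}\Bigl( \tfrac{1}{N} + \tfrac{n\Delta}{N} \Bigr) = \tilde{O}\Bigl( \tfrac{n\Delta}{N} \Bigr) ~,
\]
using $\sum_i \theta_i = \tilde{O}(1)$ from Lemma~\ref{lem:theta} and $\Delta \ge 1/n$ in the last step.

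Next I would chain the four revenue inequalities exactly as in Section~\ref{sec:multi-proof}: strong revenue monotonicity gives $\rev{M_{\tilde{\bm{E}}},\bm{D}} \ge \opt{\tilde{\bm{E}}}$; weak revenue monotonicity with (c) gives $\opt{\tilde{\bm{E}}} \ge \opt{\tilde{\bm{D}}'_{\vec{\theta}}}$; Lemma~\ref{lem:kl-rev} applied with $\alpha = \epsilon\beta$ (and the $\tilde{\Theta}(\epsilon^{-2})$ bounded-support revenue estimator legitimized by (a)) converts the KL bound into $\opt{\tilde{\bm{D}}'_{\vec{\theta}}} \ge \opt{\bm{D}'_{\vec{\theta}}} - 2\epsilon\beta$; and (b) closes the chain at $\rev{M_{\tilde{\bm{E}}},\bm{D}} \ge \opt{\bm{D}} - 5\epsilon\beta$. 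Requiring $\tilde{O}(n\Delta/N) = O(p\epsilon^2)$ pins down $N = \tilde{\Theta}(n\Delta\, p^{-1}\epsilon^{-2})$; since there are only $\tilde{O}(1)$ quantile intervals, the per-buyer sample count is also $\tilde{O}(N)$. Substituting the four $(\beta,p)$ pairs from Lemma~\ref{lem:beta-p} produces the four stated bounds, and absorbing the constant $5$ into $\epsilon$ by rescaling $N$ recovers the target approximation.

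The main obstacle I expect is verifying the mild point-mass hypothesis of Lemma~\ref{lem:app-sandwich}, namely $\tilde{\Omega}(\Delta/N)$ mass at both endpoints of the support of $\bm{D}'$. The bottom endpoint picks up mass exactly $\epsilon$ by the construction of $\treps{\cdot}$, and the top endpoint picks up mass at least $p\epsilon^2/n$ by condition (2) of Lemma~\ref{lem:lem12}. Checking that both quantities dominate $\tilde{\Omega}(\Delta/N) = \tilde{\Omega}(p\epsilon^2/n)$ for the chosen $N$ uses only $\Delta \ge 1/n$ and is routine. Once this sanity check is done, the argument is essentially a recompilation of Section~\ref{sec:multi-proof} with $N^2$ replaced by $N$ throughout, reflecting the linear (rather than quadratic) dependence of the targeted-sample KL bound on $N^{-1}$.
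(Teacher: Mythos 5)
Your proposal is correct and follows the same route the paper takes: the authors explicitly omit the proof of Theorem~\ref{thm:app-main}, stating it is ``almost identical'' to the proof of Theorem~\ref{thm:main} once Lemma~\ref{lem:app-sandwich} replaces Lemma~\ref{lem:dskl}, which is exactly what you did, including the correct identification of $N = \tilde{\Theta}(n\Delta\, p^{-1}\epsilon^{-2})$ and the routine check of the $\tilde{\Omega}(\Delta/N)$ point-mass hypothesis. One small nit: the assumption $\Delta \ge 1/n$ is actually consumed in your simplification $\tilde{O}(1/N + n\Delta/N) = \tilde{O}(n\Delta/N)$ of the KL sum rather than in the point-mass verification as you suggest at the end, but this does not affect the validity of the argument.
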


\subsection{Extended concentration bound}
\label{sec:app-bernstein}
We extend the concentration bound to our case, and the analysis here is based on a directly application of Bernstein inequality, as shown in the next lemma. 
\begin{lemma}\label{lem:bstn2}
Suppose we have sampled $N$ values from $D$ in quantile interval $[a,b]\subseteq [0,1]$ and construct an empirical distribution $E$, we will have for any $v$ with quantile w.r.t. $D$ in $[a,b]$, with probability $1-\delta$:
\[
|q^E(v)-q^D(v)|\le \sqrt{\frac{2(q^D(v)-a)(b-q^D(v))\cdot L}{N}} + \frac{L}{N}(b-a)
~,\]
where $L=O(\log (N/\delta))$.
\end{lemma}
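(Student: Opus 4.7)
The plan is to reduce the lemma to a direct application of Bernstein's inequality (Lemma~\ref{lem:bernstein}) and then to union-bound over the jumps of the empirical quantile function. The crucial preliminary observation is that when a sample is drawn from $D$ conditioned on having quantile in $[a,b]$, its quantile $q^D(v_j)$ is uniformly distributed on $[a,b]$. Hence for any fixed value $v$ with $q^D(v)\in[a,b]$, and using that $q^D$ is monotone nonincreasing in $v$,
\[
    p \;\defeq\; \Pr\bigl[v_j\ge v\bigr] \;=\; \Pr\bigl[q^D(v_j)\le q^D(v)\bigr] \;=\; \frac{q^D(v)-a}{b-a}.
\]
Let $X_j=\mathbf{1}[v_j\ge v]-p$; these are i.i.d., zero-mean, bounded by $1$, with variance $\sigma^2=p(1-p)$. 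The natural empirical quantile estimator is
$q^E(v) = a + \tfrac{b-a}{N}\sum_{j=1}^N \mathbf{1}[v_j\ge v]$, so that $|q^E(v)-q^D(v)|=\tfrac{b-a}{N}\bigl|\sum_j X_j\bigr|$.

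Next I would apply Bernstein's inequality to $\sum_j X_j$ (together with its symmetric counterpart). Solving the resulting tail bound $\exp(-t^2/(2N\sigma^2+(2/3)t))\le \delta'$ for $t$ yields, with probability $1-2\delta'$,
\[
    \Bigl|\sum_{j=1}^N X_j\Bigr| \;\le\; \sqrt{2N p(1-p)\log(1/\delta')} \;+\; \tfrac{2}{3}\log(1/\delta').
\]
Multiplying through by $(b-a)/N$ and using the identity $(b-a)^2 p(1-p)=(q^D(v)-a)(b-q^D(v))$ gives exactly the shape of the claimed bound for a single $v$, with a factor $\log(1/\delta')$ in place of $L$ and a constant $2/3$ in place of $1$ on the second term.

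To upgrade to a uniform bound over all $v$ with $q^D(v)\in[a,b]$, I would exploit the fact that $q^E$ is piecewise constant with jumps only at the (at most $N$) sample values, while $q^D$ is monotone. Consequently the supremum of $|q^E(v)-q^D(v)|$ over any constant-region of $q^E$ is attained at an endpoint of that region (i.e.\ at a sample value or one of $a,b$), so it suffices to union-bound the event above over $N+1$ points. Taking $\delta'=\delta/(2(N+1))$ absorbs a $\log N$ factor and produces $L=O(\log(N/\delta))$, which also comfortably covers the constant $2/3$ and gives the stated bound $\tfrac{L}{N}(b-a)$ on the second term.

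The main obstacle is not in the concentration step but in making the uniform bound rigorous for values of $v$ that are not sample points: one has to argue carefully that replacing the continuum of $v$'s by the discrete set of jump points does not lose anything, because the RHS (which depends on $v$ through $(q^D(v)-a)(b-q^D(v))$) is concave in $q^D(v)$ and can only shrink as $v$ moves into the interior of a constant-region of $q^E$. This monotonicity argument, together with the one-sided bounds $0\le p\le 1$ needed to invoke Bernstein, is the only nontrivial bookkeeping; everything else reduces to standard Bernoulli concentration.
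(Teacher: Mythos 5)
Your Bernstein step for a fixed $v$ is correct and matches the paper's computation exactly (including the identity $(b-a)^2 p(1-p)=(q^D(v)-a)(b-q^D(v))$ and the resulting bound). The gap is in the passage to a uniform bound. You propose to union-bound the Bernstein event over the $N+1$ jump points of $q^E$, i.e.\ over the sample values $v_j$ together with $a,b$. But the jump points are \emph{random}, and the Bernstein bound you derived is a concentration statement for a \emph{fixed} $v$. A union bound over $N+1$ applications of that statement is only valid if the $N+1$ evaluation points are fixed in advance; at the random point $v=v_j$ the summand $\mathbf{1}[v_j\ge v]$ is identically $1$, the others are no longer centred at $p$, and one would have to reargue via order statistics (e.g.\ via the Beta tails of $q^D(v_{(k)})$) to salvage the step. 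The paper sidesteps this entirely by netting the quantile interval with the $N+1$ \emph{fixed} points $\{a,\,a+\tfrac{b-a}{N},\,\ldots,\,b\}$, applying Bernstein with failure probability $\delta/N$ at each fixed net point, and then handling arbitrary $v$ by noting that $q^D(v)$ lies within $\tfrac{b-a}{N}$ of some net point and so the extra netting error is at most $\tfrac{b-a}{N}$, which is then absorbed into the $\tfrac{L}{N}(b-a)$ term.

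A secondary, smaller issue: you claim the right-hand side $\sqrt{2(q^D(v)-a)(b-q^D(v))L/N}+\tfrac{L}{N}(b-a)$ ``can only shrink'' as $q^D(v)$ moves into the interior of a constant region of $q^E$ because it is concave. That is backwards: a concave function attains its \emph{minimum} on an interval at an endpoint, so the RHS can only \emph{grow} in the interior. This is actually in your favor, and the clean way to state the interpolation is that the LHS $|q^E(v)-q^D(v)|=|c-q^D(v)|$ (with $c$ the constant value of $q^E$ on the region) is convex in $q^D(v)$ while the RHS is concave, so the inequality at both endpoints implies it on the whole subinterval. If you replace the random jump-point union bound by the paper's fixed-net union bound (accepting the extra $\tfrac{b-a}{N}$ slack), the rest of your proof goes through.
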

\begin{proof}
Consider quantiles that are multiples of $\frac{b-a}{N}$ in $[a,b]$ and the corresponding values. 
We first net the whole quantile space w.r.t. $D$ into pieces of length $\frac{b-a}{N}$.
For such value $v$ and its quantile w.r.t. $D$, $q^D(v)$, we construct the $N$ variables: $Y={y_1, y_2, \cdots, y_N}$ where $y_i=(\mathbb{I}[v_i\ge v]-q^D(v))/N$. We have $\mathbb{E}[Y] =0$ and $\sigma^2_Y = (b-q^D(v))(q^D(v)-a)/N^2$. Then by the Bernstein inequality with $\vert y_i\vert <M=\frac{b-a}{N}$,
\begin{align*}
    Pr\left[\left\vert \sum_i y_i \right \vert > t\right]&\le
    2exp\left(\frac{-t^2}{2N\cdot \sigma_Y^2+2/3\cdot M \cdot t}\right)\\
    &= 2exp\left(\frac{-t^2}{2N\cdot (b-q^D(v))(q^D(v)-a)/N^2+2/3\cdot \dfrac{b-a}{N}\cdot t}\right).\\
\end{align*}
We need the right term smaller than $\frac{\delta}{N}$. 
By choosing a proper $L=O(\log (N/\delta))$, we can obtain 
\begin{align*}
    t &\le \dfrac{\dfrac{\frac{2}{3}(b-a) L}{N}+\sqrt{\left(\dfrac{\frac{2}{3}(b-a) L}{N}\right)^2+ L\cdot \dfrac{8}{N}\cdot (b-q^D(v))(q^D(v)-a)}}{2}\\
    &\le \sqrt{\frac{2(q^D(v)-a)(b-q^D(v))\cdot L}{N}} + \frac{2L(b-a)}{3N}.
\end{align*}
So for any sampled value from the quantile interval $[a,b]$, its quantile should be located between two adjacent quantiles with a distance of $\frac{b-a}{N}$.
We get $|q^E(v)-q^D(v)|$ is upper bounded by:
\[
\sqrt{\dfrac{2(q^D(v)-a)(b-q^D(v))L}{N}} + \dfrac{2L}{3N}(b-a) + \dfrac{b-a}{N}
\le\sqrt{\dfrac{2(q^D(v)-a)(b-q^D(v))L}{N}} + \dfrac{L}{N}(b-a)
~.\]
\end{proof}

We can directly extend this lemma to product value $\bm{D}$, and consider many sampled intervals at the same time.
Suppose all sampled intervals are $\{[a_j,b_j]\}$, and we targeted sample $N_j$ times from interval $[a_j,b_j]$. 
We denote $I_{min}=\min_j\{(b_j-a_j)/N_j\}$, and net the whole quantile space w.r.t. each $D_i$ into intervals of length $I_{min}$. 
Then the points at the multiples of $I_{min}$ should all be close to the true quantile with probability $1-\delta\cdot I_{min}/n$. 
Adjusting the value of $L$ and therefore we have the following lemma:
\begin{lemma}
\label{lem:app-bernstein}
With probability $1-\delta$, for any value $v$, for any buyer $i\in[n]$, if $q^{D_i}(v)\in[a_j,b_j]$ we have:
\[
|q^{E_i}(v)-q^{D_i}(v)| \le \sqrt{\frac{2(q^{D_i}(v)-a_j)(b_j-q^{D_i}(v))L}{N_j}} + \frac{L(b_j-a_j)}{N_j}
~,\]
where $L=O(\log {\frac{n}{\delta I_{min}}})$ .
\end{lemma}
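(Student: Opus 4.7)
The plan is to lift the single-interval, single-buyer concentration of Lemma~\ref{lem:bstn2} to the multi-interval, multi-buyer setting by a union bound, realising the informal sketch given in the paragraph preceding the statement.

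First, for every buyer $i$ I net the quantile space $[0,1]$ at the multiples of $I_{min}$, yielding at most $n/I_{min}$ candidate pairs $(i,q^{\star})$ in total. For each such pair with $q^{\star}\in[a_j,b_j]$, I apply the Bernstein estimate used inside the proof of Lemma~\ref{lem:bstn2} to the $N_j$ indicator random variables coming from the targeted samples drawn from $[a_j,b_j]$. Setting the per-point failure probability to $\delta I_{min}/n$ absorbs the union bound across all netting pairs and all sampled intervals; solving Bernstein's inequality for this tail parameter produces the stated choice $L=O(\log(n/(\delta I_{min})))$.

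Second, for an arbitrary value $v$ with $q^{D_i}(v)\in[a_j,b_j]$ (not necessarily at a netting point), I sandwich $q^{D_i}(v)$ between the two consecutive multiples $q^{-}\le q^{D_i}(v)\le q^{+}$ of $I_{min}$, let $v^{\pm}$ be the corresponding values, and use the monotonicity $q^{E_i}(v^{-})\le q^{E_i}(v)\le q^{E_i}(v^{+})$ together with $q^{D_i}(v^{\pm})=q^{\pm}$ to interpolate the concentration bound. Since $|q^{\pm}-q^{D_i}(v)|\le I_{min}\le (b_j-a_j)/N_j$, the snapping contributes an additive slack of order $(b_j-a_j)/N_j$, absorbed into the additive term $L(b_j-a_j)/N_j$ of the stated bound. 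The variance factor $(q^{\pm}-a_j)(b_j-q^{\pm})$ differs from $(q^{D_i}(v)-a_j)(b_j-q^{D_i}(v))$ by at most $I_{min}(b_j-a_j)$, whose square-root contribution is bounded by $O(\sqrt{L I_{min}(b_j-a_j)/N_j})\le O(L(b_j-a_j)/N_j)$ and is again absorbed into the additive slack.

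The argument is essentially bookkeeping, so I do not foresee any conceptual obstacle. The only thing to be careful about is that the netting granularity $I_{min}=\min_j (b_j-a_j)/N_j$ was defined precisely so that the snap error is no coarser than the per-interval additive slack $(b_j-a_j)/N_j$, which is what lets a single netting be used uniformly across all intervals even though they have very different widths $b_j-a_j$ and sample counts $N_j$.
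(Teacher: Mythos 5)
Your proposal is correct and matches the paper's intended argument: the paper itself only gives a short informal sketch (net at multiples of $I_{\min}$, union bound over the $n/I_{\min}$ netting pairs, then adjust $L$), and your proof is a faithful, carefully accounted expansion of exactly that sketch, including the observation that $I_{\min}\le (b_j-a_j)/N_j$ is what lets a single netting granularity cover all intervals simultaneously.
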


\subsection{Shading functions and KL-divergence. }
In the following analysis, we design proper shading functions based on Lemma~\ref{lem:app-bernstein}. 
The shading function should be monotone and differentiable\footnote{Also the following definition of $f(q)$ is not differentiable. As discussed in Section~\ref{sec:multi} we omit the tiny adjustment and assume this $f(q)$ is proper.}, 
and the shaded amount should cover the bound in Lemma~\ref{lem:app-bernstein}. 
Also, let $L$ be the one used in Lemma~\ref{lem:app-bernstein}. 
\begin{equation}
\label{eqn:app-f}
f(q)=\left\{
    \begin{aligned}
    &2\sqrt{\frac{qL\Delta}{N}}, & q\in [0,\Delta]\\
    &2\sqrt{\frac{L}{N}}\cdot q & q\in(\Delta,\frac{1}{2}]\\
    &f(1-q) & q\in(\frac{1}{2},1]
    \end{aligned}
\right.
\end{equation}

Further define shading functions based on $f(q)$:
\[
    \shades(q):=\max\{ 0,q-f(q)-\frac{4L\Delta}{N}\}~, \shaded(q):=\max\{ 0,q-2f(q)-\frac{5L\Delta}{N}\}~,
\]
and let $\tilde{\bm{D}}=\shaded(\bm{D}),\shades(\bm{E})=\tilde{\bm{E}}$ by applying the function on quantile of each dimension. 

\begin{lemma}
\label{lem:app-monotone}
$\shades$ and $\shaded$ are monotone increasing in $q$. 
Therefore, $\tilde{\bm{E}}:=\shades(\bm{E})$ and $\tilde{\bm{D}}:=\shaded(\bm{D})$ are well defined.
\end{lemma}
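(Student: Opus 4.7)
\medskip

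\noindent\textbf{Proof proposal for Lemma~\ref{lem:app-monotone}.}
My plan is to reduce monotonicity of $\shades$ and $\shaded$ to bounding the derivative $f'$ away from $1/2$ on the region where the quantity inside the $\max$ is positive, and then to do this by a three-case analysis matching the three pieces of $f$ in Eqn.~\eqref{eqn:app-f}. Since $\max\{0,\cdot\}$ preserves monotonicity of a non-decreasing function, it suffices to show that the inner expressions $g_s(q):=q-f(q)-4L\Delta/N$ and $g_d(q):=q-2f(q)-5L\Delta/N$ are non-decreasing wherever they are non-negative (together with continuity at the internal breakpoints $q=\Delta$ and $q=1/2$, where one can verify directly that $f$ matches from both sides, using $f(\Delta)=2\Delta\sqrt{L/N}$ in both the first and second pieces). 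I will focus on $\shaded$, since the argument for $\shades$ is strictly easier (coefficient $4$ rather than $5$, and $f$ rather than $2f$).

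\medskip

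\noindent\emph{Case analysis.} On the middle piece $q\in(\Delta,1/2]$, we have $f'(q)=2\sqrt{L/N}$, so $1-2f'(q)=1-4\sqrt{L/N}\ge 0$ under the standard assumption $N\ge 16L$ (which is mild because $L=O(\log(n/(\delta I_{\min})))$ is polylogarithmic in $N$). On the upper piece $q\in(1/2,1]$, the symmetry $f(q)=f(1-q)$ gives $f'(q)=-f'(1-q)\le 0$, so $1-2f'(q)\ge 1>0$ and the inner expression is strictly increasing.

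\medskip

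\noindent\emph{Handling the delicate first piece.} The apparent obstacle is $q\in[0,\Delta]$, where $f(q)=2\sqrt{qL\Delta/N}$ has unbounded derivative $f'(q)=\sqrt{L\Delta/(Nq)}$ as $q\to 0$, so $g_d$ is decreasing near $0$. The resolution is to observe that in this small-$q$ regime $g_d$ is already negative, so $\shaded$ has saturated at $0$ and monotonicity is trivially preserved. Concretely, writing $y=\sqrt{q}$ and $a=\sqrt{L\Delta/N}$, the condition $g_d(q)\ge 0$ becomes $y^2-4ay-5a^2\ge 0$, i.e.\ $(y-5a)(y+a)\ge 0$, which on the first piece holds iff $q\ge 25 L\Delta/N$. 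On this sub-region we then have $2f'(q)=2\sqrt{L\Delta/(Nq)}\le 2/\sqrt{25}=2/5<1$, so $g_d'(q)\ge 3/5>0$, and $g_d$ is non-decreasing from its first zero up to $q=\Delta$. Combined with continuity of $g_d$ at $\Delta$ and the earlier cases, this gives monotonicity of $\shaded$ on all of $[0,1]$. The same scheme applied with the constant $5L\Delta/N$ replaced by $4L\Delta/N$ and $2f$ by $f$ yields monotonicity of $\shades$ (with the analogous saturation threshold and a strictly smaller bound on the derivative).

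\medskip

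\noindent The hardest part, as indicated above, is not any computation but setting up the right observation: one cannot hope to bound $f'$ by a constant uniformly on $[0,\Delta]$, and must instead couple the monotonicity analysis to the location of the zero of $g_d$. Once this coupling is recognized, the remaining work reduces to plugging in the three pieces of $f$ and invoking the assumption $N\gtrsim L$.
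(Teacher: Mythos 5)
Your proof is correct and follows essentially the same strategy as the paper: identify the saturation threshold $p_u=\Theta(L\Delta/N)$ where the inner expression becomes nonnegative, and then verify that $f'(q)$ is bounded by the required constant on $[p_u,1-p_u]$ by a three-case analysis over the pieces of $f$. Your version is somewhat more explicit (you factor $g_d$ to get the exact threshold $25L\Delta/N$ and the derivative bound $2/5$, where the paper just asserts a suitable $p_u$), but the structure and the key observation — that the unboundedness of $f'$ near $q=0$ is harmless because $\shaded$ has already saturated at $0$ there — are identical.
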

\begin{proof}
First, consider $\shades(q)$. 
Solve $q=f(q)+\frac{4L\Delta}{N}$ we have $q=\Theta(\frac{L\Delta}{N})$. 
Then we can choose some $p_u=\Theta(\frac{L\Delta}{N})$ and claim that for any $q\in (p_u,1-p_u)$, $\shades(q)$ is positive, and otherwise $\shades(q)=0$. 
Calculate $f'(q)$ on the following three intervals respectively: 
\begin{equation}
\label{eqn:app-f'}
f'(q)=\left\{
    \begin{aligned}
    &\sqrt{\frac{L\Delta}{Nq}}, & q\in [p_u,\Delta)\\
    &2\sqrt{\frac{L}{N}} & q\in(\Delta,\frac{1}{2})\\
    &-f'(1-q) & q\in(\frac{1}{2},1-p_u]
    \end{aligned}
\right.
\end{equation}
We can checked that $f'(q)\le 1$ case by case.
\begin{enumerate}
\item $q\in [p_u,\frac{1}{n})$. $f'(q)\le f'(p_u)=\sqrt{\frac{L\Delta}{p_uN}}$. By a proper $p_u=\Theta(\frac{L\Delta}{N})$ we can have $f'(p_u)\le 1$.
\item $q\in(\frac{1}{n},\frac{1}{2})$. It is true because $\frac{L}{N}=o(1)$.
\item $q\in(\frac{1}{2},1-p_u]$. In this case $f'(q)\le 0$.
\end{enumerate}
The proof of $\shaded(q)$ is almost identical. 
Therefore, $\tilde{\bm{E}}:=\shades(\bm{E})$ and $\tilde{\bm{D}}:=\shaded(\bm{D})$ are well defined.
\end{proof}
\begin{lemma}
\label{lem:app-distribution-domination}
Assuming the bounds in Lemma~\ref{lem:app-bernstein}, and our choice of $f(q)$ in Eqn.~(\ref{eqn:app-f}), we have:
\[
    \bm{D}\succeq \tilde{\bm{E}}\succeq \tilde{\bm{D}}~,
\]
where $\tilde{\bm{E}}:=\shades(\bm{E})$ and $\tilde{\bm{D}}:=\shaded(\bm{D})$.
\end{lemma}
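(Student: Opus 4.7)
The lemma asserts two first-order stochastic dominance relations, each of which reduces, coordinate by coordinate and value by value, to a scalar quantile inequality: for every buyer $i$ and every $v > 0$, writing $q^D \defeq q^{D_i}(v)$ and $q^E \defeq q^{E_i}(v)$, I must show
\[q^D \ge \shades(q^E) \quad\text{and}\quad \shades(q^E) \ge \shaded(q^D).\]
Condition on the high-probability event of Lemma~\ref{lem:app-bernstein}, so that whenever $q^D \in [a_j, b_j]$ one has $|q^E - q^D| \le \sqrt{2(q^D - a_j)(b_j - q^D)L/N} + L(b_j - a_j)/N$. By the symmetry $f(q) = f(1-q)$ built into both shading functions it suffices to treat $q^D \in [0, 1/2]$; the upper tail follows by replacing $q^D$ with $1-q^D$.

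For $\bm D \succeq \tilde{\bm E}$, the case $q^E \le q^D$ is immediate since $\shades(q^E) \le q^E \le q^D$. Otherwise I must bound $q^E - q^D \le f(q^E) + 4L\Delta/N$, which I split by which interval contains $q^D$. On $[0, \Delta]$ the inequality $(q^D)(\Delta - q^D) \le q^D \Delta$ gives a square-root of at most $\sqrt{2 q^D L \Delta / N} \le f(q^D)$ and the additive $L\Delta/N$ sits inside $4L\Delta/N$. On a doubling interval $[2^{k-1}\Delta, 2^k\Delta]$ the bound $(q^D - a_j)(b_j - q^D) \le q^D (b_j - a_j) \le (q^D)^2$ (using $q^D \ge a_j \ge b_j - a_j$) turns the square-root into $\sqrt{2L/N}\, q^D$, which is strictly smaller than $f(q^D) = 2\sqrt{L/N}\, q^D$; the residual $(2-\sqrt 2)\sqrt{L/N}\, q^D$ of slack in $f(q^D)$ dominates the Bernstein additive term $L\cdot 2^{k-1}\Delta/N$ whenever $\sqrt{L/N} \ge L/N$, which holds in our parameter regime. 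Finally I replace $f(q^D)$ by $f(q^E)$ using monotonicity of $f$ on $[0, 1/2]$ and its continuity across the breakpoints at $\Delta$ and $1/2$.

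For $\tilde{\bm E} \succeq \tilde{\bm D}$, expanding both sides (when each is positive) reduces to $q^D - q^E \le 2f(q^D) - f(q^E) + L\Delta/N$. If $q^E \le q^D$, the Bernstein bound already established in the previous paragraph yields $q^D - q^E \le f(q^D) + L\Delta/N$, so it suffices that $f(q^D) \ge f(q^E)$, which holds by monotonicity. If $q^E > q^D$ the left side is nonpositive and I only need $f(q^E) \le 2f(q^D) + L\Delta/N$; this follows because $f$ is concave on $[0,\Delta]$ and linear on $[\Delta, 1/2]$, combined with the Bernstein-controlled bound $q^E \le 2q^D + O(L\Delta/N)$. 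The corner cases where either shading truncates to $0$ reduce to direct comparisons among $q^E$, $q^D$, and $f$, and under the good event $\shades(q^E) = 0$ can only occur when $\shaded(q^D) = 0$ as well.

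The principal obstacle is that $\shades$ evaluates $f$ at the empirical quantile $q^E$ while Bernstein controls errors in terms of the true quantile $q^D$, so the two quantiles may straddle a piecewise breakpoint of $f$ at $q = \Delta$ or $q = 1/2$, and in rare events may even lie in adjacent sampled intervals. I would handle such straddling by exploiting the explicit additive buffers $4L\Delta/N$ in $\shades$ and $5L\Delta/N$ in $\shaded$, which are engineered to dominate the Bernstein additive term in the smallest sampled interval and to absorb $|f(q^E) - f(q^D)|$ across a breakpoint, playing the same role as the $\Omega(1/(N^2 n))$ slack in the analogous sandwich argument of Section~\ref{sec:multi}.
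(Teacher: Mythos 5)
Your high-level strategy matches the paper's: reduce to scalar quantile inequalities, condition on the good event of Lemma~\ref{lem:app-bernstein}, split by the sampled interval containing $q^D$, and exploit the monotonicity of $f$ together with the additive buffers $4L\Delta/N$ and $5L\Delta/N$. However, there is a genuine gap in the symmetry reduction.

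You claim that because $f(q) = f(1-q)$, it suffices to treat $q^D \in [0,1/2]$ and the upper tail ``follows by replacing $q^D$ with $1-q^D$.'' This does not work, because the two shading functions are applied to \emph{different} quantiles ($\shades$ to the empirical $q^E$, $\shaded$ to the true $q^D$), and the reflection $q \mapsto 1-q$ reverses the order of $q^E$ and $q^D$. Concretely, for $\bm{D} \succeq \tilde{\bm{E}}$ with $q^D < 1/2$, the nontrivial case is $q^E > q^D$ and you must show $q^E - q^D \le f(q^E) + 4L\Delta/N$, i.e., $f$ evaluated at the \emph{larger} of the two quantiles. For $q^D > 1/2$, writing $p^D = 1-q^D$ and $p^E = 1-q^E$, the nontrivial case has $p^E < p^D$ and the inequality becomes $p^D - p^E \le f(p^E) + 4L\Delta/N$, i.e., $f$ evaluated at the \emph{smaller} quantile. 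Since $f$ is increasing on $[0,1/2]$, the upper-tail bound is strictly stronger and does not follow from the lower-tail one. The paper's proof reflects this: it treats all four intervals of $q^D$ for each dominance direction, and the $(1/2, 1-\Delta)$ calculations are genuinely different from the $(\Delta, 1/2]$ ones. For $\bm{D} \succeq \tilde{\bm{E}}$ the paper derives, with $k = \sqrt{L/N}$ and the mild assumption $k \le 1/4$, the chain $[(1-q^D) - (k+k^2)(1-q^D)]\cdot 2k \ge (1-q^D)(k+k^2)$ to conclude $f(q^E) \ge q^E - q^D$; for $\tilde{\bm{E}} \succeq \tilde{\bm{D}}$ the paper needs the extra constraint $k < (\sqrt{17}-3)/4$ so that $3k^2 + 2k^3 - k < 0$. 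Neither step is a reparametrization of the corresponding lower-tail case.

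A secondary, smaller issue: in the doubling interval $[2^{j-1}\Delta, 2^j\Delta]$, the additive Bernstein term from Lemma~\ref{lem:app-bernstein} is $L(b_j - a_j)/N = L\cdot 2^{j-1}\Delta/N$, not $L\Delta/N$. Your claim ``$q^D - q^E \le f(q^D) + L\Delta/N$'' is literally false there; what saves the argument (and what the paper does) is to absorb the additive term into a multiple of $q^D$ itself, using $b_j - a_j \le q^D$, and then compare against $f(q^D) = 2\sqrt{L/N}\,q^D$. You arrive at the right numerical conclusion, but the bookkeeping as written is off.

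In short: the case analysis you sketch for $q^D \le 1/2$ is close to the paper's, but the proof is not finished until you redo the four intervals on the other side, with the direction-reversal taken into account; invoking the symmetry of $f$ alone is insufficient.
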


See Appendix~\ref{app:proof-of-domination} for a detailed proof of Lemma~\ref{lem:app-distribution-domination}.
\bigskip

\begin{proof}[Proof of Lemma~\ref{lem:app-sandwich}]
The first part follows by the definitions of the dsitributions and the previous lemma. 
It remains to prove the second part.

Let the probability masses of point mass $u$ and $\ell$ be $p_u$ and $p_{\ell}$ respectively.
First assume there are no point masses other than $0$ and $u$, and consider $\theta_i\in(p_u,1-p_{\ell})$. 
For $\theta_i\notin (p_u,1-p_{\ell})$, the calculation is almost identical, and we can prove by modifying end points.  
Then the KL divergence can be written as:
\begin{equation*}
\begin{aligned}
    \dskl{D'_{\theta_i},\tilde{D}'_{\theta_i}} = \bigg( &(p_u-\shaded(p_u))\ln\frac{p_u}{\shaded(p_u)}\bigg) & \textrm{(Point mass at $u$.)}\\
                                 &+ \bigg((1-\theta_i-\shaded(1-\theta_i))\ln\frac{1-\shaded(1-\theta_i)}{\theta_i} \bigg)& \textrm{(Point mass at $0$.)}\\
                                 &+ \int_{0<v<u}\bigg( \ln \frac{d\tilde D'_{\theta_i}}{dD'_{\theta_i}}(\frac{d\tilde D'_{\theta_i}}{dD'_{\theta_i}}-1)\bigg)dD'_{\theta_i} &
\end{aligned}
\end{equation*}

By almost an identical calculation, as in Lemma~\ref{lem:dskl}, the first term and the second term is $\tilde{O}\left(\frac{\theta_i}{N}+\frac{\Delta}{N}\right)$. 
Third,  we calculate:
\[
\int_{p_u<q\le \theta_i}(\shaded'(q)-1)\ln(\shaded'(q))dq
~.\]
\begin{equation*}
|d'(q)-1| = |2f'(q)|=\left\{
    \begin{aligned}
    &2\sqrt{\frac{L\Delta}{Nq}}, & q\in [p_u,\Delta)\land q\le \theta_i\\
    &4\sqrt{\frac{L}{N}} & q\in(\Delta,\frac{1}{2})\land q\le \theta_i\\
    &2|f'(1-q)| & q\in(\frac{1}{2},1-p_{\ell}]\land q\le \theta_i
    \end{aligned}
\right.
\end{equation*}
By our choice of proper $p_u,p_{\ell}$, we can have $|f'(q)|\le \frac{1}{2}$. 
By $x\ln(1+x)\le 2x^2$ for $|x|\le 1/2$, the third term is then upper bounded by $\int_{p_u}^{\theta_i}2|f'(q)|^2dq$. 
Also note that $|f'(q)|=|f'(1-q)|$. 
Calculate $|2f'(q)|$ in each part and we have:
\begin{align*}
\int_{p_u<q\le \theta_i}(\shaded'(q)-1)\ln(\shaded'(q))dq
&\le \int_{p_u<q\le \theta_i}2|f'(q)|^2 dq\\
&= \tilde O (\frac{\theta_i}{N}) ~.
\end{align*}
If it has point masses other than $0$ and $u$, the argument is identical as that in \cite{GHZ19}.

To sum up, we have
\[
    D_{SKL}(D'_{\theta_i},\tilde{D}'_{\theta_i})=\tilde O (\frac{\theta_i}{N})+ \tilde O(\frac{\Delta}{N})~.
\]

\end{proof}

\subsection{The case of $\Delta<\frac{1}{n}$.}
\label{sec:app-small-Delta}

We will now consider the case of $\Delta\le \frac{1}{n}$\footnote{To avoid a tiny $\Delta$ which will lead to a degeneration to targeted query model, it suffices to only consider $\Delta\ge \frac{1}{N^2n}$. }.  
Briefly speaking, we will conbine the ideas from $\Delta\ge \frac{1}{n} $ , and $\Delta=0$.
We will use the same $f(q)$ in Section~\ref{sec:multi-proof} which will ensure a small enough KL-divergence. 
\begin{equation}
\label{eqn:app-shade-function}
    f(q)=
    \begin{cases}
        \frac{1}{N} \cdot \sqrt{\frac{q}{n}} + \frac{1}{N^2n}, & q\in [0,\frac{1}{n}]\\
        \frac{1}{N} \cdot q + \frac{1}{N^2n} & q\in(\frac{1}{n},\frac{1}{2}]\\
        f(1-q) & q\in(\frac{1}{2},1]
    \end{cases}
\end{equation}

Then, our algorithm is descriped in Algorithm~\ref{alg:shade-algorithm-appendix}. 
Briefly speaking, in each round the algorithm calculates how many samples is needed to ensure that:
the concentration bound induced by Lemma~\ref{lem:app-bernstein}, is at most $f(q)$. 
If $f(q)< \Delta$
then we need more than one targeted sample, the algorithm will behave similarly to case of $\Delta\ge \frac{1}{n}$. 
Our analysis is then based on concentration bound as in the previous section. 
Otherwise, the algorithm will in fact, do a targeted query with an additive error of $\Delta$. 
Then the analysis is based on Pinpoint Lemma, as discussed in Section~\ref{sec:multi}. 

More precisely, if the needed number of targeted samples is less than one, then we are ready to apply Pinpoint Lemma, by using Algorithm~\ref{alg:shade-algorithm}. 
This in fact implies $\Delta$ is so tiny that $f(q)\ge \Delta$. 
Note that this is equivalently a targeted query with an additive error of $\Delta$.
Therefore, we may define shading functions based on $f(q)$:
\[
    \shades(q):=\max\big\{ 0,q-\min\{f(q),\Delta\}\big\}~, 
    \shaded(q):=\max\{ 0,q-2f(q)\}~,
\]

\begin{algorithm}[!h]
	\caption{Dominated Empirical Distribution for $\Delta<\frac{1}{n}$.}
	\label{alg:shade-algorithm-appendix}
	\begin{algorithmic}[1]
	\STATE \textbf{input:} parameters $n,\Delta,\{N_j\}$, shading function $\shades$ and function $f$.
	\STATE For each buyer $i\in[n]$, 
	\FOR{$i=1,2,\ldots,n$}
		\STATE Starting from $j=1$, $a_0=0$ we targeted sample $N_j$ times from interval $[a_{j-1},a_{j-1}+\Delta]$ in round $j$. 
		\STATE Set $a_j=j\Delta$ and $j=j+1$. 
		\STATE Stop if $a_j=\frac{1}{2}$, or $f(a_j)\ge \Delta$. 
		\STATE If the latter happens, apply Algorithm~\ref{alg:shade-algorithm} on interval $[a_{j},1-a_{j}]$, set the step to be $f$ instead of $2f$. 
		That is, starting from $q_0=1-a_j$, 
		we replace targeted querying quantile $q_k$ by targeted sampling once in quantile interval $[q_k-\Delta, q_k]$, 
		and set $q_{k+1}=q_k-f(q_k)$, whenever the right-hand side is at least $a_j$. 
		\STATE Repeat the above in a symmetric way in $[1-a_j,1]$. 
		\STATE Construct the empirical distribution $E_i$ according to the above probability.
	\ENDFOR
	\STATE Use $\shades$ to construct the shade empirical $\tilde{\bm{E}}=\tilde{E}_1\times \tilde{E}_2\times \ldots\times \tilde{E}_n$ as follows:\\
		\begin{equation*}
		q^{\tilde{E}_i}(v)=
			\begin{cases}
			\max\big\{ 0,q^{E_i}(v)-\min\{f(q^{E_i}(v)),\Delta\}\big\}& \text{if $v>0$}\\
			1& \text{if $v=0$}
			\end{cases}
		\end{equation*}
	\STATE \textbf{output:} Myerson's optimal auction w.r.t. $\tilde{\bm{E}}$.
	\end{algorithmic}
\end{algorithm}

For proper parameters, we can summarize our results of this part as the next theorem. 
The proof is similar as that in the case of $\Delta\ge \frac{1}{n}$.
We include the choice of parameters and the proof in Appendix~\ref{app:proof-of-larger-Delta}. 
\begin{theorem}
\label{thm:all-Delta}
    For any $\epsilon \in (0,1)$, any $\Delta\in (0,1)$ and any number of buyers $n$, there is an algorithm that learns an auction with expected revenue at least $(1-\epsilon) \opt{\bm{D}}$:
    \begin{enumerate}
        \item with $\tilde{O} (\max\{n\Delta\ep^{-3},\ep^{-1.5}\})$ targeted samples if $\bm{D}$ is regular; or
        \item with $\tilde{O} (\max\{n\Delta\ep^{-2},\ep^{-1}\})$ targeted samples if $\bm{D}$ is MHR; or
        \item with $\tilde{O} (\max\{n\Delta H\ep^{-2},\sqrt{H}\ep^{-1}\})$ targeted samples if $\bm{D}$ is $[1,H]$-bounded.
    \end{enumerate}
    There is also an algorithm that learns an auction with expected revenue at least $\opt{\bm{D}}-\ep$:
    \begin{enumerate}
        \setcounter{enumi}{3}
        \item with $\tilde{O} (\max\{n\Delta\ep^{-2},\ep^{-1}\})$ targeted samples if $\bm{D}$ is $[0,1]$-bounded.
    \end{enumerate}
\end{theorem}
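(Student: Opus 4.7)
The plan is to reduce Theorem~\ref{thm:all-Delta} to the framework already developed in Subsection~\ref{sec:multi-proof}. The key objects are the shading function $f$ of Eqn.~(\ref{eqn:app-shade-function}) (identical to the one in Section~\ref{sec:multi}) and the auxiliary distribution $\vec{\tilde{D}} = \shaded(\vec{D})$. If Algorithm~\ref{alg:shade-algorithm-appendix} produces an empirical $\vec{\tilde{E}}$ with $\vec{D}\succeq\vec{\tilde{E}}\succeq\vec{\tilde{D}}$, then Lemma~\ref{lem:dskl} gives the same KL-divergence bound $\dskl{D_{\theta_i},\tilde{D}_{\theta_i}}=\tilde O(\theta_i/N^2)+O(1/(N^2n))$ as in Section~\ref{sec:multi}, and the chain of inequalities at the end of Subsection~\ref{sec:multi-proof} (combining Lemmas~\ref{lem:theta}, \ref{lem:lem12}, \ref{lem:beta-p}, \ref{lem:kl-rev}, \ref{lem:strong-rev-monotone}, \ref{lem:weak-rev-monotone}) yields the revenue guarantee once $N=\tilde\Theta(p^{-1/2}\ep^{-1})$, with $(\beta,p)$ chosen per distribution class as in Lemma~\ref{lem:beta-p}. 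Thus the whole task reduces to (a) verifying the sandwich property and (b) counting the targeted samples used.

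For the sandwich property I would treat the two phases of Algorithm~\ref{alg:shade-algorithm-appendix} separately. In the \emph{sampling} phase (rounds while $f(a_j)<\Delta$, so each round needs more than one sample), Lemma~\ref{lem:app-bernstein} with window width $\Delta$ and $N_j$ samples yields quantile error at most $\sqrt{2\Delta^2 L/N_j}+L\Delta/N_j$ on the sub-interval $[a_{j-1},a_j]$. Choosing $N_j=\tilde\Theta(\Delta^2/f(a_j)^2)$ makes this error at most $f(q)$ for every $q\in[a_{j-1},a_j]$, which combined with $\shades(q)=\max\{0,q-\min\{f(q),\Delta\}\}$ gives $q^{\tilde E_i}(v)\le q^{D_i}(v)$ and, using the extra $f$ in $\shaded=q-2f$, $q^{\tilde E_i}(v)\ge q^{\tilde D_i}(v)$. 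In the \emph{pinpoint} phase (once $f(q)\ge\Delta$), a single targeted sample from $[q_k-\Delta,q_k]$ returns a value $v$ with $q^{D_i}(v)\in[q_k-\Delta,q_k]$, so the quantile is within $\Delta\le f(q_k)$ of $q_k$; the half-step recursion $q_{k+1}=q_k-f(q_k)$ (rather than $q_k-2f(q_k)$ as in Eqn.~(\ref{eqn:pinpionts})) then makes the analog of Lemma~\ref{lem:distribution-domination} go through on $[a_J,1-a_J]$, and the symmetric version covers $[1-a_J,1]$. Stitching together the two ranges yields $\vec{D}\succeq\vec{\tilde E}\succeq\vec{\tilde D}$ globally.

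For the sample count per buyer: using $f(a_j)\gtrsim N^{-1}\sqrt{a_j/n}$ with $a_j=j\Delta$, one gets $N_j=\tilde O(\Delta nN^2/j)$, and summing over $j$ up to the stopping index $J=O(n\Delta N^2)$ yields $\tilde O(n\Delta N^2)$ samples in the sampling phase. The pinpoint phase contributes $\tilde O(N)$ samples by essentially the same argument as Lemma~\ref{lem:pinpoint-2} (with step $f$ instead of $2f$, losing only a constant factor). Thus the total per buyer is $\tilde O(\max\{N,n\Delta N^2\})$, and substituting the four $(\beta,p)$ pairs from Lemma~\ref{lem:beta-p} gives precisely the four bounds claimed: $N=\tilde\Theta(\ep^{-3/2})$ for regular, $\tilde\Theta(\ep^{-1})$ for MHR and $[0,1]$-bounded, $\tilde\Theta(\sqrt H\ep^{-1})$ for $[1,H]$-bounded. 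The main obstacle is the uniform verification of the sandwich across the phase transition at $a_J$: the concentration-based bound of the sampling phase and the deterministic $\Delta$-bound of the pinpoint phase must both be dominated by $f$, which is why $\shades$ is designed as $q-\min\{f(q),\Delta\}$; care is required to check the domination inequalities hold exactly at the boundary, and to confirm that $N_j\ge 1$ precisely throughout the sampling phase and that $N_J$ does not jump discontinuously at the transition.
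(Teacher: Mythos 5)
Your proposal reconstructs the paper's proof essentially step for step: it reduces to the machinery of Subsection~\ref{sec:multi-proof} by establishing the sandwich $\vec{D}\succeq\vec{\tilde E}\succeq\vec{\tilde D}$ in two phases (Bernstein concentration with per-round budget $N_j\asymp\Delta^2/f(a_j)^2\asymp n\Delta N^2/j$ while $f(a_j)<\Delta$, then a pinpoint step with recursion $q_{k+1}=q_k-f(q_k)$ once $f\ge\Delta$), sums to $\tilde O(\max\{N,n\Delta N^2\})$ samples, and substitutes $N=\tilde\Theta(p^{-1/2}\ep^{-1})$ with the $(\beta,p)$ pairs from Lemma~\ref{lem:beta-p}, matching the paper's appendix proof. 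The only detail glossed over relative to the paper is the careful split of the $N_j$ choice into the $a_j\le 1/n$ and $a_j>1/n$ regimes and the secondary term covering the additive part of Bernstein's bound, but these affect only logarithmic factors already absorbed by $\tilde O(\cdot)$.
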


\section{Single Buyer}\label{sec:single}
In the single buyer case, we again consider four families of distributions: Regular, $[0, 1]$-bounded, $[1, H]$-bounded and MHR. We first focus on the case of query complexity upper bounds. After that, we give the lower bounds of query complexity to almost match the upper bound.  

\subsection{Upper Bounds of Targeted Query Complexity}

We sketch the proofs of the theorems below and the missing proofs can be found in Appendix \ref{prf:upper_single}.

 \begin{theorem} \label{thm:single_upper_all}
     For any $\epsilon \in (0,1)$ and a single buyer, there is an algorithm that learns an auction with expected revenue at least $(1-\epsilon) \opt{D}$:
    \begin{enumerate}
        \item with $O(\log \epsilon^{-1})=\tilde{O}(1)$ targeted queries if $D$ is {regular} or {MHR}; or
        \item with $O(\epsilon^{-1}\cdot \log H)=\tilde{O}(\epsilon^{-1})$ targeted queries if $D$ is {$[1,H]$}{-bounded}.
    \end{enumerate}
    The algorithm also learns an auction with expected revenue at least $\opt{D}-\epsilon$:  
    \begin{enumerate}
        \setcounter{enumi}{2}
        \item with $\tilde{O} (\ep^{-1})$ targeted queries if $D$ is $[0,1]$-bounded.
    \end{enumerate}
 \end{theorem}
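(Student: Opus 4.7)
The plan is to reduce each mechanism to posting a take-it-or-leave-it price, so the expected revenue at quantile $q$ is exactly $R(q) := q \cdot p(q)$ where $p(q)$ is the value at quantile $q$; one targeted query at $q$ reveals $R(q)$ exactly, and the task becomes approximately maximizing $R$ on $[0,1]$ with few point evaluations.

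For part (3), the $[0,1]$-bounded additive case, I would query the uniform grid $\{k\epsilon : 1 \le k \le \lceil \epsilon^{-1} \rceil\}$ and post the price achieving the largest observed revenue. For the optimal quantile $q^*$, the largest grid point $q_k \le q^*$ satisfies $p(q_k) \ge p(q^*)$ (monotonicity of the quantile function) and $q_k \ge q^* - \epsilon$, so $R(q_k) \ge (q^* - \epsilon)\, p(q^*) \ge R(q^*) - \epsilon$ because $p(q^*) \le 1$. For part (2), the $[1,H]$-bounded multiplicative case, I would use the geometric grid $q_k := (1+\epsilon)^{-k}$ down to $q_k \gtrsim 1/H$; this range is enough because posting $p = 1$ already yields revenue $\ge 1$ while $R^* \le H$. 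Picking $k$ with $q_{k+1} \le q^* \le q_k$ gives $q_{k+1} \ge q^*/(1+\epsilon)$ and $p(q_{k+1}) \ge p(q^*)$, so $R(q_{k+1}) \ge R^*/(1+\epsilon) \ge (1-\epsilon) R^*$. The grid has $O(\epsilon^{-1}\log H) = \tilde O(\epsilon^{-1})$ points.

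For part (1), regularity makes $R$ concave on $[0,1]$ with $R(0) = 0$ and $R(1) \ge 0$, and I would exploit this via an adaptive two-phase search. A single query at $q = 1/2$ already yields the constant approximation $R(1/2) \ge R^*/2$ by a standard chord argument (concavity plus the boundary values). Phase 1 then runs a doubling/halving search, using one comparison per step to determine which side of the peak we are on (unimodality is immediate from concavity); this localizes $q^*$ within a constant-width multiplicative interval $[\hat q/C, C\hat q]$. Phase 2 performs ternary (or golden-section) search on this interval on the logarithmic quantile scale, shrinking its multiplicative width from constant to $1 + \epsilon$ in $O(\log \epsilon^{-1})$ rounds. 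At that point the concavity inequalities $R(q) \ge (q/q^*) R^*$ for $q \le q^*$ and $R(q) \ge ((1-q)/(1-q^*)) R^*$ for $q \ge q^*$ immediately yield $R \ge (1-\epsilon) R^*$ at an interval endpoint. Since MHR is strictly stronger than regularity, the same algorithm and analysis apply verbatim.

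The main obstacle is converting a localization of $q^*$ into a \emph{multiplicative} approximation of $R^*$: a sharply peaked concave revenue curve can have an arbitrarily narrow $(1-\epsilon)$-level set around $q^*$, so additive precision in the maximizer does not translate to multiplicative precision in the maximum value. This forces the refinement to be run on the logarithmic scale, which in turn demands that Phase 1 output a multiplicative—not merely additive—constant interval around $q^*$; the one-query pivot $R(1/2) \ge R^*/2$ together with the unimodality-driven direction test is what makes such a Phase 1 feasible without paying more than $O(\log \epsilon^{-1})$ queries overall.
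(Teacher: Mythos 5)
Your parts (2) and (3) are correct and essentially the paper's own argument: a geometric grid $(1+\epsilon)^{-k}$ on $[1/H,1]$ for the $[1,H]$-bounded case and a uniform $\epsilon$-grid for the $[0,1]$-bounded case, taking the queried quantile just below $q^*$ so that the value only increases while the quantile shrinks by at most a factor $(1+\epsilon)$ (resp.\ an additive $\epsilon$). For part (1) you take a genuinely different route from the paper: the paper first truncates to the quantile interval $[\epsilon,1-\epsilon]$ (resp.\ $[1/e,1-1/e]$ for MHR), then halves that interval \emph{additively} via a five-point evaluation per round for $O(\log\epsilon^{-1})$ rounds, and finally queries the surviving point closest to $1/2$; you instead localize $q^*$ multiplicatively and run ternary search on the logarithmic quantile scale, finishing with the chord inequality $R(q_L)\ge (q_L/q^*)R^*$ at the left endpoint. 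Your Phase 2 and the final inequality are sound (concavity gives unimodality, which is preserved under the monotone reparameterization $q\mapsto\log q$, and exact queries make the ternary comparisons valid).

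The genuine gap is in Phase 1. You claim that a halving search from $q=1/2$ localizes $q^*$ in a constant-width multiplicative interval ``without paying more than $O(\log\epsilon^{-1})$ queries,'' but the number of halving steps is $\Theta(\log(1/q^*))$, and for regular distributions $q^*$ can be arbitrarily small independently of $\epsilon$: take the concave revenue curve $R(q)=q/\delta$ for $q\le\delta$ and $R(q)=1-c(q-\delta)$ for $q>\delta$ with tiny $\delta$ and small $c>0$. On such an instance your descent $1/2,1/4,1/8,\dots$ keeps seeing $R$ increase and runs for $\log_2(1/\delta)\gg\log(1/\epsilon)$ queries before it can bracket $q^*$, so the stated query bound fails as written. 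The fix is exactly the truncation the paper builds in up front: stop the downward search at quantile $\Theta(\epsilon)$ and, if it is reached, simply output the value at quantile $\epsilon$; when $q^*\le\epsilon$ the right-side chord inequality you already wrote, $R(\epsilon)\ge\frac{1-\epsilon}{1-q^*}R^*\ge(1-\epsilon)R^*$, shows this loses only an $\epsilon$ fraction, and with the cap Phase 1 costs $O(\log\epsilon^{-1})$ queries, restoring your total. (For MHR your remark that the regular analysis applies verbatim is fine, since MHR implies regular; the paper's $1/e$ truncation is only needed for its sample-complexity variant.)
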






\subsubsection*{Proof Sketch} 
It is known that the optimal mechanism for a single buyer is to set a take-it-or-leave-it price at the highest point of the revenue-quantile ($R-q$) curve. 
Consider the $R-q$ curve of the buyer, each time we query a quantile, we will know the value at this quantile and calculate the revenue at this point. Since for regular and MHR distributions, the $R-q$ curve is concave, it is natural to apply the idea of binary search. Each time we query the whole interval equidistantly and select the sub-interval with the largest possible revenues to make our target interval shrink for a constant ratio. For $[0,1]$ and $[1,H]$ bounded distributions, we just net the whole quantile interval into many sub-intervals.
At the endpoints of these intervals, the revenue is approximately accurate. 
Since the revenues of any quantile is close to the revenue of its adjacent ends, we only need to choose the maximum revenue from the queried points.

This idea can be extended to cases of targeted samples, with the help of concentration bound. 
If $\Delta$ isn't sufficiently small, we also prove the following theorem in Appendix \ref{prf:upper_single}.
\begin{theorem}
\label{thm:single_upper_sample}
For any $\epsilon \in (0,1)$ and a single buyer, there is an algorithm that learns an auction with expected revenue at least $(1-\epsilon) \opt{D}$:
    \begin{enumerate}
        \item with $\tilde{O}(\max \{1,\min\{\Delta^2\epsilon^{-4}, \Delta \epsilon^{-3}\}\})$ targeted samples if $D$ is {regular}; or
        \item with $\tilde{O}(\max \{1,\Delta^2\epsilon^{-2}\})$ targeted samples if $D$ is {MHR}; or
        \item with $\tilde{O}(\max\{H^{1/2}\epsilon^{-1},\Delta H\epsilon^{-2}\})$ targeted samples if $D$ is {$[1,H]$}{-bounded}.
    \end{enumerate}
    The algorithm also learns an auction with expected revenue at least $\opt{D}-\epsilon$:  
    \begin{enumerate}
        \setcounter{enumi}{3}
        \item with $\tilde{O}(\max\{\epsilon^{-1}, \Delta\epsilon^{-2}\})$ targeted samples if $D$ is $[0,1]$-bounded.
    \end{enumerate}
\end{theorem}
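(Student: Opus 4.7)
The plan is to reuse the four targeted-query algorithms sketched for Theorem~\ref{thm:single_upper_all}---a grid/net on the quantile space for the bounded families and a binary-search on the concave revenue-quantile curve for the regular/MHR families---but, because we can no longer pinpoint a single quantile, to replace each exact query at some quantile $q^*$ by a batch of $N$ targeted samples drawn from the length-$\Delta$ interval $[q^*-\Delta/2,q^*+\Delta/2]$. Reading off the empirical CDF on such a batch, Lemma~\ref{lem:app-bernstein} controls the quantile-estimation error at the value returned by roughly $\sqrt{\Delta/N}+\Delta/N$ up to logarithmic factors, which is the single tool that converts per-query accuracy into per-batch sample count.

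For the bounded families I would net the quantile space with $\tilde{O}(\epsilon^{-1})$ pinpoints and, at each one, take $N=\tilde{\Theta}(\Delta\epsilon^{-1})$ targeted samples. Lemma~\ref{lem:app-bernstein} then delivers $\tilde{O}(\epsilon)$ quantile error at every pinpoint, which after scaling by $H$ (in the $[1,H]$ case) translates to $\tilde{O}(\epsilon)$ revenue error; picking the empirically best pinpoint and posting the corresponding price loses at most $\tilde{O}(\epsilon)$ against $\opt{D}$. Total cost is $\tilde{O}(\Delta\epsilon^{-2})$ for $[0,1]$-bounded and $\tilde{O}(\Delta H\epsilon^{-2})$ for $[1,H]$-bounded, matching the stated bounds after taking the maximum with the query complexity from Theorem~\ref{thm:single_upper_all}.

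For MHR I would exploit the standard fact that strong concavity localises the optimum to an $\tilde{O}(\epsilon^{1/2})$-wide quantile window and apply the batched-sample idea inside that window, yielding $\tilde{O}(\Delta^2\epsilon^{-2})$ total samples. For regular I would run \emph{two} algorithms in parallel and return whichever uses fewer samples: (i) a $\log\epsilon^{-1}$-round binary search on the concave $R$-curve in which each round spends $\tilde{O}(\Delta^2\epsilon^{-4})$ samples to decide the next trisection with failure probability $o(1/\log\epsilon^{-1})$, and (ii) a CR-style grid of $\tilde{O}(\epsilon^{-2})$ quantile pinpoints with $\tilde{O}(\Delta\epsilon^{-1})$ samples each. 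Taking the cheaper of the two yields $\tilde{O}(\min\{\Delta^2\epsilon^{-4},\Delta\epsilon^{-3}\})$, and the $\max\{1,\cdot\}$ in the theorem statement absorbs the regime in which $\Delta$ is so small that a single sample from each interval is already essentially a targeted query.

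The main obstacle is the regular case. Batched revenue estimates are noisy, and near the peak of the concave $R$-curve the revenue is nearly flat, so one has to show that a wrong trisection decision in the binary search costs only $O(\epsilon\cdot\opt{D})$, not more, and simultaneously that on the CR-style grid at least one pinpoint falls in the $(1-\epsilon)$-plateau around the optimum. Matching these two analyses so that the same sample counts drive the Bernstein tails below $\delta/\log\epsilon^{-1}$ in strategy~(i) and below $\delta/\epsilon^{-2}$ in strategy~(ii), and then concluding $\tilde{O}(\min\{\Delta^2\epsilon^{-4},\Delta\epsilon^{-3}\})$, is where most of the technical effort will go; the bounded and MHR cases are comparatively routine once Lemma~\ref{lem:app-bernstein} is plugged into the query-model algorithms.
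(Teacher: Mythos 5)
Your high-level template---replace each exact quantile query in Theorem~\ref{thm:single_upper_all} by a batch of $N$ targeted samples and control the batch error via Lemma~\ref{lem:app-bernstein}---is the same lens the paper uses, but the actual proof takes a different route through the ``large-$\Delta$'' terms, and two of your concrete claims do not survive scrutiny.

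First, for the bounded families the paper does \emph{not} plug batches into the grid algorithm of Theorem~\ref{thm:single_upper_all}: for $\Delta$ not sufficiently small it simply invokes the multi-buyer Theorem~\ref{thm:all-Delta} with $n=1$, so the $\Delta\epsilon^{-2}$, $\Delta H\epsilon^{-2}$, and $\sqrt{H}\epsilon^{-1}$ terms fall out of the shaded-empirical/Sandwich machinery rather than from a direct net. Your direct-net calculation has a concrete error for $[1,H]$: a quantile error of $\tilde{O}(\epsilon)$ at a pinpoint $q_k$ produces a revenue error of order $\epsilon\cdot v(q_k)$, which can be as large as $\epsilon H$ when $q_k\approx 1/H$; ``scaling by $H$'' turns an $\epsilon$ quantile error into an $\epsilon H$ revenue error, not an $\epsilon$ revenue error. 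The geometric grid actually needs multiplicative accuracy $O(q_k\epsilon)$ at each pinpoint, so a uniform $N=\tilde{\Theta}(\Delta\epsilon^{-1})$ per pinpoint cannot work; the per-pinpoint batch has to grow as $\Delta/(q_k\epsilon)$ (which sums to $\tilde{O}(\Delta H\epsilon^{-2})$), or one simply appeals to Theorem~\ref{thm:all-Delta} as the paper does. Second, for the regular case your strategy~(ii) (``a CR-style grid of $\tilde{O}(\epsilon^{-2})$ quantile pinpoints with $\tilde{O}(\Delta\epsilon^{-1})$ samples each'') is not what the paper does and is not obviously correct for the same reason: a regular optimum can sit at quantile $\Theta(\epsilon)$, where the accuracy must be $O(\epsilon^2)$, far tighter than the $\tilde{O}(\epsilon)$ that a $\Delta\epsilon^{-1}$-batch delivers. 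The paper instead reuses the same five-point binary search (Algorithm~3) for \emph{both} $\Delta$ regimes, choosing the per-round batch size $N$ by a case split on $\Delta$ against the quantile range $[\epsilon,1-\epsilon]$ (resp.\ $[1/e,1-1/e]$ for MHR), with Lemmas~\ref{lem1} and~\ref{lem:shrink} controlling the shrinking coefficient; the $\min\{\Delta^2\epsilon^{-4},\Delta\epsilon^{-3}\}$ comes from that case split, not from racing two different algorithms. Your MHR ``$\sqrt{\epsilon}$-window'' idea is also a departure---the paper again uses the $1/e$-truncated binary search and exploits that all relevant quantiles are $\Theta(1)$, giving $N=\Delta^2\epsilon^{-2}$ per round immediately. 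In short: the skeleton matches, but the bounded-family error analysis is wrong, the regular strategy is unverified and differs in kind, and you are missing the crucial reuse of the multi-buyer sandwiching result that the paper leans on for the large-$\Delta$ regime.
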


\subsection{Lower Bounds of Targeted Query Complexity}

Finally, we give lower bounds for different families of distributions matching the upper bounds in the previous subsection.
The upper and lower bounds match up to a constant in the case of regular, $[1,H]$-bounded and $[0,1]$ bounded distributions.
For MHR distributions, we do not know if the logarithmic factor in the upper bound is necessary.
Nonetheless, the upper bound is tight up to a single logarithmic factor since we need at least $1$ query.
We sketch the proof below, deferring the formal proofs to Appendix \ref{prf:lower_single}.

\begin{theorem}\label{thm:single_lower_all}
     For any $\epsilon \in (0,1)$ and a single buyer, any algorithm should make at least:
    \begin{enumerate}
        \item $\Omega(\log(\epsilon^{-1}))$ targeted queries to achieve at least $(1-\epsilon)\opt{D}$ expected revenue if the buyer's distribution is {regular}; or
        \item
        $\Omega(\epsilon^{-1})$ targeted queries to achieve at least $\opt{D}-\epsilon$ expected revenue if the buyer's distribution is {$[0,1]$}{-bounded}; or
        \item
        $\Omega(\epsilon^{-1}\cdot\log H)$ targeted queries to achieve at least $(1-\epsilon)\opt{D}$ expected revenue if the buyer's distribution is {$[1,H]$}{-bounded}.
    \end{enumerate}
 \end{theorem}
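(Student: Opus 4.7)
The plan is to apply Yao's minimax principle (Lemma~\ref{lem:Yao}) in each of the three cases: construct a prior $\mu_D$ over a family of hard instances and show that any deterministic algorithm issuing too few queries suffers a constant-probability failure, and hence an expected regret of the claimed order.

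For part (2), the $[0,1]$-bounded case, take a base equal-revenue-style distribution $D_0$ supported in $[0,1]$ whose revenue-quantile curve is constant at some $c=\Theta(1)$ on a quantile range $[c,1]$, and partition that range into $K = \Theta(\epsilon^{-1})$ equal sub-intervals $I_1,\ldots,I_K$. For each $k$, define $D_k$ by replacing the strictly-decreasing quantile-value function of $D_0$ inside $I_k$ with a two-step non-increasing function matching $v_0$ at the endpoints of $I_k$ but pushing the revenue at some internal quantile of $I_k$ above the base by an additive $2\epsilon$; this is feasible precisely because $v_0$ strictly decreases on $I_k$. Outside $I_k$ the quantile-value function of $D_k$ coincides with that of $D_0$, so under the uniform prior on $k$ a deterministic adaptive algorithm's observations are identical across all $k$ for which no query has landed in $I_k$. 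After $m$ queries at most $m$ intervals can have been activated, and the single deterministic fallback price lies in at most one $I_{k'}$, so for at least $K-m-1$ values of $k$ the algorithm collects only the base revenue, losing $\Omega(\epsilon)$ against the optimum $c+2\epsilon$. Taking $m<K/2$ makes the expected shortfall $\Omega(\epsilon)$, and Yao's principle yields the $\Omega(\epsilon^{-1})$ bound.

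For part (1), the regular case, concavity of the revenue-quantile curve precludes the disjoint-bump construction above, so I would instead reduce to the classical value-oracle lower bound for maximizing a one-dimensional concave function to $(1-\epsilon)$-multiplicative accuracy, which forces $\Omega(\log \epsilon^{-1})$ queries via a binary-halving adversary. Exhibit a family of regular distributions whose revenue-quantile curves are concave, share a common global structure, and attain their maximum in a narrow neighborhood of a peak location $q^*\in(0,1)$; the adversary maintains a feasibility interval for $q^*$ (initially $(0,1)$) and answers every query $q$ with a value consistent with every $q^*$ in a halved sub-interval, so that after $m < c \log \epsilon^{-1}$ queries the interval still has length exceeding $\epsilon$ and the algorithm's posted price lies at multiplicative distance more than $\epsilon$ from the true peak for a constant fraction of the remaining candidates. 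Taking $\mu_D$ uniform over a $\Theta(\log \epsilon^{-1})$-point discretization of peak locations, Yao's principle gives $\Omega(\log \epsilon^{-1})$. For part (3), the $[1,H]$-bounded case, combine both ideas: partition the value range into $\Theta(\log H)$ exponentially spaced scale bands and within the quantile space of each band place $\Theta(\epsilon^{-1})$ disjoint candidate bump locations as in part (2), yielding a family of $\Theta(\epsilon^{-1}\log H)$ instances each of which boosts the optimum by a multiplicative $1+2\epsilon$ factor only inside its private bumped region; outside that region every instance returns the same value at every quantile, so the uniform-prior Yao argument from part (2) applies verbatim and delivers $\Omega(\epsilon^{-1}\log H)$.

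The chief obstacle across all three parts is exhibiting concrete instance families that simultaneously satisfy the structural constraint of the family (bounded support of the claimed type, concavity of the revenue curve for regular) and remain genuinely uninformative to queries outside their active region. The regular case is the most delicate because concavity is a global property: the hidden peak must be embedded into a backbone that stays globally concave and consistent with the adversary's earlier responses, and verifying that each adversarial answer is realized by some regular distribution in the remaining feasibility interval is the main technical challenge.
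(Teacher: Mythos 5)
Your high-level approach coincides with the paper's: apply Yao's minimax principle with a uniform prior over a family of hard instances, each of which hides a small "bump" in the revenue--quantile curve that a query-limited algorithm is likely to miss. Your sketch for part (2) (a flat base revenue curve with $\Theta(\epsilon^{-1})$ disjoint candidate bump intervals, each raising revenue additively by $\Theta(\epsilon)$) is exactly the paper's construction: the paper's $D_s$ has $R(q)=1/2$ on $[1/2,1)$ except for a short ramp in $[\tfrac{1}{2}+4s\epsilon,\tfrac{1}{2}+4(s+1)\epsilon)$ reaching $\tfrac{1}{2}+2\epsilon$, with $s$ uniform over $\{0,\dots,\lfloor\tfrac{1}{2\epsilon}\rfloor\}$. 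Your sketch for part (3) (exponentially spaced scale bands, one private bump per instance giving a $(1+2\epsilon)$ multiplicative gain) is likewise the same idea; the paper simply uses bands of ratio $1+2\epsilon$ and one candidate location per band, yielding $\Theta(\tfrac{\log H}{\log(1+2\epsilon)})=\Theta(\epsilon^{-1}\log H)$ indistinguishable instances in one pass rather than nesting a coarse band decomposition with an inner $\Theta(\epsilon^{-1})$ partition; the two decompositions give equivalent counts and either would work.

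For part (1), however, you correctly identify the crux and then leave it open. You write that "verifying that each adversarial answer is realized by some regular distribution in the remaining feasibility interval is the main technical challenge," and this is precisely the content the paper supplies and your sketch does not. The paper constructs the $2^s$-instance family iteratively: starting from a base concave curve, each iteration replaces the current "top triangle" of the revenue curve with two new ones by drawing the horizontal median, taking its midpoint, connecting the base vertices to that midpoint, and extending to the sides of the original triangle. This doubles the number of candidate peaks while keeping every resulting curve concave (hence regular) and agreeing with its siblings outside the active sub-triangle, so that a query at any previously-resolved level is uninformative about the remaining refinement. The quantitative choice $s=\log_6\tfrac{1}{2\sqrt{\epsilon}}$, with a $2\sqrt{\epsilon}$ multiplicative loss upon a wrong guess, reflects that for a concave peak one needs precision $\Theta(\sqrt{\epsilon})$ in quantile space, not $\Theta(\epsilon)$; your statement that the halving must drive the interval below length $\epsilon$ overshoots what is required (though the asymptotics $\Omega(\log\epsilon^{-1})$ are unchanged). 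The missing construction is a genuine gap in your argument: the reduction to "the classical value-oracle lower bound for maximizing a one-dimensional concave function" is plausible but unsubstantiated, since it is not automatic that arbitrary adversarial responses for abstract concave maximization can be realized by valid revenue curves of single-dimensional distributions. The paper's explicit triangle-splitting family resolves exactly this realizability question, and without it part (1) of your proposal is incomplete.
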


\subsubsection*{Proof Sketch}
Note that the lower bound of targeted query complexity should also hold for that of the targeted sample complexity, because to targeted sample once, we can turn it into uniformly draw a quantile in the sampling interval and query at this quantile. We only show the query complexity for the three cases. The proofs of these three cases for query complexity lower bound rely on Yao's Minimax Principle(Lemma~\ref{lem:Yao}). 
We first turn the expected loss for any randomized algorithm on a deterministic distribution into the expected loss for a deterministic algorithm on a randomized distribution. 
We need to build a set of value distributions (i.e. $R-q$ curves) so that no deterministic algorithm is able to return expected revenue loss less than $\epsilon$ on a randomized distribution over this distribution set. 
For regular cases, we build this set iteratively with the idea of the binary search, while keeping all the distributions concave. 
For $[0,1]$-bounded and $[1,H]$-bounded distributions, we construct a set of distribution: they have almost the same revenue-quantile curve with each other, except for a tiny quantile interval.  
If the algorithm fails to distinguish between two distributions, there would be a $2\epsilon$ revenue loss. Since knowing the locations of the differences to rule out possibilities requires certain number of samples, we then can calculate the lower bound. 


\bibliographystyle{ACM-Reference-Format}
\bibliography{reference}
\appendix
\section{Missing proofs in Section~\ref{app:large-delta}}
\subsection{Proof of Lemma~\ref{lem:app-distribution-domination}}
\label{app:proof-of-domination}
\begin{proof}
As shown in Lemma~\ref{lem:app-monotone}, it suffices to prove only for quantile interval $[p_u,1-p_u]$.
We let $c=\frac{L\Delta}{N}$, denote $q^{D_i}(v)$ as $q^D$ and $q^{E_i}(v)$ as $q^E$. First we prove that $\bm{D} \succeq \tilde{\bm{E}}$.
\begin{itemize}
    \item [(i)]$[p_u,\Delta]$, we have $f(q)=2\sqrt{cq}$. By Lemma~\ref{lem:app-bernstein}, we get 
\begin{align*}
q^E&\le q^D+ \sqrt{\frac{2(q^D-a_j)(b_j-q^D)L}{N}} + \frac{L(b_j-a_j)}{N}\\
&\le q^D + \sqrt{\frac{2c\cdot q^D(\Delta - q^D)}{\Delta}} + c\\
&\le q^D + \sqrt{2c\cdot q^D(1 - q^D)} + c.
\end{align*}
Since the shading functions are monotone, in order to prove that $\bm{D}\succeq \tilde{\bm{E}}$, we only need to prove the case when the above holds with equality. So we get the following:
\begin{align*}
(1+2c)q^D &= q^E-\sqrt{4{q^E}^2-4(1+2c)(q^E-c)^2}\\
&=q^E-\sqrt{2cq^E(1-q^E)-c^2(1+2c)+4q^Ec^2}\\
&\ge q^E -\sqrt{4cq^E+4c^2}\\
&\ge q^E -\sqrt{4cq^E}-2c.
\end{align*}
Therefore, we get $q^D\ge q^E -2\sqrt{cq^E}-4c$, since $q^D<1$.\\
\item[(ii)] $(\Delta, \frac12]$, by Lemma~\ref{lem:app-bernstein} we have 
\begin{align*}
    q^E&\le q^D+ \sqrt{\frac{2(q^D-a_j)(b_j-q^D)\cdot L}{N}} + \frac{L(b_j-a_j)}{N}\\
    &\le q^D +\sqrt{\frac{2\cdot \dfrac12 q^D\cdot q^D\cdot L}{N}}+\frac{L}{N}q^D\\
    &= q^D +q^D\sqrt{\frac{L}{N}}+\frac LN\cdot q^D
\end{align*}
Also, we only need to prove the case when $q^E=q^D(1+\sqrt{\frac LN}+\frac{L}{N}).$
Noting that
$$q^{\tilde{E}}\le q^D\left(1-2\sqrt{\frac{L}{N}}\right)\left(1+\sqrt{\frac L N} +\frac LN\right)<q^D,$$ we get $q^D\ge q^{\tilde{E}}$.
\item[(iii)] $(1/2, 1-\Delta)$, by Lemma~\ref{lem:app-bernstein} we have 
\begin{align*}
    q^E&\le q^D+ \sqrt{\frac{2(q^D-a_j)(b_j-q^D)\cdot L}{N}} + \frac{L(b_j-a_j)}{N}\\
    & \le q^D+ \sqrt{\frac{2(1-q^D )(\frac{1-q^D}{2})\cdot L}{N}} + \frac{L(1-q^D)}{N}\\
    &= q^{D}+\left(\sqrt{\frac L N}+\frac{L}{N}\right)(1-q^D). 
\end{align*}
When equality holds, we need to prove the case when
$$
    q^E=q^D+(1-q^D)\left(\sqrt{\frac{L}{N}}+\frac LN\right)~.
$$
We have the following step by step, by mildly assuming that $\sqrt{\frac{L}{N}}\le \frac{1}{4}$:
\begin{align*}
    \sqrt{\frac L N }&\ge 2\left(\sqrt{\frac LN}\right)^3+ 3\frac{L}{N}\\
    \left[1-\left(\sqrt{\frac LN}+\frac LN\right)\right]\cdot 2\sqrt{\frac LN}&\ge \left(\sqrt{\frac LN}+\frac LN\right)  \\
    [(1-q^D)-(\sqrt{\frac LN}+\frac LN)(1-q^D)]\cdot 2\sqrt{\frac LN}&\ge (1-q^D)\left(\sqrt{\frac LN}+\frac LN\right) \\
    f(q^E)&\ge q^E- q^D.
\end{align*}
Thus we get $q^{\tilde{E}} \le q^E-f(q^E)\le q^D.$
\item[(iv)]
$(1-\Delta,1-p_u]$, by Lemma~\ref{lem:app-bernstein} we have 
\begin{align*}
    q^E&\le q^D+ \sqrt{\frac{2(q^D-a_j)(b_j-q^D)\cdot L}{N}} + \frac{L(b_j-a_j)}{N}\\
    &\le q^D+ \sqrt{\frac{2(q^D-1+\Delta)(1-q^D)\cdot L}{N}} + \frac{L\Delta}{N}\\
    &\le q^D+ \sqrt{2 q^D(1-q^D)\cdot c} + \frac{L\Delta}{N}
\end{align*}
The last inequality is because $q^D\le 1$ and thus $q^D-1+\Delta \le q^D \cdot \Delta$.
Then similar as the (i) part, when the equality holds, we have 
\begin{align*}
    (1+2c)q^D &= q^E-\sqrt{2cq^E(1-q^E)-c^2(1+2c)+4q^Ec^2}\\
    &\ge q^E-\sqrt{2c(1-q^E)-c^2(1+2c)+4c^2}\\
    &\ge q^E-\sqrt{4c(1-q^E)+4c^2}\\
    &\ge q^E-2\sqrt{c(1-q^E)}-2c.
\end{align*}
Since $q^D<1$, we have $q^{\tilde{E}}=q^E-4c-2\sqrt{c(1-q^E)}\le q^D$.
\end{itemize}
For $\tilde{\bm{E}} \succeq \tilde{\bm{D}}$, we also split the situation into four cases.
\begin{itemize}
    \item [(i)] $[0, \Delta]$, by Lemma~\ref{lem:app-bernstein} we have 
    \begin{align*}
    \vert q^E -q^D\vert\le  \sqrt{2cq^D(1-q^D)}+c\\
    q^E\ge q^D-\sqrt{2cq^D(1-q^D)}-c.
    \end{align*}
    Since the shading functions are monotone, we only need to prove the case when the above holds with equality. 
    Thus $q^E<q^D$ and we have
    \begin{align*}
        q^{\tilde{E}}&=q^D-f(q^E)-4c-\sqrt{2cq^D(1-q^D)}-c \\
        &\ge q^D-2f(q^D)-5c =q^{\tilde{D}}.\\
    \end{align*}
    Thus we get $q^{\tilde{E}}\ge q^{\tilde{D}}$.
    \item[(ii)] $(\Delta, \frac{1}{2}]$, by Lemma~\ref{lem:app-bernstein} we have $q^E\ge q^D(1-\sqrt{\frac{L}{N}}-\frac{L}{N})$. When equality holds, we have 
    \begin{align*}
     q^{\tilde{D}}&= \left(1-4\sqrt{\frac LN}\right)q^D -5c\\
     &\le [1-3\sqrt{\frac{L}{N}}+\left(\sqrt{\frac{L}{N}}\right)^3]q^D-5c\\
     &\le \left(1-\sqrt{\frac LN}-\frac LN\right)\left(1-2\sqrt{\frac{L}{N}}\right)q^D-5c\\
     &\le q^E\left(1-2\sqrt{\frac LN}\right)-5c \le q^{\tilde{E}}
    \end{align*}
    Thus we have $q^{\tilde{E}}\ge q^{\tilde{D}}$.\\
    \item[(iii)]
    $(1/2, 1-\Delta]$, by Lemma~\ref{lem:app-bernstein} we have 
    $$
    q^E\ge q^D-\left(\sqrt{\frac LN}+\frac{L}{N}\right)(1-q^D).
    $$
    When the equality holds, we have 
    \begin{align*}
        q^{\tilde{D}}&=q^D-4\sqrt{\frac{L}{N}}(1-q^D)-5c\\
        &=\left(1+4\sqrt{\frac{L}{N}}\right)q^D-4\sqrt{\frac{L}{N}}-5.
    \end{align*}
     \begin{align*}
        q^{\tilde{E}}&=\left[q^D-\left(\sqrt{\frac{L}{N}}+\frac{L}{N}\right)\cdot (1-q^D)\right]-2\sqrt{\frac{L}{N}}\left(1-q^D+\left(\sqrt{\frac{L}{N}+\frac{L}{N}}\right)(1-q^D)\right)-4c \\
        &= q^D-(k+k^2)(1-q^D)-2k(1+k+k^2)(1-q^D)-4c \quad\quad(\textrm{By letting $k=\sqrt{\frac LN}$})\\
        &=(1+3k+3k^2+2k^3)q^D-(3k+3k^2+2k^3)-4c.
    \end{align*}
    so we need to prove that 
    \begin{align*}
        &q^{\tilde{D}}\ge q^{\tilde{E}}\\
        i.e.\quad &(1+4k)q^D-4k -5c\le (1+3k+3k^2+2k^3)q^D-(3k+3k^2+2k^3) -4c\\
        i.e.\quad &3k^2+2k^3-k \le q^D(3k^2+2k^3-k)+c
    \end{align*}
    We can add a mild assumption that $k<\frac{-3+\sqrt{17}}{4}$ and thus $3k^2+2k^3-k < 0$ and the above inequality holds, and thus we have $q^{\tilde{D}}\le q^{\tilde{E}}$.  \\
    \item[(iv)]
    $(1-\Delta, 1-p_u]$, by the (iv) part in the proof of $\bm{D}\succeq\tilde{\bm{E}}$, we have the inequality in this interval
    $$
    q^E \ge q^D -\sqrt{2c(1-q^D)q^D}-c.
    $$
    When the equality holds, we have 
    \begin{gather*}
        q^{\tilde{D}}=q^D-4\sqrt{c(1-q^D)}-5c,\\
        q^{\tilde{E}}=q^D-\sqrt{2c(1-q^D)q^D}-c-2\sqrt{(1-q^E)c}-4c.
    \end{gather*}
    So we need to prove that 
    $$
    4\sqrt{c(1-q^D)}\ge \sqrt{2c(1-q^D)q^D}+2\sqrt{(1-q^E)c}.$$
    Consider the term 
    \begin{align*}
    \frac{1-q^D}{1-q^E}&=\frac{1-q^D}{1-q^D+\sqrt{2c(1-q^D)\cdot q^D}+c}\\
    &=\frac{1}{1+\sqrt{2c\frac{q^D}{1-q^D}}+\frac{c}{1-q^D}}.
    \end{align*}
    Since the above is decreasing in $q^D$, we get the minimum of $\frac{1-q^D}{1-q^E}$ is taken when $q^D =1-p_u$.
    Recall that $p_u=\Theta(\frac{L\Delta}{N})$ as in Lemma~\ref{lem:app-monotone}, we can set $q^D:= Zc$, for some constant $Z$. 
    We then have $\frac{1-q^D}{1-q^E}=\dfrac{Z}{Z+\sqrt{2Z}+1}$, by selecting a properly large $Z$, we could let $\frac{1-q^D}{1-q^E}\ge \frac 23$. Thus we get the following step by step: 
    \begin{align*}
        6(1-q^D)&\ge 4-4q^E\\
        (4-\sqrt{2})\sqrt{c(1-q^D)}&\ge 2\sqrt{(1-q^E)\cdot c}\\
        4\sqrt{c(1-q^D)}&\ge \sqrt{2c(1-q^D)q^D}+2\sqrt{(1-q^E)\cdot c}.
    \end{align*}
    And we have finished the proof.
\end{itemize}
By combining the eight parts, we conclude that $\bm{D}\succeq\tilde{\bm{E}}\succeq \tilde{\bm{D}}$.
\end{proof}
\subsection{Proof of Theorem~\ref{thm:all-Delta}}
\label{app:proof-of-larger-Delta}
\begin{lemma}
Choose $N_1=\tilde{O}(N^2n\Delta)$.  
For $j\ge 2$ and $a_j\le \frac{1}{n}$, choose 
$$N_j=\tilde{O}\left(N^2n\Delta (\sqrt{j}-\sqrt{j-1})^2+\sqrt{\frac{N^2n\Delta}{(j-1)}}\right)~.$$ 
For $a_j\in (\frac{1}{n},\frac{1}{2}]$, choose 
$$N_j=\tilde{O}\left(\frac{N^2}{j(j-1)} + \frac{N}{(j-1)}\right)~.$$ 
With probability $1-\delta$, for any value $v$, and any buyer $i$ we have: 
\[
|q^{E_i}(v)-q^{D_i}(v)|\le f(q^{D_i}(v))
~.\]
\end{lemma}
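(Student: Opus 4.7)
The plan is to apply the extended Bernstein bound (Lemma~\ref{lem:app-bernstein}) interval by interval, and verify that the stated $N_j$'s are calibrated exactly to make the resulting quantile deviation pointwise dominated by $f$ from Eqn.~(\ref{eqn:app-shade-function}). Fix a buyer $i$ and a value $v$, let $q = q^{D_i}(v)$, and by the symmetry $f(q) = f(1-q)$ and the symmetric sampling strategy assume $q \le \tfrac{1}{2}$. There are two cases, corresponding to whether $q$ is reached in the initial doubling-free sampling phase of Algorithm~\ref{alg:shade-algorithm-appendix} or in its pinpoint tail.

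In the initial phase, suppose $q \in [a_{j-1}, a_j]$ for some $j$ before the stopping index. Lemma~\ref{lem:app-bernstein} then gives
\[
\bigl|q^{E_i}(v) - q\bigr| ~\le~ \sqrt{\tfrac{2 (q - a_{j-1})(a_j - q)\, L}{N_j}} + \tfrac{L\Delta}{N_j}.
\]
I would bound the two summands separately against the two additive pieces of $f$. For $j = 1$, $q \in [0, \Delta] \subseteq [0, \tfrac{1}{n}]$: using $(q-a_{j-1})(a_j-q) \le q\Delta$, choosing $N_1 = \tilde{\Theta}(L N^2 n \Delta)$ makes the variance term $\le \tfrac{1}{N}\sqrt{q/n}$ and the additive term $\le \tfrac{1}{N^2 n}$. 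For $2 \le j$ with $a_j \le \tfrac{1}{n}$: using $(q-a_{j-1})(a_j-q) \le \Delta^2/4$ and $q \ge (j-1)\Delta$, the variance term is at most $\tfrac{1}{N}\sqrt{q/n}$ iff $N_j \gtrsim L N^2 n \Delta / (j-1) = \tilde{\Theta}(N^2 n \Delta (\sqrt j - \sqrt{j-1})^2)$ via the identity $(\sqrt j - \sqrt{j-1})^2 = \Theta(1/j)$; absorbing $L\Delta/N_j$ into $\tfrac{1}{N}\sqrt{q/n}$ requires $N_j \gtrsim L N \sqrt{n\Delta/(j-1)} = \tilde{\Theta}(\sqrt{N^2 n\Delta/(j-1)})$. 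For $a_j \in (\tfrac{1}{n}, \tfrac{1}{2}]$, where $f(q) = q/N + 1/(N^2 n)$, the same two sub-bounds yield $N_j \gtrsim N^2/(j(j-1))$ and $N_j \gtrsim N/(j-1)$ respectively, matching the stated choice.

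In the pinpoint phase, i.e., once $f(a_j) \ge \Delta$, the algorithm samples only once from each interval $[q_k - \Delta, q_k]$. A single sample has quantile within $\Delta$ of $q_k$, and by the stopping criterion $f(q) \ge \Delta$ throughout this regime, so the argument of Lemma~\ref{lem:distribution-domination} (with step $f$ rather than $2f$) immediately delivers $|q^{E_i}(v) - q| \le f(q)$. Finally, I would union-bound over the $\tilde{O}(1/\Delta + N)$ sampling intervals, the $n$ buyers, and a discretization of the quantile space at scale $\Delta / N_j$, absorbing the resulting logarithmic overhead into $L = O(\log(n N/(\delta \Delta)))$.

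The main obstacle is the algebraic verification in the middle subcase $2 \le j$, $a_j \le \tfrac{1}{n}$: keeping constants straight while showing that the two summands of the stated $N_j$ correspond precisely to absorbing the Bernstein variance and the additive $L\Delta/N_j$ piece by the \emph{same} $\tfrac{1}{N}\sqrt{q/n}$ part of $f$ (rather than by the cheaper $1/(N^2 n)$ piece, which would be too weak for $j \ge 2$), and reconciling the $\Theta(1/j)$ versus $\Theta(1/(j-1))$ scaling so that the bound is tight at the interval boundary $j\Delta \approx \tfrac{1}{n}$, where control transitions from $f$'s $\sqrt{\cdot}$ branch to its linear branch.
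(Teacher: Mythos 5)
Your proposal follows essentially the same approach as the paper: apply Lemma~\ref{lem:app-bernstein} interval-by-interval in the doubling phase, bound the Bernstein variance term and the additive $L\Delta/N_j$ term separately against the corresponding pieces of $f$, and use $q\ge(j-1)\Delta$ to translate the per-interval requirement into the stated $N_j$; the pinpoint tail is handled by the $\Delta$-accuracy of a single targeted sample plus $f(q)\ge\Delta$ in that regime. The only cosmetic difference is that you bound $\frac{(a_j-q)(q-a_{j-1})}{q}$ (and $/q^2$ in the large-quantile branch) by the crude $\frac{\Delta^2/4}{(j-1)\Delta}$, whereas the paper optimizes exactly over $q\in[a_{j-1},a_j]$ to get the $(\sqrt{j}-\sqrt{j-1})^2$ and $\tfrac{1}{j(j-1)}$ factors directly; the two are equivalent up to constants, which are absorbed in the $\tilde{O}$.
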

\begin{proof}

For the part we use Algorithm~\ref{alg:shade-algorithm}, the proof is trivial.
If suffices to consider the case of $\Delta > f(q^{D_i}(v))$. 
By Lemma~\ref{lem:app-bernstein} we have in round $j$:
\[
|q^{E_i}(v)-q^{D_i}(v)|\le \sqrt{\frac{(q^{D_i}(v)-a_{j-1})(a_j-q^{D_i}(v))}{N_j}} + \frac{\Delta}{N_j}
~.\]
The log factors are included in our choice of $N_j$. 

For $j=1$ we choose a proper $N_1=\tilde{O}(N^2n\Delta)$, and we can verify $|q^{E_i}(v)-q^{D_i}(v)|\le f(q^{D_i}(v))$ in this interval $[0,\Delta]$.
For $j\ge 2$, the first term of $f$ dominates the second term $\frac{1}{N^2n}$, unless $\Delta \le \frac{1}{N^2n}$. 
If this happens, we will have $f(q^D)\ge \Delta$, and this implies we are using Algorithm~\ref{alg:shade-algorithm}.

For the sake of simplicity, we write $q^D=q^{D_i}(v)$, $q^E=q^{E_i}(v)$ and $r=\frac{1}{4N^2n}$. 
\begin{enumerate}
    \item If $f(q^D)\le \frac{1}{n}$, it suffices to prove the following for any $q^D\in[a_{j-1},a_j]$:
    \[
    \sqrt{\frac{(q^D-a_{j-1})(a_j-q^D)}{N_j}}\le \sqrt{rq^D}~, \frac{\Delta}{N_j}\le \sqrt{rq^D}
    ~.\]
    The first term is implied by $N_j\ge \frac{1}{r}\frac{(a_j-q^D)(q^D-a_{j-1})}{q^D}$, whose right hand side is maximized when $q^D=\sqrt{a_{j-1}a_j}=\sqrt{j(j-1)}\Delta$.
    Thus we can choose $N_j=\frac{\Delta}{r}(\sqrt{j}-\sqrt{j-1})^2=\tilde{O}(N^2n\Delta (\sqrt{j}-\sqrt{j-1}))^2$. \\
    The second term is implied by $N_j\ge \sqrt{\frac{\Delta}{r(j-1)}}$. 
    So our choice of $N_j$ satisfies. 
    \item If $f(q^{D_i}(v))>\frac{1}{n}$, we have to prove:
    \[
    \sqrt{\frac{(q^D-a_{j-1})(a_j-q^D)}{N_j}}\le \frac{q^D}{2N}~, \frac{\Delta}{N_j}\le \frac{q^D}{2N}
    ~.\] 
    The first term is implied by $N_j\ge 4N^2\frac{(a_j-q^D)(q^D-a_{j-1})}{(q^D)^2}$, whose right hand side is maximized when $q^D=\frac{2a_ja_{j-1}}{a_j+a_{j-1}}$. 
    Thus we can choose $N_j\ge \frac{N^2}{(j-1)j}$. \\
    The second term is implied by $N_j\ge \frac{2N}{(j-1)}$. 
    So our choice of $N_j$ satisfies. 
\end{enumerate}
\end{proof}

\begin{lemma}
$\bm{D}\succeq\tilde{\bm{E}}\succeq\tilde{\bm{D}}$. 
\end{lemma}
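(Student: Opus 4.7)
The plan is to prove the two stochastic dominance relations $\bm{D}\succeq\tilde{\bm{E}}$ and $\tilde{\bm{E}}\succeq\tilde{\bm{D}}$ by decomposing the quantile axis into two regimes corresponding to the two branches of Algorithm~\ref{alg:shade-algorithm-appendix}: the ``sampling regime'' where $f(q)<\Delta$ (so the algorithm draws $N_j$ samples per interval $[a_{j-1},a_j]$ as in the doubling-trick analysis), and the ``pinpoint regime'' where $f(q)\ge\Delta$ (so the algorithm falls back to Algorithm~\ref{alg:shade-algorithm} on the central interval $[a_j,1-a_j]$). Since $\shades$ and $\shaded$ only differ in how the empirical/true distributions are shifted, it suffices to bound $|q^{E_i}(v)-q^{D_i}(v)|$ regime-by-regime and then argue that the shaded quantities sandwich correctly.

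In the sampling regime, I would invoke the concentration guarantee just established (the preceding lemma whose proof selects the $N_j$'s) to get, with high probability, $|q^{E_i}(v)-q^{D_i}(v)|\le f(q^{D_i}(v))$ for every value $v$ whose true quantile lies in $[0,a_j]\cup[1-a_j,1]$ where the interval-based sampling was used. Writing $q^E=q^{E_i}(v)$, $q^D=q^{D_i}(v)$, and using that $f(q)<\Delta$ on this range so that $\shades(q)=\max\{0,q-f(q)\}$, one direction $q^D\ge q^{\tilde E}=\shades(q^E)$ follows from $q^E\le q^D+f(q^D)$ combined with the monotonicity of $\shades$ (Lemma~\ref{lem:app-monotone}). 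The reverse direction $q^{\tilde E}\ge q^{\tilde D}=\shaded(q^D)$ follows from $q^E\ge q^D-f(q^D)$ together with the fact that $f$ is subadditive along the relevant shifts, so $q^E-f(q^E)\ge q^D-2f(q^D)$; this is the standard calculation already carried out in the proofs of Lemma~\ref{lem:app-distribution-domination} and Lemma~\ref{lem:distribution-domination}, and I would mostly cite those arguments rather than repeat them.

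In the pinpoint regime, the algorithm produces, for each queried pinpoint $q_k$, a value $v_k$ obtained by a single targeted sample from $[q_k-\Delta,q_k]$, so the quantile of $v_k$ in $D_i$ lies in $[q_k-\Delta,q_k]$, i.e.\ differs from $q_k$ by at most $\Delta\le f(q_k)$. With the recursion $q_{k+1}=q_k-f(q_k)$ this is exactly the setting of Lemma~\ref{lem:distribution-domination}, so for any value $v$ with $v_k<v\le v_{k+1}$ we get $q^{D_i}(v)\ge q_{k+1}=q^{\tilde E_i}(v)$, and $q^{\tilde D_i}(v)\le q_k-2f(q_k)\le q_{k+1}-f(q_k)\le q^{\tilde E_i}(v)-f(q^{E_i}(v))\le q^{\tilde E_i}(v)$ where the last step uses that $\min\{f,\Delta\}=\Delta$ in this regime matches the definition of $\shades$, and $f$ is monotone on $[0,1/2]$ so that $f(q_k)$ dominates $\min\{f(q^E),\Delta\}$. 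Minor bookkeeping handles the point mass at $0$ from the final unsampled quantile interval.

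The main obstacle I expect is the interface between the two regimes at quantile $a_j$ where $f(a_j)$ first exceeds $\Delta$: one must check that the shaded empirical distribution does not develop an inconsistency there, because the empirical quantile mass is produced by two different mechanisms on either side of $a_j$. Concretely, I need to verify that at the boundary the concentration error bound and the pinpoint discretization error bound both evaluate to quantities that the single shading function $\shades$ can absorb simultaneously, which follows from the definition $\shades(q)=\max\{0,q-\min\{f(q),\Delta\}\}$ exactly matching each branch's worst-case error. Once this continuity across the boundary is verified, the sandwich follows dimension-by-dimension by monotonicity of $\shades,\shaded$, and the product form $\bm{D}\succeq\tilde{\bm{E}}\succeq\tilde{\bm{D}}$ is immediate.
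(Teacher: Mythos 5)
Your high-level decomposition into a sampling regime ($f(q)<\Delta$) and a pinpoint regime ($f(q)\ge\Delta$), with the sampling regime handed off to the earlier domination lemmas, is exactly the paper's structure. Your worry about ``stitching'' across the boundary $f(a_j)=\Delta$ is resolved the same way the paper does it: the single shading function $\shades(q)=\max\{0,q-\min\{f(q),\Delta\}\}$ is defined uniformly and the analysis is pointwise in $q$, so no separate continuity check is required.

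However, the pinpoint-regime calculation has concrete errors. First, $q^{\tilde E_i}(v)\neq q_{k+1}$: since $\tilde E_i=\shades(E_i)$ and $\min\{f,\Delta\}=\Delta$ in this regime, one has $q^{\tilde E_i}(v)=q^{E_i}(v)-\Delta$, so your identity is off by an additive $\Delta$. Second, and relatedly, the domination $\bm{D}\succeq\tilde{\bm{E}}$ should not go through $q_{k+1}$ at all. The paper's cleaner route for $v\in\mathrm{supp}(E_i)$ is: the single targeted sample is drawn from $[q^{E_i}(v)-\Delta,\,q^{E_i}(v)]$, so $q^{D_i}(v)\ge q^{E_i}(v)-\Delta=q^{\tilde E_i}(v)$ directly, and then
\[
q^{\tilde D_i}(v)=q^{D_i}(v)-2f\!\left(q^{D_i}(v)\right)\le q^{E_i}(v)-2f\!\left(q^{E_i}(v)\right)\le q^{E_i}(v)-\Delta=q^{\tilde E_i}(v)
\]
using monotonicity of $q\mapsto q-2f(q)$ and $f(q^{E_i}(v))\ge\Delta$. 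Your intermediate step $q_{k+1}-f(q_k)\le q^{\tilde E_i}(v)-f(q^{E_i}(v))$ would amount to $f(q_{k+1})+\Delta\le f(q_k)$, which does not hold in general; the conclusion is right but this particular inequality in the chain is not. Third, the case $v\notin\mathrm{supp}(E_i)$ genuinely needs extra bookkeeping (the paper introduces $\bar q^{E_i}(v^-)$, the smallest right end-point of a sampled interval returning $v^-$, and repeats the argument through it); you cannot simply cite Lemma~\ref{lem:distribution-domination}, because that lemma is proved for exact targeted queries, whereas here the sampled value can have quantile anywhere in a $\Delta$-wide window, and the slack of one $f$-step (recall the recursion uses step $f$, not $2f$) must be explicitly allocated to absorb this $\Delta$ error for the non-support values as well.
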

\begin{proof}
The proof is covered by Lemma~\ref{lem:distribution-domination} and Lemma~\ref{lem:app-distribution-domination}, except for the case when $f(q) \ge \Delta$.

When $f(q) \ge \Delta$, $\shades(q):=\max\{0,q-\Delta\}$. 
First consider $v\in supp (E_i)$. 
By Algorithm~\ref{alg:shade-algorithm-appendix}, we only have one targeted sample $v$ in the quantile interval, and we have $q^{D_i}(v)\in [q^{E_i}(v)-\Delta,q^{E_i}(v)]$. 
This immediately implies $q^{E_i}(v)-\Delta\le q^{D_i}(v)$. 
Then, as $q-2f(q)$ is increasing in $q$ by our definition of $f$, we have:
\[
q^{D_i}(v)-2f(q^{D_i}(v))\le q^{E_i}(v)-2f(q^{E_i}(v))\le q^{E_i}(v)-\Delta~,
\]
because $f(q^{E_i}(v))\ge \Delta$. 

For those $v\notin supp (E_i)$, there exists adjacent $v^+,v^-\in supp (E_i)$, such that $v^+\ge v>v^-$ and $q^{E_i}(v)=q^{E_i}(v^+)$ by definition of quantile. 
Further define the new notation $\bar q^{E_i}(v^-)$ which represents the smallest right end-point of sampled quantile intervals that returns $v^-$. 
We have $q^{D_i}(v)\le q^{D_i}(v^-)\le \bar q^{E_i}(v^-)$ and $\bar q^{E_i}(v^-)-f(\bar q^{E_i}(v^-))=q^{E_i}(v^+)$. 
Then by $f(\bar q^{E_i}(v^-))\ge \Delta$ we have:
\[
q^{E_i}(v)-\Delta 
= q^{E_i}(v^+)-\Delta 
= \bar q^{E_i}(v^-) - f(\bar q^{E_i}(v^-))- \Delta 
\ge \bar q^{E_i}(v^-)-2f(\bar q^{E_i}(v^-))\ge q^{D_i}(v)-2f(q^{D_i}(v))
~.\]
\end{proof}

\begin{lemma}
The total number of targeted samples is $\tilde{O}(\max \{nN^2\Delta,N\})$. 
\end{lemma}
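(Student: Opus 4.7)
My plan is to bound the total sample count by summing the per-round costs $N_j$ across all rounds of Algorithm~\ref{alg:shade-algorithm-appendix}, and then adding the contribution from the invocation of Algorithm~\ref{alg:shade-algorithm} that takes over once $f(a_j) \ge \Delta$. The algorithm has three regimes that I will handle separately: round $j=1$, rounds with $a_j \in (\Delta, \tfrac{1}{n}]$, and rounds with $a_j \in (\tfrac{1}{n}, \tfrac{1}{2}]$. By symmetry the upper half $[\tfrac{1}{2},1]$ contributes at most a constant factor.

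First, I would simply plug in the choice of $N_1 = \tilde{O}(N^2 n \Delta)$, which already matches the desired first term. For $j \ge 2$ with $a_j \le \tfrac{1}{n}$ (so $j \le \tfrac{1}{n\Delta}$), I would use the identity $(\sqrt{j}-\sqrt{j-1})^2 = \tfrac{1}{(\sqrt{j}+\sqrt{j-1})^2} = \Theta(1/j)$ to rewrite $N_j = \tilde{O}\bigl(\tfrac{N^2 n \Delta}{j} + \sqrt{\tfrac{N^2 n \Delta}{j-1}}\bigr)$. Summing the first piece over $j$ gives a harmonic sum $\tilde{O}(N^2 n \Delta)$ (the $\log$ being absorbed into $\tilde{O}$), while summing $\sqrt{N^2 n \Delta/(j-1)}$ over $j$ up to $\tfrac{1}{n\Delta}$ gives $\tilde{O}\bigl(\sqrt{N^2 n \Delta} \cdot \sqrt{1/(n\Delta)}\bigr) = \tilde{O}(N)$.

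For the regime $a_j \in (\tfrac{1}{n}, \tfrac{1}{2}]$, the index $j$ ranges from about $\tfrac{1}{n\Delta}$ up to $\tfrac{1}{2\Delta}$. The first piece $\tfrac{N^2}{j(j-1)}$ telescopes; its total contribution is $\tilde{O}(N^2 / j_{\min}) = \tilde{O}(N^2 n \Delta)$. The second piece $\tfrac{N}{j-1}$ sums to $\tilde{O}(N \log(1/\Delta)) = \tilde{O}(N)$ via the harmonic sum. Adding up all three regimes yields $\tilde{O}(N^2 n \Delta + N) = \tilde{O}(\max\{n N^2 \Delta, N\})$.

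Finally, when the loop terminates because $f(a_j) \ge \Delta$, the algorithm switches to running Algorithm~\ref{alg:shade-algorithm} on the remaining interval, taking exactly one targeted sample per pinpoint. By Lemma~\ref{lem:pinpoint-2} (Pinpoint Lemma, Part~2), the number of pinpoints is $\tilde{O}(N)$, so this extra cost is absorbed into the $\tilde{O}(N)$ term. The main bookkeeping obstacle is verifying that the two sums $\sum \sqrt{N^2 n \Delta/(j-1)}$ and $\sum N^2/(j(j-1))$ do not accidentally exceed the target when the boundary between the two regimes falls near $j \approx 1/(n\Delta)$; I would do this by explicitly evaluating the second sum starting from its smallest index and matching it against the endpoint $a_j = 1/n$. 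Everything else is routine arithmetic.
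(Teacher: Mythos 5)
Your proposal is correct and follows essentially the same route as the paper's proof: the same three-regime decomposition ($j=1$, $a_j \le 1/n$, $a_j \in (1/n, 1/2]$), the same identity $(\sqrt{j}-\sqrt{j-1})^2 = 1/(\sqrt{j}+\sqrt{j-1})^2$, the same harmonic/square-root/telescoping sums, and the same appeal to Lemma~\ref{lem:pinpoint-2} to account for the switch to Algorithm~\ref{alg:shade-algorithm}. The only cosmetic differences are that you write $(\sqrt{j}-\sqrt{j-1})^2 = \Theta(1/j)$ where the paper uses the explicit bound $\le 1/(4(j-1))$, and the ``bookkeeping obstacle'' at the regime boundary you flag is a non-issue since both regimes yield the identical bound $\tilde{O}(N^2 n\Delta + N)$.
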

\begin{proof}
For the part applying Algorithm~\ref{alg:shade-algorithm}, the number of targeted samples is $\tilde O(N)$ by Lemma~\ref{lem:pinpoint-2}. 
Now consider other parts. In the following we will assume an upper bound of $j$: $j\le J$.

First for $j\ge 2$ and $a_j\le \frac{1}{n}$, we observe that:
\[
(\sqrt{j}-\sqrt{j-1})^2 = \frac{1}{(\sqrt{j}+\sqrt{j-1})^2}\le \frac{1}{4(j-1)}
~.\]
Further we have:
\[
\sum_{j=2}^J\frac{1}{\sqrt{j-1}} \le 1+\int_1^J \frac{1}{\sqrt{x}}dx = O(\sqrt{J}). 
~.\]
Then we have\footnote{Note that $1+\frac{1}{2}+\frac{1}{3}+\cdots+\frac{1}{k}=O(\log k)$.}
\[
\sum_{1\le j\le \min \{J,\frac{1}{n\Delta}\}} N_j=\tilde{O}\left(N^2n\Delta + \sqrt{N^2n\Delta \min\{J,\frac{1}{n\Delta}\}}\right)
=
\tilde{O}\left(N^2n\Delta + N\right)
~.\]

Then consider $a_j\in (\frac{1}{n},\frac{1}{2}]$. 
Observe that
\[
\sum_{j=1/(n\Delta)+1}^J \frac{1}{j(j-1)} = \sum_{j=1/(n\Delta)+1}^J \frac{1}{j-1}-\frac{1}{j} \le n\Delta
~.\]
Similarly we have
\[
\sum_{1\le j\le \min \{J,\frac{1}{n\Delta}\}} N_j=\tilde{O}\left(N^2n\Delta  + N \right)
~.\]


To sum up, the number of targeted samples is therefore $\tilde{O}(nN^2\Delta+N)$, which is equivalently $\tilde{O}(\max \{nN^2\Delta,N\})$. 
\end{proof}

Note that we can directly use the sandwich lemma in Section~\ref{sec:multi} (Lemma~\ref{lem:dskl}) because our $f$ and $\shaded$ remain the same. 

Finally, by similar arguments as in Section~\ref{sec:multi-proof}, we can finish the proof of Theorem~\ref{thm:all-Delta}.

\section{Missing proofs in Section \ref{sec:single}}\label{prf:missing_single}

In this section we give proofs for upper bounds and lower bounds of complexity in single buyer case. The upper bounds rely on constructions of specific $\epsilon$-approximate algorithms, and the lower bounds rely on constructions of distribution of the buyer's possible types, which is necessary for applying Yao's Minimax Principle (Lemma~\ref{lem:Yao}).

\subsection{Proofs for upper bounds in single buyer case}\label{prf:upper_single}

We prove upper bounds by constructing $\epsilon$-approximate algorithms. 
Since these algorithms differs when the prior $D$ is from different distribution families, we split the proofs for Theorem \ref{thm:single_upper_all}  and Theorem \ref{thm:single_upper_sample} into three parts: 
\begin{enumerate}
\item When $D$ is regular or MHR, we need $O(\log \epsilon)=\tilde{O}(1)$ targeted queries to achieve $(1-\epsilon)\opt{D}$ expected revenue.\\
For targeted sample complexity, we need at most $\tilde{O}(\max \{1,\min\{\Delta^2\cdot \epsilon^{-4}, \Delta\cdot \epsilon^{-3}\}\})$ targeted samples to achieve  $(1-\ep)\opt{D}$ revenue, if $D$ is regular. 
We need at most $\tilde{O}(\max \{1,\Delta^2 \epsilon^{-2}\})$ targeted samples to achieve  $(1-\ep)\opt{D}$ revenue, if $D$ is MHR.

\begin{proof}
We only prove the part in the targeted sample model for any $\Delta \in (0,1)$.
The reduction from the targeted sample model to the targeted query model is: uniformly draw a quantile from each targeted sample interval in the algorithm. 
We first consider the case with a small enough $\Delta$, then we try to figure out the sample complexity with larger $\Delta$. The proof of the former relies on the following algorithm which outputs a reserve price. 
For regular distributions, $q_t$ is $\epsilon$; for MHR distributions, $q_t$ is $1/e$. 
At the end we will prove that this algorithm returns a reserve price that approximates the optimal revenue closely enough with high probability. 

\begin{minipage}{1\linewidth}
\begin{algorithm}[H]
    \caption{Targeted sample algorithm for regular/MHR case}
    \begin{itemize}
    \item[1] Input $\Delta,q_t,N$. Initialize the quantile interval to be $[a_0,b_0]$, with $a_0=q_t$, $b_0=1-q_t$. Let $i=0$.
    \item[2] Targeted sample $N$ times at each of the following 5 quantile points that has not been sampled yet: $a_i$, $\frac{3a_i+b_i}4$, $\frac{a_i+b_i}2$, $\frac{a_i+3b_i}4$, $b_i$. (Here to targeted sample once at quantile $q$ is to targeted sample once in the interval $[q-\Delta/2,q+\Delta/2]$.)
    \item[3] Estimate the revenue at these 5 quantile points using the sample values ($\tilde R=q\cdot\tilde v_m$), where $\tilde v_m$ is the median of all targeted samples from $[q-\Delta/2,q+\Delta/2]$. 
    Find the quantile point $z_i\in \{a_i, \frac{3a_i+b_i}4,\frac{a_i+b_i}2,\frac{a_i+3b_i}4, b_i\}$ with the largest estimated revenue.
    \item[4] Update the interval in the new round: let $a_{i+1}=\max\{a_i,z_i-\frac{b_i-a_i}4\}$, $b_{i+1}=\min\{b_i,z_i+\frac{b_i-a_i}4\}$, finally $i=i+1$.
    \item[5] If $b_i-a_i>\epsilon$, return to step (2).
    \item[6] Pick the quantile point in $[a_i,b_i]$ that is closest to $\frac12$, take a sample at this point and output this sample value as the reserve price.
    \end{itemize}
\end{algorithm}
\end{minipage}

We first give the intuition of this algorithm. In each round we evenly divide the current interval into 4 smaller intervals, which generates 5 edge points. In regular/MHR case the revenue-quantile curve is concave, so the peak must be in the smaller interval that is adjacent to the edge point with the highest revenue. 
In this algorithm we try to find this highest edge point in every iteration, and then pick the 2 smaller intervals which are adjacent to this point and combine them to be the new interval (or only one when the highest edge point is at the two ends). Hence in every iteration, we decrease the length of the current interval by at least one half. When the interval is small enough, we can find a point inside it whose revenue is close enough to the optimal. So the algorithm halts after $\Theta(\log\epsilon^{-1})=\tilde{\Theta }(1)$ iterations, each consists of $5N$ times of targeted sampling.

\begin{lemma}\label{lem1}
Let $\Delta\le q(1-q)\frac{\epsilon}{\log \epsilon^{-1}},\forall q\in [q_t,1-q_t]$, and $N=1$.
If $D$ is regular/MHR, and we targeted one sample $\tilde v$ from interval $[q-\Delta/2,q+\Delta/2]$, the error of the estimated revenue at quantile $q$ (estimated by $\tilde R(q)=q\cdot \tilde v$), satisfies that:
$$\dfrac{|\tilde R(q)-R(q)|}{R(q)}\le \dfrac{\epsilon}{\log \epsilon^{-1}}.$$
\end{lemma}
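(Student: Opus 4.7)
The plan is to reduce the error bound to a deterministic worst-case statement over the possible sampled quantile. Let $\tilde q$ denote the quantile of the single sample $\tilde v$, so that $\tilde v = v(\tilde q)$ with $\tilde q \in [q-\Delta/2,\,q+\Delta/2]$. Then $\tilde R(q)-R(q)=q\bigl(v(\tilde q)-v(q)\bigr)$, so the relative error equals $\bigl|v(\tilde q)/v(q)-1\bigr|$, and it suffices to bound this by $\epsilon/\log \epsilon^{-1}$ uniformly over $\tilde q$ in the sampling interval.

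Next, I would invoke concavity of the revenue--quantile curve $R$ (a defining property of regularity, and therefore of MHR as a special case). Decompose $v(\tilde q)/v(q) = \bigl(R(\tilde q)/R(q)\bigr)\cdot (q/\tilde q)$. Two standard consequences of concavity together with $R(0)=0$ and $R(1)\ge 0$ are: (i) $v(\cdot)=R(\cdot)/(\cdot)$ is non-increasing, which gives $R(\tilde q)/R(q)\le \tilde q/q$ for $\tilde q\ge q$ and the reverse for $\tilde q\le q$; and (ii) taking the concave combination of the endpoints $q$ and $1$ at $\tilde q$ and using $R(1)\ge 0$ yields $R(\tilde q)\ge \bigl((1-\tilde q)/(1-q)\bigr)R(q)$ for $\tilde q\ge q$, and the reverse inequality for $\tilde q\le q$. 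Combining the two, I obtain the sandwich
\[
    \min\!\Bigl\{1,\,\tfrac{q(1-\tilde q)}{\tilde q(1-q)}\Bigr\}
    \;\le\; \frac{v(\tilde q)}{v(q)}
    \;\le\; \max\!\Bigl\{1,\,\tfrac{q(1-\tilde q)}{\tilde q(1-q)}\Bigr\}
\]
valid for both orderings of $q$ and $\tilde q$.

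Finally, the one-line identity $\tfrac{q(1-\tilde q)}{\tilde q(1-q)}-1 = \tfrac{q-\tilde q}{\tilde q(1-q)}$, combined with $|\tilde q-q|\le \Delta/2$ and $\tilde q\ge q-\Delta/2\ge q/2$ (which holds because the hypothesis forces $\Delta\le q(1-q)\cdot \epsilon/\log \epsilon^{-1}\le q$ in the relevant regime of $\epsilon$), yields the uniform bound
\[
    \bigl|v(\tilde q)/v(q)-1\bigr|
    \;\le\; \frac{\Delta/2}{(q-\Delta/2)(1-q)}
    \;=\; O\!\left(\frac{\Delta}{q(1-q)}\right).
\]
Plugging in the hypothesis $\Delta\le q(1-q)\cdot \epsilon/\log \epsilon^{-1}$ then delivers the claimed bound. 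The main obstacle is purely bookkeeping: the cases $\tilde q\ge q$ and $\tilde q\le q$ are asymmetric (the former is tighter by a factor of two, reflecting the arithmetic identity $1/q+1/(1-q)=1/(q(1-q))$), so some care is needed to obtain a clean constant of $1$ in the final bound. Either a separate sharper calculation in each case works, or the slack can be absorbed into the implicit constant on the right-hand side of the hypothesis on $\Delta$.
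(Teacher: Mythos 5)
Your proof is correct and follows essentially the same route as the paper's: both compare $\tilde R(q)$ to $R(q)$ by combining the monotone-decreasing property of $v(\cdot)=R(\cdot)/(\cdot)$ with the concavity chord through $(1,R(1))$, and your sandwich form is just a unified repackaging of the paper's separate upper- and lower-bound cases (your $R(1)\ge 0$ is in fact the more robust hypothesis than the paper's stated $R(1)=0$, which need not hold for, say, MHR distributions supported away from zero). The ``bookkeeping'' you flag at the end is not actually an obstacle: with $\tilde q\ge q-\Delta/2\ge q/2$, the estimate $\frac{\Delta/2}{\tilde q(1-q)}\le\frac{\Delta}{q(1-q)}\le\frac{\epsilon}{\log\epsilon^{-1}}$ falls out directly, the factor of two from $\Delta/2$ cancelling the one from $\tilde q\ge q/2$, so no constant needs to be absorbed.
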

\begin{proof}
For convenience we let $q_l=q-\Delta/2$ and $q_h=q+\Delta/2$, and the corresponding valuations be $v_l$ and $v_h$, w.r.t. true distribution. So we have $v_h\le \tilde v\le v_l$ as $v(q)$ is always decreasing. First let us see the upper bound, trivially we have:
$$\tilde R(q)=q\cdot \tilde v\le v_l\cdot q=\frac{R(q_l)}{q_l}\cdot q.$$
Notice that $R(1)=0$, so $R$ is a concave function, $R(q_l)<R(q)\cdot\frac{1-q_l}{1-q}$. 
Hence we have
$$\tilde R(q)<R(q)\cdot\frac{1-q_l}{1-q}\cdot \frac q{q_l} \le R(q)\cdot\frac{2+q\epsilon/\log \epsilon^{-1}}{2-(1-q)\epsilon/\log \epsilon^{-1}}<(1+\epsilon/\log \epsilon^{-1})\cdot R(q).$$
The otherside can be proved similarly:
\begin{align*}
    \tilde R(q)\ge v_h\cdot q=\frac{R(q_h)}{q_h}\cdot q&>R(q)\cdot \frac{1-q_h}{1-q}\cdot\frac q{q_h}\\
    &=R(q)\cdot\frac{2-q\cdot \epsilon/\log \epsilon^{-1}}{2+(1-q)\cdot \epsilon/\log \epsilon^{-1}}\\&>(1-\epsilon/\log \epsilon^{-1})\cdot R(q).
\end{align*}
\end{proof}

We notice the fact that each time we pick the largest evaluated revenue, the highest revenue in our selected sub-interval would not be the global maximal revenue. To quantify the influence of this fact on the final result, we define \textit{shrinking coefficient} ($c_e$), which is the ratio of the maximum revenue in our newly chosen sub-interval and the maximum revenue in the original interval. (This variable is calculated in each one iteration separately.) In the ideal case, $c_e=1$ as we managed to choose the sub interval that contains the maximum revenue. Generally, as the relative error of our samples are limited to $\epsilon/\log \epsilon^{-1}$ by the conclusion of lemma \ref{lem1}, we can prove the following lemma:
\begin{lemma}\label{lem:shrink}
In each round of iteration, we have $c_e=1-\Theta(\epsilon/\log \epsilon^{-1})$.
\end{lemma}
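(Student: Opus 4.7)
The plan is to show $c_e \ge 1 - O(\alpha)$ in every iteration, where $\alpha := \epsilon/\log \epsilon^{-1}$. Let $q^*$ denote the argmax of $R$ on $[a_i, b_i]$ and set $R^* := R(q^*)$; write the five evenly spaced edge points as $a_i = p_0 < p_1 < p_2 < p_3 < p_4 = b_i$, and say $q^* \in [p_k, p_{k+1}]$ for some quarter index $k$. Since $R$ is concave with $R(0) = R(1) = 0$, it is unimodal with peak at $q^*$, so the true best edge $p_m := \arg\max_j R(p_j)$ must be one of the two edges $\{p_k, p_{k+1}\}$ bounding the quarter containing $q^*$. By Lemma~\ref{lem1}, each estimated revenue has relative error at most $\alpha$, so the edge $z_i$ selected by the algorithm satisfies $R(z_i) \ge R(p_m)(1-\alpha)/(1+\alpha) \ge R(p_m)(1 - 2\alpha)$.

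I then split into two cases depending on whether the algorithm locates the correct quarter. In the easy case $z_i \in \{p_k, p_{k+1}\}$, the new interval (centered at $z_i$ with half-width $(b_i - a_i)/4$) covers the quarter $[p_k, p_{k+1}]$, hence contains $q^*$, so trivially $R^*_{\mathrm{new}} = R^*$ and $c_e = 1$. Otherwise $z_i = p_l$ with $l \notin \{k, k+1\}$; without loss of generality $l \le k - 1$. Then the new interval lies in $[a_i, p_{l+1}] \subseteq [a_i, q^*)$, on which $R$ is monotonically increasing, so $R^*_{\mathrm{new}} = R(p_{l+1}) \ge R(p_l)$. When $p_m = p_k$ (true best on the same side as $z_i$), I apply three-point concavity to $p_l < p_m \le q^*$:
\begin{equation*}
\frac{R(p_m) - R(p_l)}{p_m - p_l}\ \ge\ \frac{R^* - R(p_m)}{q^* - p_m}.
\end{equation*}
The algorithm guarantee gives $R(p_m) - R(p_l) \le 2\alpha R(p_m)$, and since $q^* - p_m \le (b_i-a_i)/4 \le p_m - p_l$, the ratio $(q^* - p_m)/(p_m - p_l)$ is at most $1$; combining these yields $R(p_m) \ge R^*(1-2\alpha)$, and chaining gives $R^*_{\mathrm{new}} \ge R(p_l) \ge R(p_m)(1-2\alpha) \ge R^*(1-O(\alpha))$.

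The main obstacle will be the remaining sub-case of Case 2, where $p_m = p_{k+1}$ lies on the opposite side of $q^*$ from $p_l$ and the three-point configuration above does not directly apply. The idea is to first use monotonicity on $[a_i, q^*]$ to get $R(p_k) \ge R(p_l) \ge R(p_{k+1})(1-2\alpha)$; combined with $R(p_{k+1}) \ge R(p_k)$ (from $p_m = p_{k+1}$), this pins $R(p_k)$ and $R(p_{k+1})$ to within a $(1-2\alpha)$ factor of each other and, by concavity, forces $R$ to be nearly flat across the quarter $[p_k, p_{k+1}]$ that contains $q^*$. A three-point concavity bound anchored at $p_l$, $p_{k+1}$, and an appropriate further point (such as $q^*$ or $1$) then propagates this into $R(p_{k+1})/R^* \ge 1 - O(\alpha)$, closing the chain. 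Summing the per-iteration loss $O(\alpha)$ over the $O(\log \epsilon^{-1})$ rounds yields the overall $1 - O(\epsilon)$ approximation needed downstream.
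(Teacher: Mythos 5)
Your overall route is the same as the paper's: use Lemma~\ref{lem1} to control the estimation error, then a three-point concavity argument anchored at $z_i$, the nearest sampled edge to $q^*$ on $z_i$'s side, and $q^*$ itself, to bound how much revenue can be lost when the algorithm selects the wrong quarter. Case~1 ($z_i\in\{p_k,p_{k+1}\}$) and sub-case~2a are handled correctly and match the paper, which parametrizes by $d=k-l$ and $r=R(p_k)/R(p_l)$ and minimizes $\frac{1+(r-1)/d}{r+(r-1)/d}$ over $d\in\{1,2,3\}$ and $r\le(1+\alpha)/(1-\alpha)$.

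The gap is that sub-case~2b ($p_m=p_{k+1}$) is only sketched, and you flagged it yourself as the ``main obstacle.'' It is not actually an obstacle, and the machinery you propose for it (near-flatness across $[p_k,p_{k+1}]$, an auxiliary anchor at $q^*$ or $1$) is unnecessary. The issue is a self-inflicted one: you anchored the selection guarantee to $p_m$, the \emph{true} best edge, but the algorithm selects the \emph{estimated} best edge, so $\tilde R(z_i)\ge\tilde R(p_j)$ holds for every edge $p_j$, not just $p_m$. Taking $p_j=p_k$ gives $R(p_l)\ge R(p_k)\frac{1-\alpha}{1+\alpha}\ge R(p_k)(1-2\alpha)$ unconditionally. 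Your sub-case~2a chain never actually uses $p_m=p_k$ beyond this inequality; written with $p_k$ in place of $p_m$ it reads
\[
R^{*}-R(p_k)\ \le\ \frac{q^{*}-p_k}{p_k-p_l}\bigl(R(p_k)-R(p_l)\bigr)\ \le\ 2\alpha\,R(p_k),
\]
so $R(p_k)\ge R^{*}/(1+2\alpha)$ and $R^{*}_{\mathrm{new}}=R(p_{l+1})\ge R(p_l)\ge R(p_k)(1-2\alpha)\ge R^{*}(1-O(\alpha))$, which covers both sub-cases at once and is exactly the paper's argument with $R_a=R(p_l)$, $R_b=R(p_k)$, $d=k-l$. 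Replace $p_m$ by $p_k$ and the case split disappears; otherwise the structure and conclusion are right.
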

\begin{figure}[h]
    \centering
    \includegraphics[scale=0.6]{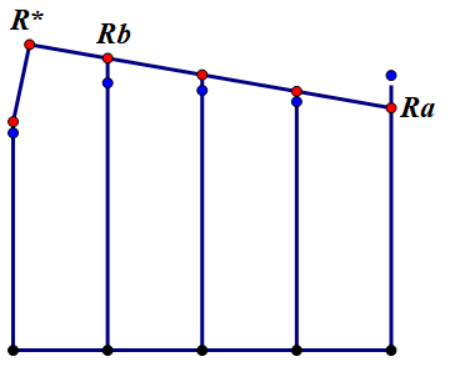}
    \setcaptionwidth{0.6\linewidth}
    \caption{The extreme case of the five evaluated revenue points and their true revenue points in each round. We denote the true revenue point as red points and our evaluated revenue point as the blue points. }
        \label{fig:1Hbounded}
\end{figure}

\begin{proof}
Suppose our selected sampled quantile point with the largest evaluated revenue has true revenue $R_a$ and the highest revenue point on the original interval is with revenue $R^*$. We suppose that its most adjacent sampled quantile point on $R_a$ point side is with revenue $R_b$. Since we have picked $R_a$ as the largest evaluated point instead of $R_b$ point, by Lemma~\ref{lem1} we have the following inequality:
$$
R_a(1+\epsilon/\log\epsilon^{-1}) \ge R_b(1-\epsilon/\log\epsilon^{-1}).
$$
Suppose that there are $d$ intervals between $R_a$ and $R_b$, by the concavity of the $R-q$ curve, we have 
$$
R^* \le R_b + (R_b -R_a)/ d.
$$
We denote the maximal revenue in the selected new interval as $R_n$, by the concavity of the R-q function, we get $R_n \ge R_a + (R_b-R_a)/d$ 
Then we could derive the lower bound of $c_e$:
$$
c_e =\frac{R_n}{R^*} \ge \frac{R_a+ (R_b-R_a)/d}{R_b+ (R_b- R_a)/ d}.
$$
Denote $R_b/R_a$ by $r$, then we have $$c_e\ge \frac{1+(r-1)/d}{r+(r-1)/d}.$$
By partial derivative with respect of $r$ and $d$, we find that the lower bound of $c_e$ is taken when $r= \dfrac{1+\epsilon/\log\epsilon^{-1}}{1-\epsilon/\log\epsilon^{-1}}$ and $d=3$ (just as the figure shows) and its minimum value is $$\frac{3-\epsilon/\log\epsilon^{-1}}{3+5\epsilon/\log\epsilon^{-1}}=\frac{1-\epsilon/3\log\epsilon^{-1}}{1+5\epsilon/3\log\epsilon^{-1}}\ge 1-2\epsilon/\log\epsilon^{-1}.$$

\end{proof}
 
 

So after $\Theta(\log\epsilon^{-1})$ rounds, the total shrinking coefficient is at least:
$$C_e=\prod c_e=1- O(\epsilon).$$

Up to now we have found a small quantile interval with length $\epsilon$, and the highest revenue inside it is at least $(1-C_e)$ times of the optimal revenue. In the last step of our algorithm, query for the sample value at the point closest to 0.5 inside this interval. (In most cases, we have already done a querying at this point, the result can be directly used again.) Then according to concavity of the revenue function, the revenue of this point is at least $1-\epsilon$ times the highest value in this interval. So the final revenue at our returned value still have a $1-O(\epsilon)$ approximation. 

For regular distributions, we have $q_t=\epsilon$ and thus $\Delta$ in Lemma \ref{lem1} is $\tilde O(\epsilon^2)$. 
For MHR distributions, we have $q_t=\frac{1}{e}$ and thus $\Delta$ in Lemma \ref{lem1} is $\tilde O(\epsilon)$. 
Therefore, for such a small $\Delta$, the targeted sample complexity is $\tilde O(1)$. 

Then we focus on the case when $\Delta$ is large, all our algorithm is almost the same as the above. However, if we still sample once in each interval with length $\Delta$, we could not get a good enough approximation for the revenue at this quantile with at most $\epsilon$ multiplicative revenue loss to satisfy the lemma \ref{lem1}. In order to upper bound the revenue loss with $\epsilon/\log(\epsilon^{-1})$, we need to sample $N$ times in each interval with length $\Delta$. 
We again use Lemma~\ref{lem:app-bernstein} to prove this. 
\begin{itemize}
    \item [1.] When $\Delta=\tilde\Omega(\epsilon^2)$ and $\Delta = \tilde O(\epsilon)$, we know that the gap between quantiles of the empirical distribution and the true distribution is $\tilde O(\Delta/\sqrt{N})$, for any specific value. We need this gap to be smaller than $q \cdot \epsilon= \tilde{\Omega}(\epsilon^2)$ (since $q\ge \ep$), thus we have $\tilde O(\Delta/\sqrt{N})\le \tilde \Omega({\epsilon^2})$ and we need $N= \tilde{O}(\Delta^2\cdot \epsilon^{-4})$ samples.
    \item[2.] When
    $\Delta=\tilde{\Omega}(\epsilon)$, if this sampling interval is the first interval from $0$ to $\Delta$, we could map this interval to the interval $[0,1]$ and it is still a regular distribution and we only need $O(\Delta\cdot \epsilon^{-3})$ to estimate a $(1-\epsilon)$ multiplicative largest revenue value in this interval.
    If the interval is not the first interval, then we have $q^D>\Delta$ and we only need the quantile error to be smaller than $\Delta\cdot \epsilon$. Then we should ensure $\tilde O(\Delta/\sqrt{N})\le \Delta\cdot \epsilon$, thus $N=\tilde O(\Delta\ep^{-3})$ is sufficient.
\end{itemize}
Thus in the regular case, we only need to query in each interval for $N$ times and also use $\tilde{O}(1)$ rounds of binary search queries to make the final interval be small enough, and we finally sample $N$ times in this interval to get an $1-\epsilon$ approximation of the optimal revenue since there are only $\tilde {O}(1)$ rounds of revenue loss and each of them could be smaller than $O(\epsilon)$ by arbitrary log and constant factors. Combining these together, we get the final sample complexity is $\tilde {O}(\max \{1,\min\{\Delta^2\ep^{-4},\Delta\ep^{-3}\}\})$.\\

For MHR cases with large $\Delta$, since at the first step of our algorithm we truncate the first and last $1/e$ amount of quantile, the related quantile is always $\Theta(1)$ in our algorithm. So in order to achieve an $\epsilon$-multiplicative error in each round, we should have $\tilde O(\Delta/\sqrt{N})\le O(\ep)$, so $N =O(\Delta^2 \epsilon^{-2})$ is enough for the approximation. 
\end{proof} 

\item When $D$ is {$[0,1]$}{-bounded}, we only need $O(\epsilon^{-1})$ targeted queries to achieve $\opt{D}-\epsilon$ expected revenue.\\

For targeted sample complexity, we need at most $\tilde{O}(\max \{\epsilon^{-1}, \Delta\cdot \epsilon^{-2}\})$ targeted samples to achieve  $\opt{D}-\ep$ revenue. Note that when $\Delta =O(\ep)$ the two complexities are the same.
\begin{proof}
The reduction from the targeted sample model to the targeted query model is: uniformly draw a quantile from each targeted sample interval in the algorithm. 
Our algorithm is to divide the whole quantile space into small intervals of length $d$. 
First consider the case when $\Delta$ is small enough. 
For each quantile $q(q=kd,k\in\mathbb{N},k\le 1/d)$, we targeted sample once from $[q-e, q+e](e\le d/2)$. Then we estimate the revenue of the quantile $q$ by the returned value, say, $\tilde v(q)$, times $q$. Among all the values returned, we select the value with the largest estimated revenue, namely $\tilde v_s$, as the reserved price. We suppose the corresponding quantile of $\tilde v_s$ w.r.t. true distribution is $\tilde q_s$ and $\tilde v_s$ is returned when we query around the quantile $q_s$. Suppose the optimal quantile and its corresponding value are $q^*$ and $v^*$. 
We first show that, the revenue of this algorithm $\tilde v_s \cdot \tilde q_s$ is close to optimal.
\begin{lemma}
$\tilde v_s \cdot \tilde q_s \ge q^*\cdot v^* - 2d -3e$. 
\end{lemma}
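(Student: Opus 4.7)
My plan is to (i) identify a grid quantile $q_k$ close to but strictly to the left of $q^*$, (ii) use the monotonicity of the inverse CDF to conclude that the single targeted sample drawn near $q_k$ has value at least $v^*$, (iii) invoke the selection rule of the algorithm to transfer the bound to $\tilde v_s$, and (iv) correct for the mismatch between the grid quantile $q_s$ and the true quantile $\tilde q_s$ of the returned sample.

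\textbf{Key steps.} First I would restrict attention to the non-trivial case $q^* \ge d+e$; in the complementary regime $q^* v^* \le q^* \le d+e$, so the claimed bound holds trivially since $\tilde q_s \tilde v_s \ge 0$. Assuming $q^* \ge d+e$, let $q_k$ be the largest multiple of $d$ with $q_k + e \le q^*$. By construction $q_{k+1} = q_k + d$ fails this inequality, so $q_k \ge q^* - d - e$. The algorithm draws one value $\tilde v_k$ from the conditional distribution on the quantile interval $[q_k-e, q_k+e]$, so its true quantile $\tilde q_k$ lies in this interval and in particular satisfies $\tilde q_k \le q_k + e \le q^*$. Since $v(\cdot)$ is non-increasing in quantile, this gives $\tilde v_k = v(\tilde q_k) \ge v(q^*) = v^*$. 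Using $v^* \le 1$, the estimated revenue at $q_k$ therefore satisfies
\[
    q_k \cdot \tilde v_k \;\ge\; (q^* - d - e)\, v^* \;\ge\; q^* v^* - (d+e).
\]
By the selection rule of the algorithm, $q_s \tilde v_s \ge q_k \tilde v_k$. Finally, since $\tilde v_s$ was drawn from $[q_s-e, q_s+e]$, we have $\tilde q_s \ge q_s - e$, and combined with $\tilde v_s \le 1$ this yields $\tilde q_s \tilde v_s \ge q_s \tilde v_s - e$. Chaining the three inequalities gives
\[
    \tilde q_s \tilde v_s \;\ge\; q^* v^* - (d+e) - e \;=\; q^* v^* - d - 2e,
\]
which is strictly stronger than the claimed $q^* v^* - 2d - 3e$, so the stated bound follows.

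\textbf{Main obstacle.} The only delicate point is the choice of $q_k$: if one instead picks the grid point nearest to $q^*$, the sampled quantile $\tilde q_k$ can fall to the right of $q^*$ and then $\tilde v_k$ may be arbitrarily smaller than $v^*$, breaking step (ii). Insisting on $q_k + e \le q^*$ enforces $\tilde q_k \le q^*$ deterministically, which is exactly what is needed; the mild cost is a worse additive slack of order $d+e$ in how close $q_k$ is to $q^*$, and a separate (trivial) handling of the regime $q^* < d + e$ near the boundary.
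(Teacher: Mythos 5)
Your proof is correct and follows essentially the same strategy as the paper's: pick an auxiliary grid quantile just to the left of $q^*$, observe that the single targeted sample drawn there must have true quantile $\le q^*$ and hence value $\ge v^*$, and chain through the selection rule and the $[0,1]$-bound. The only difference is your choice of the auxiliary point. The paper takes $q_b$ to be the largest grid point with $q^* - q_b \ge d$, which forces $\tilde q_b \le q_b + e \le q^* - (d-e) < q^*$ with extra slack $d - e > 0$; you instead take $q_k$ to be the largest grid point with $q_k + e \le q^*$, which is the tightest choice still guaranteeing $\tilde q_k \le q^*$ deterministically. Your version is also a bit cleaner in separating the two error contributions (the $d+e$ gap between $q_k$ and $q^*$, and the $e$ gap between $q_s$ and $\tilde q_s$), and yields the slightly stronger bound $\tilde q_s \tilde v_s \ge q^* v^* - d - 2e$, which of course implies the lemma's weaker $q^* v^* - 2d - 3e$. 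Both approaches work for the same reason; neither buys anything fundamentally new since the final parameter setting $d = \Theta(\epsilon)$, $e = \Theta(\epsilon)$ absorbs constants anyway.
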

\begin{proof}
If $q^*<2d+3e$, the lemma holds trivially.
Otherwise we select the quantile which is the largest queried quantile among all the quantiles at least $d$ smaller than $q^*$ and we denote it as $q_b$ and we denote its corresponding quantile as $q_{b}$. When we targeted sample around the quantile $q_b$, suppose we get a returned value $\tilde v_b$ and the corresponding quantile of $\tilde v_b$.
We have the following inequalities:
\begin{gather*}
    q_b -e \le \tilde q_b \le q_b + e,\quad q_s-e\le \tilde q_s\le q_s+e\\
    \quad q^* -2d - e  <q_b < q^* -d +e, \\
    q_s \cdot \tilde v_s \ge q_b \cdot \tilde v_b,\\
    \tilde q_b < q^*,\quad \tilde v_b >v^*.\\
\end{gather*}
Therefore we have the inequality
\begin{align*}
    \tilde q_s\cdot \tilde v_s &\ge (q_s -e)\cdot \tilde v_s \\
    &\ge q_b\cdot \tilde v_b -e \cdot \tilde v_s\\
    &\ge (q^* -2d -e)\cdot v^* -e(\tilde v_b +\tilde v_s)\\
    &\ge q^*\cdot v^* -(2d+e)\cdot v^* - e\cdot \tilde v_b -e\cdot\tilde v_s\\
    &\ge q^*\cdot v^* -2d-3e.
\end{align*}
\end{proof}
Thus if we let $d=\epsilon/4$ and $e=\epsilon/8$, $\Theta(\epsilon^{-1})$ queries are enough to get a $1-\epsilon$ approximation of maximum auction revenue, under the condition that $\delta \le \epsilon/4$. 
This covers the case when $\Delta=O(\epsilon)$. 
For larger $\Delta$, we combine this result together with the result in the multi-buyer setting with $n=1$, we can know the targeted sample complexity for the $[0,1]$-bounded distribution is $\tilde{O}(\max\{\epsilon^{-1}, \Delta\epsilon^{-2}\})$. 
\end{proof}
\item When $D$ is {$[1,H]$}{-bounded}, we only need $O(\log H\cdot \epsilon^{-1})$ targeted queries to achieve $(1-\epsilon)\cdot \opt{D}$ expected revenue.\\
The targeted sample complexity is $\tilde{O}(\max\{\Delta H\ep^{-2},H^{1/2}\epsilon^{-1}\})$. 
\begin{proof}
We also prove in the targeted sample model, and firstly consider a small enough $\Delta$. The reduction is same as the previous part. Our algorithm is to sample according to quantile series $\{q_1= 1/H,q_2=(1+\epsilon/8)/H, q^3=(1+\epsilon/8)^2/H,\cdots, q_m=1\}$, each quantile $q_i$ corresponds to the sampling interval $[q_i-\epsilon/8H,q_i+\epsilon/8H]$ and we only take one sample $v_i$ from this quantile interval. The targeted sample complexity would be $O\left(\frac{\log H}{\log (1+\epsilon/8)}\right)=\tilde{O}( \epsilon^{-1})$.

For each quantile $q_i$ with returned sample value $v_i$, we calculate $q_i\cdot v_i$ and then we select the $v_i$ with the largest $q_i\cdot v_i$ as the reserved price. We let $q(v)$ be the quantile function with respect to value $v$.

We first prove that the selected value is at least $(1-\epsilon/4)$ revenue than any of the revenues of the returned values. Suppose that the algorithm finally select the returned value $\tilde{v_s}$ and its corresponding queried $q_i$ is $q_s$ (this is known to the algorithm), its true quantile is $\tilde{q_s}=q(\tilde{v_s})$. Then for any returned $v_i$, since we pick the largest $v\cdot q$ as $v_s$, we have 
$$
    \tilde{v_s}\cdot q_s\ge {v_i}\cdot q_i.
$$
Since $\vert q_i-q(v_i)\vert\le \epsilon/8H$ and $\vert q_s-\tilde{q_s}\vert \le \epsilon/8H$, we get 
\begin{align*}
    \tilde{v_s}\cdot \tilde{q_s} &\ge \tilde{v_s}\cdot (q_s-\epsilon/8H)\\
    &=\tilde{v_s}\cdot q_s\cdot \frac{q_s-\epsilon/8H}{q_s}\\
    &\ge{v_i}\cdot q_i \cdot \frac{q_s-\epsilon/8H}{q_s}\\
    &\ge {v_i}\cdot q(v_i)\cdot \frac{q_i}{q_i+\epsilon/8H}\cdot \frac{q_s-\epsilon/8H}{q_s}\\
    &\ge {v_i}\cdot q(v_i) \cdot 
    \frac{1/H}{1/H+\epsilon/8H}\cdot \frac{1/H-\epsilon/8H}{1/H}\\
    &\ge {\left(1-\dfrac{\epsilon}4 \right)} \cdot {v_i}\cdot q(v_i).
\end{align*}
Suppose the global optimal value is $v^*$ and its corresponding quantile is $q(v^*)=q^*$, we let the maximum $q_i$ less than $q^*$ be $q_b$, we have $q_b\cdot (1+\epsilon/8)\ge q^*\ge q_b$ then we get 
$q^* \le q_b(1+\epsilon/8)\le q(v_b)(1+\epsilon/4)(1+\epsilon/8)$ and $q(v_b)\cdot v_b\ge (1-\epsilon/2) \cdot q^*\cdot v^*$. Finally we get 
\begin{align*}
\tilde{q_s}\cdot \tilde{v_s} &\ge (1-\epsilon/4)\cdot v_b\cdot q(v_b)\\
&\ge (1-\epsilon/4)\cdot (1-\epsilon/2)\cdot q^*\cdot v^*    \\
&\ge (1-\epsilon)\cdot q^*\cdot v^*
\end{align*}
And we have proved that $O(\log H\cdot \epsilon^{-1})$ samples with $\Delta =O(\epsilon/H)$ is enough to get an $(1-\epsilon)$-multiplicative approximate revenue in $[1,H]$-bounded distributions.
\end{proof}

For larger $\Delta$, we combine these results with our multi-buyer results when $n=1$, we get the upper bound of targeted sample complexity with $[1,H]$ distributions is $\tilde{O}(\max\{\Delta H\ep^{-2},H^{1/2}\epsilon^{-1}\})$.
\end{enumerate}

\subsection{Proofs for lower bounds in single buyer case}\label{prf:lower_single}

To prove a lower bound with help of Yao's Minimax Principle(Lemma~\ref{lem:Yao}), we only need to show that there exists a fixed distribution of the bidder's value distributions, such that no deterministic algorithm can return a reserve price with less expected revenue loss than claimed in the theorem. Based on this idea, the followed proof of theorem \ref{thm:single_lower_all} is split into three parts under different conditions, each with corresponding construction of such distribution. For generality of the proof, we always assume strongest targeting power of the querier: he/she can get value at  exact quantile points, which is to say we allow $\Delta = 0$.

\begin{enumerate}
\item When $D$ is \textbf{regular}, we need at least
$\Theta(\log(\epsilon^{-1}))$ targeted queries to achieve at least $(1-\epsilon)\opt{D}$ expected revenue.
\begin{figure}
        \centering
        \begin{minipage}[t]{0.48\textwidth}
        \centering
        \setcaptionwidth{0.8\textwidth}
       
       \includegraphics[scale=0.4]{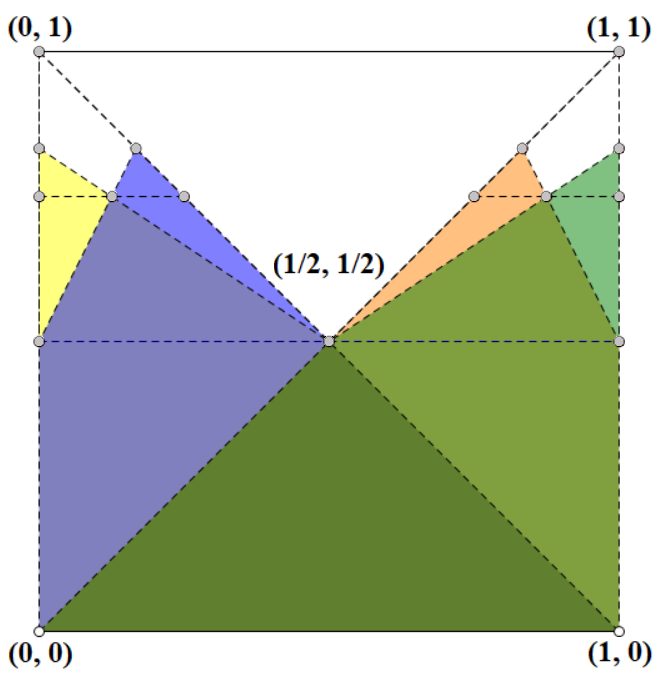}
        \caption{$R-q$ space for the case when $s=2$. The whole square is with four corners $(0,0),(1,1),(0,1),(1,0)$. The randomized distribution consists of four different-coloured distributions each w.p. 1/4.}
        \label{fig:s=2ex}
        \end{minipage}
        \centering
        \begin{minipage}[t]{0.48\textwidth}
        \setcaptionwidth{0.9\textwidth}
        \includegraphics[scale=0.45]{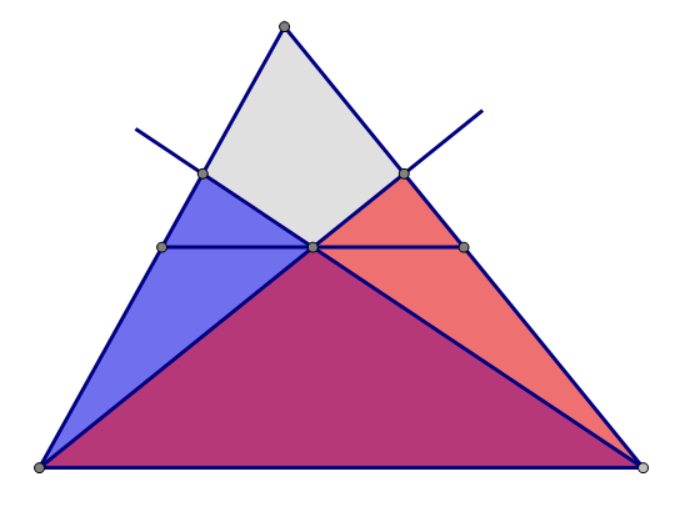}
        \caption{The whole triangle is the original top-triangle of case $2^{s-1}$, and the blue and orange distributions are the newly built distributions. The two new top-triangles of case $2^s$ are the two smaller triangles above the median. }
        \label{fig:top-tri}
        \end{minipage}
\end{figure}
\begin{proof}

For applying Lemma~\ref{lem:Yao}, here gives the construction of the distribution of the bidder's possible types: it contains $2^s$ possible $R-q$ curves where $s=\log_6{\frac{1}{2\sqrt{\epsilon}}}$, each occurs with same probability. Figure \ref{fig:s=2ex} shows an example for small $s$. 

    Further, the construction of the randomized distribution is done iteratively. 
    The distribution with $2^s$ possibilities is based on the $2^{s-1}$ case: every curve of the latter case is transformed into 2 new curves, by doing a specific transformation on the `top triangle': First we draw the median parallel to the $x$-axis and take its midpoint. Then we connect two vertices on the bottom margin to this midpoint and extend them until they intersect with the other two margins of the original triangle. The new-generated `top triangles' are the two smaller non-overlapped triangles on the top with their bottom margin parralel to the $x$-axis. See Fig.~\ref{fig:top-tri} for an illustration.
    
    In our construction, the regular condition is always satisfied. Now suppose we are allowed to query for $k$ times, and we try to lower bound the expected revenue loss under this condition. There are $2^s$ values that are possible to be the optimal. And from the distribution construction it is easy to see that if the choice is wrong, the revenue loss is at least $2\sqrt{\epsilon}$. 
    
    Now suppose the algorithm adaptively takes $k$ samples, then with probability at least $2^{-k}$ it still cannot find the optimal point of the curve. (The algorithm unluckily guesses wrong in every round of the binary search.) Thus it has to guess among all the possible top points left, with succeeding probability of at most $2^{k-s}$. Thus the failing probability of querying $k$ samples is at least $p_l=2^{-k}\cdot (1-2^{k-s})=2^{-k}-2^{-s}.$
    So the lower bound of expected revenue loss is $p_l\cdot 2\sqrt{\epsilon}$. To restrict the multiplicative loss, we should have the following:
    \begin{align*}
        p_l \cdot 2\sqrt{\epsilon} &\le \epsilon\\
        2^{-k}-2^{-s}&\le \frac{\sqrt{\epsilon}}{2}\\
        2^{-k} &\le \frac{\sqrt{\epsilon}}{2}+2^{-\log_6{\frac{1}{2\sqrt{\epsilon}}}}\\
        &\le \frac{1}{2}\cdot \epsilon^{\frac{1}{2\log_2^6}}+2^{-\log_6{\frac{1}{2\sqrt{\epsilon}}}}\\
        &\le (1/2+ 2^{\frac{1}{\log_2^6}})\cdot \epsilon^{\frac{1}{2\log_2^6}}
    \end{align*}
    Thus we get the lower bound of queries
    \begin{align*}
        k&\ge \frac{1}{2\log_2 6}\log (\epsilon^{-1})-\log_2 (1/2+ 2^{\frac{1}{\log_2^6}})\\
        &=\Theta(\log \epsilon^{-1}).
    \end{align*}
    We have finished the proof.
\end{proof}

\item
        When $\bm{D}$ is \bm{$[0,1]$}\textbf{-bounded}, we need at least $\Theta(\epsilon^{-1})$ targeted queries to achieve at least $\opt{D}-\epsilon$ expected revenue.
        
\begin{proof}

    

For applying Lemma~\ref{lem:Yao}, here gives the construction of the distribution of the bidder's possible types:
Suppose that each of the support distribution of the bidder's value is denoted as $D_s$ where $s$ is an integer in $[0,\frac{1}{2\epsilon})$. The revenue-quantile curve under $D_s$ is defined as:
$$
R^{D_s}(q)=\left\{
\begin{aligned}
&q,\quad &(0\le q< \frac12)\\
&\frac1{2}\cdot (q+\frac12-4s\epsilon),\quad &(\frac12+4s\epsilon\le q<\frac12+4(s+1)\epsilon)\\
&\frac12,\quad &(\text{else where})\\
\end{aligned}\right.
$$
    
Fig.~\ref{fig:01boudned} shows the $R-q$ curve of $D_s$. 
    \begin{figure}[h]
        \centering
        \includegraphics[scale=0.4]{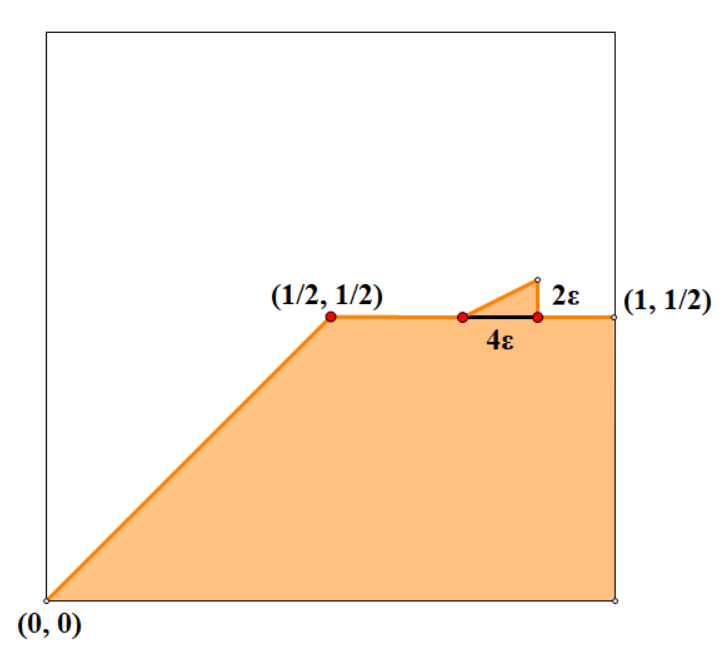}
        \setcaptionwidth{0.8\linewidth}
        \caption{The $R-q$ curve of one $D_s$ we constructed for some $s$. The hill is in interval $[\frac12+s\epsilon,\frac12+(s+1)\epsilon) $. Note that the slope of the small hill is always 1/2, so the $R-q$ curve is well defined.}
        \label{fig:01boudned}
    \end{figure}
    We let $s$ follows a uniform distribution in the set of integers in $[0,\frac{1}{2\ep})$. 
    This forms a randomized distribution of the bidder's value based on our construction of $D_s$. 
    Now we analyse how many samples are necessary for a deterministic algorithm to approximate the optimal result with an additive error of at most $\ep$.
    
    First let us see how this distribution's $R-q$ curve looks like. In the quantile interval $[\frac12,1]$ the $R-q$ curve is almost horizontal at an altitude $\frac12$, except some small `hills' in small sub-intervals, each with a height of at least $\frac{1+\epsilon}2$ (as illustrated in Fig.~\ref{fig:01boudned}). 
    The hill might occur at $\frac1\epsilon$ different places. If an algorithm fail to report a value at the hill, then the approximation fails. 
    Now suppose the algorithm takes $k$ samples, then it can see whether there exists a hill at at most $k$ places. 
    Obviously the optimal algorithm based on these $k$ samples must have the following form: (1) query at $k$ places where the hill might occur. (2) return the corresponding value if it sees the hill, or return another value at other places that has not been queried. 
    The probability of failing is at least:
    $$p_f=\frac{\frac1{8\epsilon}-k}{\frac1{8\epsilon}}\cdot\left(1-\frac1{\frac1{8\epsilon}-k}\right)=1-8\epsilon(k+1).$$
    When the algorithm fails, then there would be at least an $2\epsilon$ revenue loss, so the expected revenue loss is at least $(1-8\epsilon(k+1))\cdot 2\epsilon$, we need this revenue loss less than $\epsilon$, then we have $1-8\epsilon(k+1)\le 1/2$ and $k>\frac1{16}\cdot \epsilon^{-1} -1 =\Theta(\epsilon^{-1})$
\end{proof}
\item
When $\bm{D}$ is \bm{$[1,H]$}\textbf{-bounded}, we need at least $\Theta(\epsilon^{-1}\cdot H)$ targeted queries to achieve at least $(1-\epsilon)\opt{D}$ expected revenue.\\

The targeted sample complexity when $\Delta=\tilde O(\epsilon/H).$
\begin{proof}
For applying Lemma~\ref{lem:Yao}, here gives the construction of the distribution of the bidder's possible types.\\
\begin{figure}[h]
        \centering
        \includegraphics[scale=0.4]{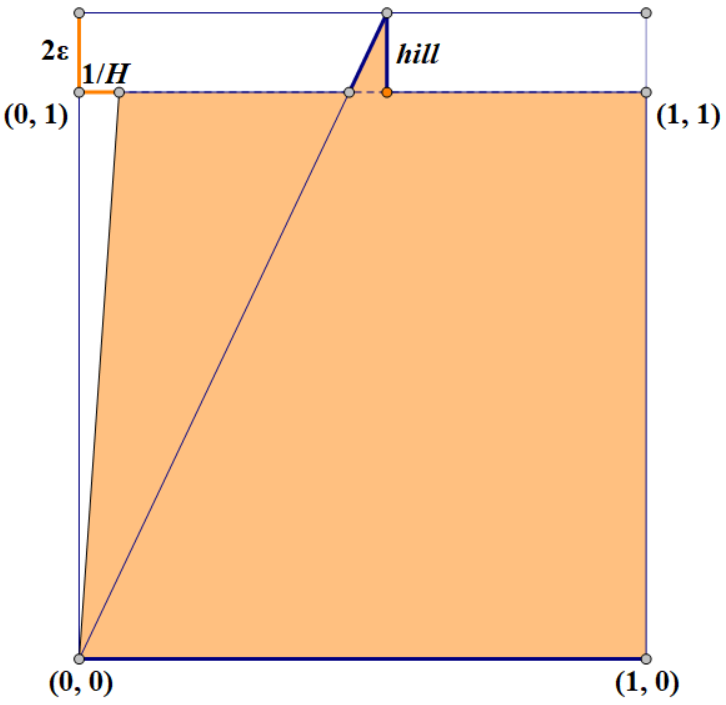}
        \setcaptionwidth{0.8\linewidth}
        \caption{The $R-q$ curve of one $D_s$ we constructed for some $s$. The hill is in interval $\left[\frac{(1+2\epsilon)^{s-1}}{H},\frac{(1+2\epsilon)^{s}}{H}\right)$. Note that the slope of the small hill is always 1/2, so the $R-q$ curve is well defined.}
        \label{fig:1Hbounded}
    \end{figure}
Similar as the previous part, we first give a family of distributions and then the randomized distribution is uniformly picked from this family. Any distribution in this family is denoted as $D_s$, then the $R-q$ curve under $D_s$ is defined as: 
$$
R^{D_s}(q)=\left \{
\begin{aligned}
&Hq \quad& (0\le q< 1/H)\\
&H(1+2\epsilon)^{-i+1}q \quad& \left(\frac{(1+2\epsilon)^{s-1}}{H}\le q< \frac{(1+2\epsilon)^s}{H}\right)\\
&1\quad&\text{(else where)}
\end{aligned}\right..
$$
One of the constructed distribution is shown in figure \ref{fig:1Hbounded}.

Our randomized distribution is built as a uniform mixture of all the above distributions in this family. So each of he support distribution has maximal revenue $1+2\epsilon$ and there are $\frac{\log H}{\log (1+2\epsilon)}$ distributions. Similar as the $[0,1]$-bounded case each distribution has a small hill whose revenue is larger than 1 (the second part of the revenue) and the hill is $2\epsilon$ high. So if an algorithm queried at most $k$ times, and it sets a reserved price, it has probability $(K-k-1)/K$ to select the wrong hill as the reserved price. In order to give an expected revenue loss less than $\epsilon$, we need the algorithm query at least $K/2-1=O(\frac{\log H}{\log (1+2\epsilon)})= \Theta(\log H\cdot \epsilon^{-1})$ times.
\end{proof}
\end{enumerate}

\end{document}